\newtheorem{theorem}{Theorem}[section]
\newtheorem{lemma}[theorem]{Lemma}
\newtheorem{observation}{Observation}
\newtheorem{claim}[theorem]{Claim}
\title{Distributed Spanner Approximation}
\author{Keren Censor-Hillel\footnote{Technion, Department of Computer Science, \texttt{\{ckeren,smichald\}@cs.technion.ac.il}. Supported in part by the Israel Science Foundation (grant 1696/14).}
\and Michal Dory\footnotemark[1]{}
}
\newcommand{\remove}[1]{}
\newcommand{\local}{\textsc{Local}\xspace}
\newcommand{\congest}{\textsc{Congest}\xspace}
\begin{document}

\begin{titlepage}

\maketitle

\begin{abstract}
We address the fundamental network design problem of constructing
approximate minimum spanners. Our contributions are for the distributed
setting, providing both algorithmic and hardness results.

Our main hardness result shows that an $\alpha$-approximation for the
minimum directed $k$-spanner problem for $k \geq 5$ requires
$\Omega(n /\sqrt{\alpha}\log{n})$ rounds using deterministic algorithms or
$\Omega(\sqrt{n }/\sqrt{\alpha}\log{n})$ rounds using randomized ones,
in the \congest model of distributed computing. 
Combined with the constant-round $O(n^{\epsilon})$-approximation algorithm 
in the \local model of [Barenboim, Elkin and Gavoille, 2016], as well as a
polylog-round $(1+\epsilon)$-approximation algorithm in the \local model that
we show here, our lower bounds for the \congest model imply a strict separation
between the \local and \congest models. Notably, to the best of our knowledge, 
this is the first separation between these models for a 
\emph{local approximation problem}.

Similarly, a separation between the directed and undirected cases is
implied. We also prove that the minimum \emph{weighted} $k$-spanner problem
for $k \geq 4$ requires a near-linear number of rounds in the \congest model, 
for directed or undirected graphs. In addition, we show lower bounds for the
minimum weighted 2-spanner problem in the \congest and \local models.

On the algorithmic side, apart from the aforementioned $(1+\epsilon)$-approximation 
algorithm for minimum $k$-spanners, our main contribution is a new distributed 
construction of minimum 2-spanners that uses only polynomial local computations. 
Our algorithm has a \emph{guaranteed} approximation ratio of $O(\log(m/n))$ for a 
graph with $n$ vertices and $m$ edges, which matches the best known ratio for 
polynomial time sequential algorithms [Kortsarz and Peleg, 1994], and is tight if
we restrict ourselves to polynomial local computations. An algorithm with this 
approximation factor was not previously known for the distributed setting. 
The number of rounds required for our algorithm is $O(\log{n}\log{\Delta})$ w.h.p, 
where $\Delta$ is the maximum degree in the graph. Our approach
allows us to extend our algorithm to work also for the \emph{directed},
\emph{weighted}, and \emph{client-server} variants of the problem. It also provides
a \congest algorithm for the minimum dominating set problem, with a \emph{guaranteed}
$O(\log{\Delta})$ approximation ratio.

\end{abstract}

\thispagestyle{empty}
\end{titlepage}

\section{Introduction}
A \emph{$k$-spanner} of a graph $G$ is a sparse subgraph of $G$ that preserves distances up to a multiplicative factor of $k$.
First introduced in the late 80's~\cite{peleg1989optimal, peleg1989graph}, spanners have been central for numerous applications, such as synchronization~\cite{peleg1989optimal, awerbuch1992adapting, awerbuch1990network}, compact routing tables~\cite{awerbuch1992routing,peleg1989trade, thorup2001compact, chechik2013compact}, distance oracles~\cite{thorup2005approximate, baswana2006approximate, roditty2005deterministic}, approximate shortest paths~\cite{elkin2005computing, elkin2004efficient}, and more.

Due to the prominence of spanners for many distributed applications, it is vital to have distributed algorithms for constructing them. Indeed, there are many efficient distributed algorithms for finding sparse spanners in undirected graphs, which give a \emph{global guarantee} on the size of the spanner. A prime example are algorithms that construct $(2k-1)$-spanners with $O(n^{1+1/k})$ edges, for a graph with $n$ vertices \cite{derbel2008locality, baswana2007simple, elkin2017efficient, derbel2010sublinear, GrossmanParter},
which is optimal in the worst case assuming Erd{\H{o}}s's \textit{girth conjecture} \cite{erdos1964extremal}.

As opposed to finding spanners with the best worst-case sparsity, this paper focuses on the network design problem of approximating the \emph{minimum} $k$-spanner, which is a fundamental optimization problem. This is particularly crucial for cases in which the worst-case sparsity is $\Theta(n^2)$ such as $2$-spanners (complete bipartite graphs) or directed spanners.
Spanner approximation is at the heart of a rich line of recent work in the sequential setting, presenting approximation algorithms~\cite{berman2013approximation, dinitz2016approximating, chlamtavc2017approximating, dinitz2011fault, chlamtac2012everywhere}, as well as hardness of approximation results~\cite{kortsarz2001hardness, dinitz2016label, elkin2007hardness}.

There are only few distributed spanner approximation algorithms known to date.
A distributed algorithm with an expected approximation ratio of $O(\log{n})$ for the minimum 2-spanner problem is given in~\cite{dinitz2011fault}. This was recently extended to $k>2$, achieving an approximation ratio of $\widetilde{O}(\sqrt{n})$ for directed $k$-spanners~\cite{dinitz2017approximating}, which matches the best approximation known in the sequential setting \cite{berman2013approximation}. Yet, in the distributed setting, it is possible to obtain better approximations if local computation is not polynomially bounded. A \emph{constant time} $O(n^{\epsilon})$-approximation algorithm for directed or undirected minimum $k$-spanner, which takes $exp(O(1/\epsilon))+O(k)$ rounds for any constant $\epsilon > 0$ and a positive integer $k$, is given in~\cite{barenboim2016fast}. In addition, we show a polylogarithmic time $(1+\epsilon)$-approximation algorithm for these problems, following the framework of a recent algorithm for covering problems \cite{ghaffari2017complexity} (see Section \ref{sec:epsilon}). This approximation is much better than the best approximation that can be acheived in the sequential setting, due to the hardness results of \cite{dinitz2016label, elkin2007hardness}. All these algorithms work in the classic \local model of distributed computing~\cite{Linial92}, where vertices exchange messages \emph{of unbounded size} in synchronous rounds.  

A natural question is whether we can obtain good approximations efficiently also in the \congest model~\cite{peleg2000distributed}, where the messages exchanged are bounded by $O(\log{n})$ bits.
In the undirected case, efficient constructions of $(2k-1)$-spanners with $O(n^{1+1/k})$ edges in the \congest model \cite{elkin2017efficient, baswana2007simple} imply $O(n^{1/k})$-approximations, since any spanner of a connected graph has at least $n-1$ edges. However, for directed graphs there are no efficient algorithms in the \congest model.

Our contribution in this paper is twofold.
We provide the first hardness of approximation results for minimum $k$-spanners in the distributed setting. Our main hardness result shows that there are \emph{no} efficient approximation algorithms for the directed $k$-spanner problem for $k \geq 5$ in the \congest model. This explains why all the current approximation algorithms for the problem require large messages, and also creates a strict separation between the directed and undirected variants of the problem, as the latter admits efficient approximations in the \congest model. In addition, we provide new distributed algorithms for approximating the minimum $k$-spanner problem and several variants in the \local model. Our main algorithmic contributaion is an algorithm for minimum 2-spanners that uses only polynomial local computations and \emph{guarantees} an approximation ratio of $O(\log{\frac{m}{n}})$, which matches the best known approximation for polynomial sequential algorithms \cite{kortsarz1994generating}.
On the way to obtaining our results, we develop new techniques, both algorithmically and for obtaining our lower bounds, which can potentially find use in studying various related problems.

\subsection{Our contributions}

\subsubsection{Hardness of approximation}

We show several negative results implying hardness of approximating various spanner problems in both the \local and \congest models.
While there are many recent hardness of approximation results for spanner problems in the sequential setting~\cite{kortsarz2001hardness, dinitz2016label, elkin2007hardness, chlamtavc2017approximating}, to the best of our knowledge ours are the first for the distributed setting.

\paragraph{(I.) Directed $k$-spanner for $k \geq 5$ in the {\textsc{CONGEST}\xspace} model:}
Perhaps our main negative result is a proof for the hardness of approximating the directed $k$-spanner problem for $k \geq 5$ in the \congest model.
\begin{restatable}{theorem}{DirectedHardness}
\label{hardness}
Any (perhaps randomized) distributed $\alpha$-approximation algorithm in the \congest model for the directed $k$-spanner problem for $k \geq 5$ takes $\Omega(\frac{\sqrt{n}}{\sqrt{\alpha} \cdot \log{n}})$ rounds, for $1 \leq \alpha \leq \frac{n}{100}$.
\end{restatable}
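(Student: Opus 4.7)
The plan is to prove Theorem~\ref{hardness} by a reduction from the two-party set disjointness problem, whose randomized communication complexity on $N$-bit inputs is $\Omega(N)$. The framework is the standard one: design a family of directed graphs $G_{x,y}$ on $n$ vertices, parametrized by Alice's input $x \in \{0,1\}^N$ and Bob's input $y \in \{0,1\}^N$, with a vertex partition $V = V_A \uplus V_B \uplus S$ where $|S| = O(\sqrt{n\alpha})$, only edges touching $V_A \cup S$ depend on $x$, and only edges touching $V_B \cup S$ depend on $y$. Alice and Bob jointly simulate any purported $T$-round distributed $\alpha$-approximation algorithm, exchanging only the $O(\log n)$-bit messages that cross $S$ in each simulated round; this yields a protocol of total length $O(T \cdot |S| \cdot \log n)$. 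If the minimum $k$-spanner size differs by a factor greater than $\alpha$ between YES and NO disjointness instances, the simulation decides disjointness, so $T \cdot |S| \cdot \log n = \Omega(N) = \Omega(n)$, which rearranges to the claimed $T = \Omega(\sqrt{n}/(\sqrt{\alpha}\log n))$.

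The central step, which I expect to be by far the hardest, is the gadget construction. The natural template is a collection of ``demand'' pairs $(u_i, v_i)$ whose directed distance must be preserved by the spanner: a pair is satisfied either by including its direct edge (cost $1$) or by routing through a short detour of length $\leq k$ that can be \emph{amortized} across many demand pairs sharing a middle vertex. The availability of these shared shortcuts is controlled jointly by $x$ and $y$ — specifically, a shortcut exists only when a certain coordinate of $x$ and a certain coordinate of $y$ both equal $1$. When $x\cap y = \emptyset$, no shortcut covers more than a constant number of demands, forcing the spanner to be large; when $x\cap y \neq \emptyset$, a middle vertex of degree $d$ can cover $\Theta(d^2)$ demands with $O(d)$ edges, cutting the required count by a factor of $\Theta(\alpha)$. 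The hypothesis $k \geq 5$ enters here because a length-5 detour (through two or three layers crossing the cut $S$) is what allows the amortized covering argument to go through while keeping the construction directed and the gap truly multiplicative.

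Once the gadget is in place, the remaining work is bookkeeping: balance parameters so that $N \cdot |S| = \Theta(n^{3/2}\sqrt{\alpha})$ up to logarithmic factors, verify that in the YES case the ``honest'' spanner of size $s^*$ really does respect the $k$-stretch (no accidental missing distance), and verify that in the NO case any spanner of size $\leq \alpha s^*$ would also witness disjointness. For the randomized bound one invokes the Kalyanasundaram--Schnitger/Razborov $\Omega(N)$ lower bound for randomized disjointness and argues, as in Abboud--Censor-Hillel--Khoury-style reductions, that a constant-error distributed algorithm translates to a constant-error protocol. The main obstacle throughout is engineering the gadget so that (a) the YES/NO gap is multiplicative in $\alpha$ rather than additive, (b) no shortcut path of length $\leq k$ is inadvertently created by vertices of $S$, and (c) the cut $|S| = O(\sqrt{n\alpha})$ is tight — any looser cut would weaken the final bound, and any tighter one would over-constrain the gadget's input capacity.
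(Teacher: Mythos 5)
Your high-level framework is exactly the one the paper uses: reduce from two-party set disjointness, build a dense directed gadget in which each input bit controls many potential spanner edges, place the gadget asymmetrically so the Alice--Bob cut stays small, simulate the distributed algorithm across that cut, and invoke the $\Omega(N)$ randomized communication lower bound. The paper formalizes the ``one bit affects $\Theta(\alpha)$ times the baseline'' requirement precisely as you describe, and the asymmetric placement of the dense component is indeed the crucial trick that keeps the cut small.

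However, there are two real gaps. First, you leave the gadget itself unconstructed, and that is where essentially all of the work lives; the paper's construction (a complete bipartite digraph $D$ between $X_2$ and $Y_2$, partitioned into blocks of size $\beta$, with block-to-block reachability controlled by input-dependent matching-style edges inside Alice's side and inside Bob's side) is not something you can wave away as ``bookkeeping.'' Second, and more substantively, the direction of your gadget is inverted in a way that would not yield a multiplicative gap. You propose that a shortcut exists only when a coordinate satisfies $x_i = y_i = 1$, so that disjoint inputs have no shortcuts (large spanner) and intersecting inputs have a shortcut (small spanner). But set disjointness is hard already on inputs with exactly one intersecting coordinate, and a single shortcut can only subtract an additive amount from the spanner size; the remaining baseline of demands that lack shortcuts does not shrink. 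You would get a spanner ratio of $1 + o(1)$, not $\alpha$. The paper's gadget runs the other way: every block pair $(i,r)$ in $D$ has a length-$5$ detour \emph{unless} $a_{ir} = b_{ir} = 1$, so that disjoint inputs admit a sparse $O(\ell\beta)$-edge spanner avoiding $D$ entirely, whereas a single intersecting coordinate destroys one detour and forces all $\beta^2$ direct edges of the corresponding block pair into the spanner. With $\beta = \Theta(\alpha\ell)$, that single bad block contributes $\beta^2 = \Theta(\alpha)\cdot O(\ell\beta)$ edges, which dominates the baseline and produces the multiplicative gap. Your parameter bookkeeping ($N = \Omega(n)$, $|S| = O(\sqrt{n\alpha})$) also differs from the paper's ($N = \Theta(n/\alpha)$, $|S| = \Theta(\sqrt{n/\alpha})$); both make the ratio $N/|S|$ come out to $\sqrt{n/\alpha}$, but only the latter is exhibited by an explicit construction, so you would still have to show yours is realizable.
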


When restricting attention to deterministic algorithms, we prove a stronger lower bound of $\widetilde{\Omega}({\frac{n}{\sqrt{\alpha}}})$, for any $\alpha \leq \frac{n}{c}$ for a constant $c>1$.

For example, this gives that a constant or a polylogarithmic approximation ratio for the directed $k$-spanner problem in the \congest model requires $\widetilde{\Omega}(\sqrt{n})$ rounds using randomized algorithms or $\widetilde{\Omega}(n)$ rounds using deterministic algorithms. Even an approximation ratio of only $n^{\epsilon}$ is hard and requires $\widetilde{\Omega}(n^{1/2-\epsilon/2})$ rounds using randomized algorithms or $\widetilde{\Omega}(n^{1-\epsilon/2})$ rounds using deterministic ones, for any $0 < \epsilon < 1$.
Moreover, in the deterministic case, even an approximation ratio of $\frac{n}{c}$, for appropriate values of $c$, requires $\widetilde{\Omega}(\sqrt{n})$ rounds. This is to be contrasted with an approximation of $n$, which can be obtained without any communication by taking the entire graph, since any $k$-spanner has at least $n-1$ edges.

\textbf{LOCAL vs.~CONGEST.} The major implication of the above is a strict separation between the \local and \congest models, since the former admits a constant-round $O(n^{\epsilon})$-approximation algorithm~\cite{barenboim2016fast}\footnote{In \cite{barenboim2016fast}, a constant time randomized algorithm for directed $k$-spanner is presented. However, the deterministic network decomposition presented in \cite{barenboim2016fast} gives a polylogarithmic deterministic approximation for directed $k$-spanner as well, which shows the separation also for the deterministic case.} and a polylogarithmic $(1+\epsilon)$-approximation algorithm (see Section \ref{sec:epsilon}) for directed $k$-spanners. Such a separation was previously known only for \emph{global problems} (problems that are subject to an $\Omega(D)$ lower bound, where $D$ is the diameter of the graph), and for \emph{local decision problems} (such as determining whether the graph contains a $k$-cycle). To the best of our knowledge, ours is the first separation for a \emph{local approximation problem}.

\textbf{Directed vs.~undirected.} Our lower bound also separates the undirected and directed $k$-spanner problems, since there are efficient algorithms in the \congest model for constructing $(2k-1)$-spanners with $O(n^{1+1/k})$ edges \cite{elkin2017efficient, GrossmanParter} which imply an $O(n^{1/k})$-approximation. The best randomized algorithm for the task takes $k$ rounds \cite{elkin2017efficient}, and the best deterministic algorithm is a recent algorithm which takes $O(n^{1/2-1/k})$ rounds for a constant even $k$ \cite{GrossmanParter}. Achieving the same approximation for directed graphs necessitates $\widetilde{\Omega}(n^{1/2-1/{2k}})$ rounds using randomization, or $\widetilde{\Omega}(n^{1-1/{2k}})$ rounds using deterministic algorithms.

\paragraph{(II.) Weighted $k$-spanner for $k \geq 4$ in the CONGEST model:}
In addition to the above main result, we consider weighted $k$-spanners, and show that any $\alpha$-approximation for the weighted undirected $k$-spanner problem for $k \geq 4$ requires $\widetilde{\Omega}(\frac{n}{k})$ rounds, and that $\widetilde{\Omega}(n)$ rounds are needed for the weighted directed $k$-spanner problem.

\textbf{Weighted vs.~unweighted.}
As these lower bounds hold also for randomized algorithms, we obtain yet another separation, between the weighted and the unweighted variants of the problem, since the aforementioned $k$-round $(2k-1)$-spanner constructions imply an $O(n^{1/k})$ approximation for the unweighted case.

\textbf{LOCAL vs.~CONGEST.}
Since both the constant-round algorithm for approximating $k$-spanners within a factor of $O(n^{\epsilon})$~\cite{barenboim2016fast} and the $(1+\epsilon)$-approximation algorithm that we give in Section \ref{sec:epsilon} are suitable for the weighted case, our hardness result for the weighted case implies the separation between the \local and \congest models also when having weights. This holds also for the undirected weighted case.

\paragraph{(III.) Weighted $2$-spanner in the LOCAL and CONGEST models:}
Finally, we show lower bounds for the weighted $2$-spanner problem, which, in a nutshell, are obtained by a reduction that captures the intuition that approximating the minimum weight $2$-spanner is at least as hard as approximating the minimum vertex cover (MVC).
We emphasize that the reduction from the set cover problem to the unweighted 2-spanner problem given in~\cite{kortsarz2001hardness} is inherently sequential, by requiring the addition of a vertex that is connected to all other vertices in the graph, and hence is unsuitable for the distributed setting.

Our reduction implies that $\Omega(\frac{\log{\Delta}}{\log{\log{\Delta}}})$ or  $\Omega(\sqrt{\frac{\log{n}}{\log{\log{n}}}})$ rounds are required for a logarithmic approximation ratio for weighted 2-spanner in the \local model, by plugging in the lower bounds for MVC given in~\cite{kuhn2016local}. 
In addition, our reduction implies an $\widetilde{\Omega}(n^2)$ lower bound for an exact solution for weighted 2-spanner in the \congest model, by using the
\emph{near-quadratic} lower bound for exact MVC given recently in~\cite{censor2017quadratic}.
This is tight up to logarithmic factors since $O(n^2)$ rounds allow learning the entire graph topology and solving essentially all natural graph problems.

\subsubsection{Distributed approximation algorithms}

We show new distributed algorithms for approximating minimum $k$-spanners. Our main algorithmic contribution is a new algorithm for the minimum 2-spanner problem that uses only polynomial local computations (see Section \ref{sec:alg}). In addition, we show that if local computation is not polynomially bounded it is possible to achieve $(1+\epsilon)$-approximation for minimum $k$-spanners (see Section \ref{sec:epsilon}).

\paragraph{{(I.) Distributed $(1+\epsilon)$-approximation of minimum $k$-spanners:}}

In Section \ref{sec:epsilon}, we present $(1+\epsilon)$-approximation algorithms for spanner problems, following the framework of a recent algorithm for covering problems \cite{ghaffari2017complexity}. We show the following.

\begin{restatable}{theorem}{epsilonAlg} \label{epsilonAlg}
There is a randomized algorithm with complexity $O(poly(\log{n}/\epsilon))$ in the \local model that computes a $(1+\epsilon)$-approximation of the minimum $k$-spanner w.h.p, where $k$ is a constant.
\end{restatable}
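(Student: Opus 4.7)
The plan is to cast the minimum $k$-spanner problem as a fractional covering program with locally describable constraints, invoke the distributed LP solver of \cite{ghaffari2017complexity} to obtain a near-optimal fractional solution, and round it in $O(1)$ additional rounds since $k$ is a constant.

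First I would write the natural integer program. For every edge $e \in E$ introduce a variable $x_e \in \{0,1\}$ indicating inclusion in the spanner, and for every demand edge $(u,v) \in E$ and every $u$-$v$ path $p$ of length at most $k$ in $G$ introduce an auxiliary variable $y_{(u,v),p} \in \{0,1\}$. The objective is to minimize $\sum_e x_e$, subject to the covering constraints $\sum_{p \in P_{uv}} y_{(u,v),p} \geq 1$ for every $(u,v) \in E$ together with the AND-constraints $y_{(u,v),p} \leq x_e$ for every edge $e$ on a path $p$. Because $k$ is constant, every path in $P_{uv}$ lies in the $k$-hop neighborhood of $u$, so $|P_{uv}|$ is polynomially bounded, and after $O(k)=O(1)$ rounds each vertex can collect all constraints in which its incident variables appear.

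Next I would pass to the LP relaxation and invoke the algorithm of \cite{ghaffari2017complexity}, which solves covering LPs to a $(1+\epsilon)$ factor in $\mathrm{poly}(\log n/\epsilon)$ rounds in the \local model. For our instance this yields a fractional solution $(x^*,y^*)$ of cost at most $(1+\epsilon/2)\cdot \mathrm{OPT}$. I would then round $x^*$ to an integer spanner by independent randomized rounding of the $x_e^*$ at a slightly inflated rate, followed by a cleanup step that, for each demand $(u,v)$ whose short-path cover failed, adds one arbitrary $u$-$v$ path of length at most $k$ drawn from the support of $y^*$. A Chernoff-plus-union-bound argument over the $|E|$ demands shows that, after at most $O(\log n)$ repetitions, both feasibility and the $(1+\epsilon)$ cost bound hold w.h.p., and the rounding itself costs only $O(k)=O(1)$ additional rounds in the \local model.

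The main obstacle is matching our formulation to the interface of the GKM framework, which is designed for pure covering (or pure packing) LPs, whereas our relaxation is mixed: the AND-constraints $y_{(u,v),p}\leq x_e$ are packing-type. I expect to handle this either by eliminating $y$ and working with the equivalent nonlinear covering constraint $\sum_{p\in P_{uv}}\min_{e\in p} x_e \geq 1$, linearized over an appropriate enumeration of path supports, or by verifying that the GKM analysis extends to the mixed but still locally bounded LP above. Once that adaptation is in place, the remaining ingredients (randomized rounding, concentration, and the constant-hop locality inherited from constant $k$) are standard.
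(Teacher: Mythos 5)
Your proposal is not the paper's approach, and it has a fatal gap. The paper does not solve an LP at all: it uses the SLOCAL-to-LOCAL framework of \cite{ghaffari2017complexity} via a Linial--Saks network decomposition on $G^r$ with $r = O(\log n/\epsilon)$, and then has each cluster (in color order) \emph{locally compute optimal spanners} of balls around its vertices, exploiting the unbounded local computation allowed in the \local model. The approximation guarantee comes from a ball-growing argument: for each processed vertex $v_i$ one grows a radius $r_i$ until $g(v_i,r_i+2k)\le(1+\epsilon)g(v_i,r_i)$, which must happen within $O(\log n/\epsilon)$ steps because the optimal cost is $\le n^2$; then the inner balls $B_{r_i}(v_i)$ are pairwise far apart, so the pieces of $H^*$ that cover them are disjoint and the $(1+\epsilon)$ charging goes through. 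No LP and no rounding appear.

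Your LP route cannot work. The flow-type relaxation you write down (path variables $y_{(u,v),p}$ with $y\le x_e$ and a covering constraint per demand edge) is a standard relaxation, and for constant $k$ it has polynomially many variables and constraints, so it is solvable in polynomial time. If your plan succeeded---namely, obtain a fractional solution of cost $(1+\epsilon/2)\cdot\mathrm{OPT}$ and round it to an integral spanner of cost $(1+\epsilon)\cdot\mathrm{OPT}$ w.h.p.\ with a cheap cleanup---it would yield a polynomial-time $(1+\epsilon)$-approximation for minimum $2$-spanner. This contradicts the $\Theta(\log n)$ hardness of approximation for $2$-spanner of \cite{kortsarz2001hardness} (and far stronger hardness for directed $k\ge3$ \cite{dinitz2016label,elkin2007hardness}); equivalently, the integrality gap of this relaxation is $\Omega(\log n)$ for $k=2$ and polynomially large for directed $k\ge3$, so no rounding scheme can recover a $(1+\epsilon)$ integral solution from a near-optimal fractional one. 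Concretely, the flaw is in the rounding/cleanup step: inflating the $x_e$ by a constant leaves a constant-probability coverage failure per demand, and the cost of the cleanup paths is not charged against the LP value; inflating by $\Theta(\log n)$ fixes feasibility but destroys the $(1+\epsilon)$ cost bound. You also mischaracterize \cite{ghaffari2017complexity}: it supplies a network-decomposition/SLOCAL derandomization framework, not a generic $(1+\epsilon)$ covering-LP solver whose output can be rounded for free. The reason the paper's algorithm escapes the integrality-gap obstruction is precisely that it never relaxes the problem---each cluster solves an \emph{integral} optimal spanner on a small ball using unbounded local computation, and the $(1+\epsilon)$ loss comes only from overlap between balls, controlled by the $O(\log n/\epsilon)$ radius bound.
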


The algorithm is quite general and can be adapted similarly to additional variants.
Theorem \ref{epsilonAlg} shows that although spanner problems are hard to approximate in the sequential setting, it is possible to achieve extremely strong approximations for them efficiently in the \local model. This demonstrates the power of the \local model. However, the algorithm is based on learning neighborhoods of polylogarithmic size and solving NP-complete problems (finding optimal spanners). It is desirable to design also algorithms that work with more realistic assumptions. We next focus on the 2-spanner problem and show a new algorithm that uses only polynomial local computations and uses the power of the \local model only for learning neighborhoods of diameter 2.

\paragraph{(II.) Distributed approximation of minimum $2$-spanners:}
If we restrict ourselves to polynomial local computations, the best algorithm for the minimum 2-spanner problem is the $O(\log{n})$-round $O(\log{n})$-approximation in expectation of Dinitz and Krauthgamer \cite{dinitz2011fault},\footnote{In \cite{dinitz2011fault}, a time complexity of $O(\log^2{n})$ rounds is claimed. However, the algorithm is based on sampling a certain decomposition $O(\log{n})$ times independently, which takes $O(\log{n})$ rounds each time. From the independence of the decompositions, the computations can be parallelized in the \local model, achieving a time complexity of $O(\log{n})$ rounds. See also \cite{dinitz2017approximating}.} which solves even the more general problem of finding fault-tolerant spanners.

However, this still leaves several open questions regarding minimum 2-spanners. First, the best approximation to the problem in the sequential setting is $O(\log \frac{m}{n})$ where $m$ is the number of edges in the graph. Can we achieve such approximation also in the distributed setting? Second, the approximation ratio holds only in expectation. Can we design an algorithm that \emph{guarantees} the approximation ratio? Third, this algorithm requires learning neighborhoods of logarithmic radius, and hence a direct implementation of it in the \congest model is not efficient. Can we design a more efficient algorithm in the \congest model?

We design a new algorithm for the minimum $2$-spanner problem, answering some of these questions. Our algorithm obtains an approximation ratio of $O(\log \frac{m}{n})$ \emph{always}, within $O(\log{n} \log{\Delta})$ rounds w.h.p,\footnote{As standard in this setting, a high probability refers to a probability that is at least $1-\frac{1}{n^c}$ for a constant $c \geq 1$.} where $\Delta$ is the maximum vertex degree, summarized as follows.

\begin{restatable}{theorem}{minimumSpanner} \label{minimumSpanner}
There is a distributed algorithm for the minimum 2-spanner problem in the \local model that guarantees an approximation ratio of $O(\log{\frac{m}{n}})$, and takes $O(\log{n} \log{\Delta})$ rounds w.h.p.
\end{restatable}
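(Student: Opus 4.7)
The plan is to design a distributed parallelization of the Kortsarz--Peleg sequential greedy \cite{kortsarz1994generating}. The minimum 2-spanner problem fits a covering formulation: for a center $v$ and a subset $S\subseteq N(v)$, the \emph{star} $\text{St}(v,S)=\{(v,u):u\in S\}$ has cost $|S|$ and covers every edge of $G$ whose both endpoints lie in $S\cup\{v\}$, since adding $\text{St}(v,S)$ to $H$ yields a length-$2$ path through $v$ for each such edge. Defining the \emph{density} of a star as the ratio of (uncovered edges it covers) to its cost, the sequential greedy that repeatedly picks the densest residual star attains an $O(\log(m/n))$-approximation. A center $v$ can compute its densest residual star from its $2$-hop neighborhood alone, and the computation reduces to densest-subgraph in $G[N(v)]$, which is polynomial-time.

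First, I would replace the sequential ``pick-the-maximum'' by a thresholded scheme proceeding in $O(\log\Delta)$ phases, where phase $i$ considers only stars whose residual density lies in $[2^i,2^{i+1})$. Since any star has density at most $\Delta$, every star is handled in some phase. Within a phase, I would run a randomized symmetry-breaking procedure in the spirit of distributed set cover or MIS: each center whose currently-densest star falls in the active density band proposes it with some constant probability, conflicts between stars sharing uncovered edges are resolved locally, surviving stars are committed to $H$, and the edges they cover are removed. A standard analysis would show that $O(\log n)$ such iterations per phase suffice w.h.p.\ to exhaust the active band, yielding the total round bound of $O(\log n\log\Delta)$.

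For the approximation analysis, I would use a Kortsarz--Peleg-style dual charging argument: assign to each added edge a charge of $1/\rho$, where $\rho$ is the density of the star that first covered it, so the total charge equals $|E(H)|$. Comparing with an optimal spanner $H^{*}$, while the algorithm is processing the band $[2^i,2^{i+1})$, some star induced by $H^{*}$ must cover the residual set at density at least $\sim 2^{i-O(1)}$ (any loss is only a constant arising from thresholding and parallel overlap). Summing per-edge charges over the $O(\log(m/n))$ density bands that can ``charge'' a given $H^{*}$-edge gives $|E(H)|=O(\log(m/n))\,|H^{*}|$. Since this charging is deterministic given the sequence of committed stars, it yields a \emph{worst-case} approximation guarantee; randomness affects only the round complexity.

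The main obstacle I foresee is controlling the parallel selection step. If many centers simultaneously propose overlapping stars, the total cost of the selected stars could exceed the number of new edges they cover, inflating the approximation ratio. The selection rule must therefore guarantee both (i) a constant-factor efficiency --- selected stars collectively pay for a constant fraction of the edges they cover --- and (ii) fast progress, so that the number of active centers in the current density band drops by a constant factor per round in expectation. Ensuring that this parallel-greedy analysis composes cleanly with the thresholded charging, so that the bound $O(\log(m/n))$ is preserved up to a constant factor, is the delicate part.
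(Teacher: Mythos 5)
Your high-level plan is the right one and closely matches the paper's: stars as the covering objects, a density-greedy parallelized over density scales, randomized local symmetry breaking to commit several dense stars at once, and a Kortsarz--Peleg style per-edge charge of $1/\rho$ to prove that the approximation ratio is preserved in the worst case (with randomness only affecting the runtime). The paper organizes the density scales slightly differently --- instead of a global phase per band, each vertex compares its (rounded to a power of two) density to its $2$-neighborhood and becomes a candidate only if it is a local maximum, with the analysis tracking a potential over the current global maximum density level --- but this is essentially a local reformulation of your banding.

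The genuine gap is in the sentence ``a standard analysis would show that $O(\log n)$ such iterations per phase suffice w.h.p.'' This is exactly the step the paper singles out as the tricky obstacle, and it is \emph{not} standard here. In set cover or MDS, each element proposes a fixed object (a set, or itself), so a potential of the form $\sum_{v \text{ at max density}} |C_v|$ is automatically non-increasing while the density level is fixed, and the random voting drives it down by a constant factor per round. For $2$-spanners, a vertex has up to $2^{\Delta}$ candidate stars, can be a candidate across many iterations, and the \emph{particular} star it proposes can change from round to round. If you let each center propose ``its currently densest residual star,'' as your plan does, then when its density level is unchanged but some of its previously proposed star's covered edges are gone, it may switch to a larger star whose $|C_v|$ is bigger than before --- the potential can go up, and the multiplicative-decay argument collapses. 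The paper fixes this with a deliberate star-selection rule (Section~\ref{choose}): on the first iteration at a given rounded density the center builds a maximal dense star greedily, and on subsequent iterations at the same density level it is only allowed to \emph{shrink} that star, which is shown (Claim~\ref{iteration_p}) to always be possible. That monotonicity is what makes the potential well behaved. Your plan needs an analogous commitment rule; without it, the $O(\log n)$ bound per band is an assertion, not a proof. A secondary point: ``propose with constant probability and resolve conflicts locally'' is weaker than what the paper uses (a random total order in which each uncovered edge votes for its first candidate, and a star is committed only if it collects a $1/8$ fraction of its covered edges' votes); that specific voting rule is what makes both the constant-efficiency claim and the constant-factor potential drop go through cleanly, so you would need to pin down an equally concrete mechanism rather than gesture at ``MIS-style'' symmetry breaking.
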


Our approximation ratio of $O(\log{\frac{m}{n}})$ matches that of the best approximation in the sequential setting up to a constant factor~\cite{kortsarz1994generating}, and is tight if we restrict ourselves to polynomial local computations \cite{kortsarz2001hardness}.
In addition, the approximation ratio of our algorithm is guaranteed, rather than only holding in expectation. This is crucial for the distributed setting since, as opposed to the sequential setting, running the algorithm several times and choosing the best solution completely blows up the complexity because learning the cost of the solution requires collecting global information. 
Note that although our algorithm can be converted into an algorithm with a $\emph{guaranteed}$ polylogarithmic time complexity and an approximation ratio that holds only in expectation, the opposite does not hold.
Another feature of our algorithm is that it uses the power of the \local model only for learning the 2-neighborhood of vertices. A direct implementation of our algorithm in the \congest model yields an overhead of $O(\Delta)$ rounds, which is efficient for small values of $\Delta$. We address this issue further in Section~\ref{discussion}.

\paragraph{(III.) Distributed approximation of additional $2$-spanners:}
The techniques we develop for constructing and analyzing our spanner have the advantage of allowing us to easily extend our construction to the \emph{directed}, \emph{weighted} and \emph{client-server} variants of the problem. We obtain the same approximation ratio for the directed case as in the undirected case, and for the weighted case we give an approximation ratio of $O(\log \Delta)$, both improving upon the $O(\log{n})$ approximation in expectation of \cite{dinitz2011fault}. For the client-server 2-spanner case, which to the best of our knowledge ours is the first distributed approximation, we obtain an approximation ratio that matches that of the sequential algorithm~\cite{elkin1999client}.

\paragraph{(IV.) Distributed approximation of MDS:}
Finally, our technique also gives an efficient algorithm for the minimum dominating set (MDS) problem, which obtains an approximation ratio of $O(\log{\Delta})$ \emph{always}. Our algorithm for MDS works even in the \congest model and takes $O(\log{n}\log{\Delta})$ rounds w.h.p. The MDS problem has been studied extensively by the distributed computing community, with several efficient algorithms for MDS in the \congest obtaining an approximation ratio of $O(\log{\Delta})$ in expectation \cite{kuhn2003constant,kuhn2016local,jia2002efficient}. To the best of our knowledge, our algorithm is the first that guarantees this approximation ratio always.

\subsection{Technical overview}

\subsubsection{Hardness of approximation} \label{sec:intro}

We prove Theorem~\ref{hardness} by a reduction from 2-party communication problems, as has been proven fruitful for various lower bounds for the \congest model~\cite{censor2017quadratic, Abboud2016, sarma2012distributed, drucker2014power, frischknecht2012networks, holzer2012optimal}. In principle, a family of graphs is constructed depending on the input strings of the two players, such that the solution to the required \congest problem uniquely determines whether the input strings of the players satisfy a certain Boolean predicate. The most common usage is of set-disjointness, although other 2-party communication problems have been used as well~\cite{PelegR00, Elkin06, fischer2017distributed, censor2016distributed}. The two players can simulate a distributed algorithm for solving the \congest problem, and deduce their output for the 2-party communication problem accordingly. This yields a lower bound for the \congest problem, based on known lower bounds for the communication complexity of the 2-party problem, by incorporating the cost of the simulation itself.

The prime caveat in using this framework for approximation problems is that in the above examples a modification of a single input bit has a slight influence on the graph. For example, when showing a lower bound for computing the diameter, any bit of the input affects the distance between one pair of vertices \cite{Abboud2016, frischknecht2012networks, holzer2012optimal}.
This is sufficient when computing some global property of the graph. Indeed, the distance between a single pair of vertices can change the diameter of the graph.
The challenge in designing a construction for approximating $k$-spanners is that now any single bit needs to affect drastically the size of the minimum $k$-spanner. In more detail, any $k$-spanner has at least $n-1$ edges and, hence, for a meaningful lower bound for an $\alpha$-approximation, any input bit must affect at least $\Omega(\alpha n)$ edges.

We manage to overcome the above challenge by constructing a graph that captures this requirement and allows a reduction from set-disjointness. The main technical ingredient is a dense component in which many edges are affected by single input bits. This component resides in its entirety within the set of vertices that is simulated by a single player of the two, thus resulting in a non-symmetric graph construction. This is crucial for our proof, as otherwise the density of this component would imply a dense cut between the two sets of vertices simulated by the players, which in turn would nullify the achievable lower bound. For having this property, we believe that our construction may give rise to follow-up lower bound constructions for additional local approximation problems.

Our graph construction is designed using several parameters, which allows us to show trade-offs between the time complexity of an algorithm and its approximation ratio, and gives lower bounds even for large values of $\alpha$.

Our stronger lower bounds for the deterministic case are obtained using the 2-party gap-disjointness problem rather than the more common set-disjointness problem. Since gap-disjointness allows more slack, we obtain stronger lower bounds, at the price of them holding only for deterministic algorithms. We believe that the flexibility of the gap-disjointness problem may be useful in showing additional strong lower bounds for approximation problems. Our stronger lower bounds for the weighted case are obtained by assigning weights to the edges of the graph in a manner which allows us to shave off certain edges that affect the bound.

\subsubsection{Distributed approximation of minimum $2$-spanners}

Our algorithm for approximating minimum $2$-spanners is inspired by the sequential greedy algorithm of Kortsarz and Peleg~\cite{kortsarz1994generating}, in which \emph{dense stars} are added to the spanner one by one, obtaining an approximation ratio of $O(\log{\frac{m}{n}})$. A \textit{star} is a subset of edges between a vertex $v$ and \emph{some} of its neighbors. The \textit{density} of a star is the ratio between the number of edges \textit{2-spanned} by the star and the size of the star, where an edge $e=\{u,v\}$ is \textit{2-spanned} by a star $S$ if $S$ includes a path of length two between $u$ and $v$. A roughly intuition for the greedy algorithm is that if $S$ is a dense star then adding its edges to the spanner allows 2-spanning many edges by adding only a small number of edges to the spanner.

A direct implementation of this greedy approach in the distributed setting is highly expensive, since deciding upon the densest star inherently requires collecting global information. Moreover, one would like to leverage the ability of the distributed setting to add multiple stars to the spanner simultaneously.
To address both sources of inefficiency, rather than computing the star that is the densest in the entire graph, we compute all the stars that are the densest in their local 2-neighborhood.
While greatly speeding up the running time, adding all of these locally densest stars to the spanner is too extreme, and results in a poor approximation ratio. Instead, we consider these stars as \textit{candidates} for being added to the spanner.

The key challenge is then to break symmetry among the candidates, while balancing the need to choose many stars in parallel (for a fast running time) with the need to bound the overlap in spanned edges among the candidates (for a small approximation ratio). We tackle this conflict by constructing a voting scheme for breaking symmetry by choosing among the stars based on a random permutation. Interestingly, our approach is inspired by a parallel algorithm for set cover~\cite{rajagopalan1998primal}. We let each edge vote for the first candidate that 2-spans it according to the random permutation. A candidate that receives a number of votes which is at least $\frac{1}{8}$ of the edges it 2-spans is added to the spanner, and we continue this process iteratively.

Since we add to the spanner only stars receiving many votes, this approach guarantees that there is not too much overlap in the edges 2-spanned by different stars, which eventually culminates in a proof of an approximation ratio of $O(\log{\frac{m}{n}})$, which matches the one obtained by the greedy approach.

A tricky obstacle lies in showing that our algorithm completes in $O(\log{n}\log{\Delta})$ rounds w.h.p. This is because, as opposed to the set cover case, there may be as many as $2^{\Delta}$ different stars centered at each vertex, and a vertex may be required to add candidate stars multiple times during the execution of the algorithm.
It turns out that an arbitrary choice for a candidate among all densest stars centered at a vertex is incapable of providing an efficient time complexity. To overcome this issue, we design a subtle mechanism for proposing a candidate star, and pair it with a proof that our algorithm indeed completes in the claimed number of rounds.

\subsection{Discussion} \label{discussion}

While our results in this paper significantly advance the state-of-the-art
in distributed approximation of minimum $k$-spanners,
intriguing questions remain open. First, the landscape of the trade-offs
between the approximation ratio and the running time of distributed minimum
$k$-spanner algorithms is yet to be fully mapped. For example, the $O(\log{\Delta})$
factor in the running time of our approximation algorithm for weighted 2-spanner is tight
up to an $O(\log{\log{\Delta}})$ factor, due to our reduction from MVC and the known
lower bounds for it. However, it remains open whether the $O(\log{n})$ factor is
necessary. Additional gaps remain open for other various approximation
ratios. In particular, an interesting question is to show a lower bound for approximating
the undirected unweighted minimum $k$-spanner problem.

A curious question is whether our algorithm can be efficiently made to work
in the \congest model. A direct implementation would yield an overhead of
$O(\Delta)$ for the running time, for computing the densities of stars, and for sending the candidate stars.
We emphasize that knowing the density of the neighborhood of vertices
is crucial for additional algorithms, such as the state-of-the-art $(\Delta+1)$-coloring
algorithm of Harris et al. \cite{harris2016distributed}. Another interesting question is to design an efficient \emph{deterministic} algorithm achieving the same approximation ratio.

For larger values of the stretch $k$, our lower bounds imply a strict
separation between the \local and \congest models for the number of rounds
required for approximating directed minimum $k$-spanners. Such a separation was
previously known only for \emph{global problems} (problems that
are subject to an $\Omega(D)$ lower bound, where $D$ is the diameter of the
graph), and for \emph{local decision problems} (such as determining
whether the graph contains a $k$-cycle). Interestingly, ours is the first
separation for a \emph{local approximation problem}. It is a central open
question whether such separations hold also for \emph{local symmetry
breaking problems}.

Interestingly, our algorithm, as well as other distributed approximation algorithms for the minimum $k$-spanner in the \local model, work also for directed graphs, achieving the same approximation ratio and round complexity. However, our hardness results create a strict separation between the undirected and directed variants in the \congest model. It will be interesting to show such separations for other problems.

\subsection{Additional related work}

Spanners have been studied extensively in the distributed setting, producing many efficient algorithms for finding sparse spanners in undirected graphs~\cite{derbel2008locality, baswana2007simple, elkin2017efficient, derbel2010sublinear, elkin2007near, GrossmanParter}.
These algorithms construct $(2k-1)$-spanners with $O(n^{1+1/k})$ edges for any fixed $k \geq 2$, with the fastest completing in $k$ rounds~\cite{derbel2008locality,elkin2017efficient}, which is tight~\cite{derbel2008locality}. Many additional works construct various non-multiplicative spanners in the distributed setting, such as~\cite{pettie2010distributed} and the excellent overview within.

Many recent studies address spanner approximations in the sequential setting. The greedy algorithm of~\cite{kortsarz1994generating} achieves an approximation ratio of $O(\log{\frac{m}{n}})$ for the minimum $2$-spanner problem. This was extended to the weighted, directed and client-server cases \cite{elkin1999client, kortsarz2001hardness}. Approximation algorithms for the directed $k$-spanner problem for $k>2$ are given in ~\cite{elkin2005approximating, berman2010finding, berman2013approximation, dinitz2011directed, dinitz2016approximating}, with the best approximation ratio of $O(\sqrt{n} \log{n})$ for $k>4$, and an approximation ratio of $\widetilde{O}(n^{1/3})$ for $k=3,4$ \cite{berman2013approximation, dinitz2016approximating}. These approximation ratios are matched by a recent distributed $O(k\log{n})$-round algorithm, that uses only polynomial local computations \cite{dinitz2017approximating}.
Approximation algorithms are given also for pairwise spanners and distance
preservers~\cite{chlamtavc2017approximating}, for spanners with lowest maximum degree~\cite{kortsarz1998generating, chlamtac2012everywhere, chlamtac2014lowest, dinitz2017approximating}, for fault-tolerant spanners~\cite{dinitz2011fault, dinitz2016approximating}, and more.

Hardness of approximation results in the sequential setting give that for $k=2$, no polynomial algorithm gives an approximation ratio better than $\Theta(\log{n})$~\cite{kortsarz2001hardness}, which shows that the sequential greedy algorithm is optimal. For $k>2$, the problem is even harder. For any constant $\epsilon > 0$ and $k \geq 3$ there are no polynomial-time algorithms that approximate the $k$-spanner problem within a factor better than $2^{(\log^{1-\epsilon}{n})/k}$~\cite{dinitz2016label}, or the directed $k$-spanner problem within a factor better than $2^{(\log^{1-\epsilon}{n})}$~\cite{elkin2007hardness}. Similar results are known for additional variants ~\cite{elkin2007hardness,chlamtavc2017approximating}.

Spanner problems are closely related to covering problems such as set cover, minimum dominating set (MDS), and minimum vertex cover. Indeed, some of the ingredients of our algorithms borrow ideas from distributed and parallel algorithms for such problems. Our symmetry breaking scheme is inspired by the parallel algorithm for set cover of Rajagopalan and Vazirani~\cite{rajagopalan1998primal}, however, the general structure of this algorithm requires global coordination and hence is not suitable for the distributed setting.
There are also several ideas inspired by the distributed MDS algorithm of Jia et al.~\cite{jia2002efficient}, such as, rounding the densities and comparing densesties in 2-neighborhoods. However, \cite{jia2002efficient} breaks the symmetry between the candidates in a different way which results in an approximation ratio of $O(\log{\Delta})$ in expectation. The connection between spanners to set cover is used also in \cite{berman2010finding} where they show that covering the edges of a graph by stars is also useful for approximating the directed $k$-spanner problem for $k>2$.
In this context, we also mention the distributed algorithm of~\cite{Ghaffari14} for the minimum connected dominating set problem, which also uses stars as the main component for its construction. Our work is, however, incomparable, especially since the minimum connected dominating set problem is a \emph{global} problem, admitting an $\Omega(D)$ lower bound even in the \local model.

\subsection{Preliminaries}

Let $G=(V,E)$ be a connected undirected graph with $n$ vertices and maximum degree $\Delta$. Let $S \subseteq E$ be a subset of the edges, and let $k \geq 1$. We say that an edge $e=\{u,v\}$ is \textit{covered} by $S$ if there is a path of length at most $k$ between $u$ and $v$ in $S$.
A \textit{k-spanner} of $G$ is a subgraph of $G$ that covers all the edges of $G$. A $k$-spanner of a subgraph $G' \subseteq G$ is a subgraph of $G$ that covers all the edges of $G'$.
For a directed graph, we say that a directed edge $e=(u,v)$ is covered by a subset of edges $S$, if $S$ includes a \textit{directed} path of length at most $k$ from $u$ to $v$, and define a $k$-spanner for a directed graph accordingly.

In the \emph{minimum $k$-spanner} problem the input is a connected undirected graph $G=(V,E)$ and the goal is to find the minimum size $k$-spanner of $G$.
The \emph{directed $k$-spanner} problem is defined accordingly, with respect to directed graphs. In the \emph{weighted $k$-spanner} problem each edge $e$ has a non-negative weight $w(e)$ and the goal is to find the $k$-spanner of $G$ having minimum cost, where the cost of a spanner $H$ is $w(H) = \sum_{e \in H} w(e).$\footnote{There is another variant of the weighted $k$-spanner problem, in which the weight of an edge represents a length. We emphasize that in our case all the edges have length 1.}
In the \emph{client-server $k$-spanner} problem, introduced in \cite{elkin1999client}, the input is a connected undirected graph $G=(V,E)$ that its edges are divided to two types: clients $C$ and servers $S$ (there may be edges $e \in C \cap S)$, and the goal it to find the minimum size $k$-spanner of $C$ that includes only edges of $S$.

In the distributed setting, the input for the $k$-spanner problem is the communication graph $G$ itself. Each vertex initially knows only the identities of its neighbors, and needs to output a subset of its edges such that the union of all outputs is a $k$-spanner. The communication in the network is \emph{bidirectional}, even when solving the directed $k$-spanner problem.

\paragraph*{Roadmap:} In Section \ref{sec:hardness}, we present our hardness of approximation results for directed and weighted $k$-spanners in the \congest model. In Section \ref{sec:MVC}, we provide hardness of approximation results for weighted 2-spanners. In Section \ref{sec:alg}, we present our algorithm for the minimum 2-spanner problem and show its extensions to other variants. In Section \ref{sec:MDS}, we describe our MDS algorithm. Finally, in Section \ref{sec:epsilon}, we show our $(1+\epsilon)$-approximation for minimum $k$-spanners.

\remove{
\paragraph*{Roadmap:} In Section \ref{sec:alg_brief}, we describe our algorithm for minimum 2-spanner, the full analysis and extensions to other variants appear in Section \ref{sec:alg_full}. Our MDS algorithm appears in Section \ref{sec:MDS}. In Section \ref{sec:hardness_brief}, we describe in high-level our construction for showing hardness of approximation results in the \congest model. Full details and proofs appear in Section \ref{sec:hardness}. In Section \ref{sec:MVC_brief}, we describe in high-level our reduction from minimum vertex cover to weighted 2-spanner. Full details and proofs appear in Section \ref{sec:MVC}. Sections A-D appear after Sections 2-4.
}


\remove{
\section{Distributed approximation for 2-spanner problems and MDS} \label{sec:alg_brief}
\label{sec:overview:alg}

\subsection{The analysis}
At a high level, to show the approximation ratio of $O(\log{\frac{m}{n}})$, we assign each edge $e \in E$ a value $cost(e)$ that depends on the density of the star that covers $e$ in the algorithm, such that the sum of the costs of all edges is closely related both to the size of the spanner produced by the algorithm $|H|$ and to the size of a minimum 2-spanner $|H^*|$, by satisfying $|H| \leq 8 \sum_{e \in E} cost(e) \leq O\left( \log{\frac{m}{n}}\right) |H^*|$.
The approximation ratio analysis appears in Section \ref{sec:approx}.

Afterwards, we show in Section \ref{sec:time}, that our algorithm completes in $ O(\log{n}\log{\Delta})$ rounds, w.h.p.
In \cite{jia2002efficient, rajagopalan1998primal}, a potential function argument is given for analyzing the set cover and minimum dominating set problems that are addressed. We analyze our algorithm along a similar argument, but our algorithm necessitates a more intricate analysis, mainly due to the fact that each vertex may be the center of multiple stars that are added during the algorithm, rather than being chosen only once for a dominating set. 
The choice of the star $S_v$, as given in Section \ref{choose}, plays a crucial role in proving our small time complexity, by allowing us to eventually prove that our potential function, whose value is at most polynomial in $n$, decreases by a constant factor in expectation in each iteration in which the rounded density does not decrease.

\subsection{Additional results} 
A compelling feature of our analysis is that it extends easily to the directed, weighted and client-server variants of the minimum 2-spanner problem, as we show in Section \ref{sec:additional}. Later, in Section \ref{sec:MDS}, we show that our algorithm can be modified to give an efficient algorithm for the minimum dominating set problem, \emph{guaranteeing} an approximation ratio of $O(\log{\Delta})$.

\section{Hardness of approximation in the CONGEST model} \label{sec:hardness_brief}

In this section, we give a high-level description of the construction that allows us to show that approximating the \emph{directed} $k$-spanner problem in the \congest model is hard for $k \geq 5$. Full details appear in Section \ref{sec:hardness}.

As explained in Section~\ref{sec:intro}, our proof is by reduction from the 2-party set disjointness problem.
Our technical contribution is in constructing a dense graph $G$ having some edges depending on the inputs of the two players, Alice and Bob, such that $G$ has a sparse $k$-spanner if and only if the inputs are disjoint.

Our construction consists of two subgraphs. The first of them depends on the inputs of Alice and Bob, such that each bit of the inputs affects the distance between a pair of vertices in the graph. The second is a complete bipartite graph $D$ that each of its sides is divided to blocks of size $\beta$. Each of the blocks is connected to one vertex outside of $D$ (see Figure \ref{alice_bob2} for an illustration).

\setlength{\intextsep}{0pt}
\begin{figure}[h]
\centering
\setlength{\abovecaptionskip}{-2pt}
\setlength{\belowcaptionskip}{6pt}
\includegraphics[scale=0.5]{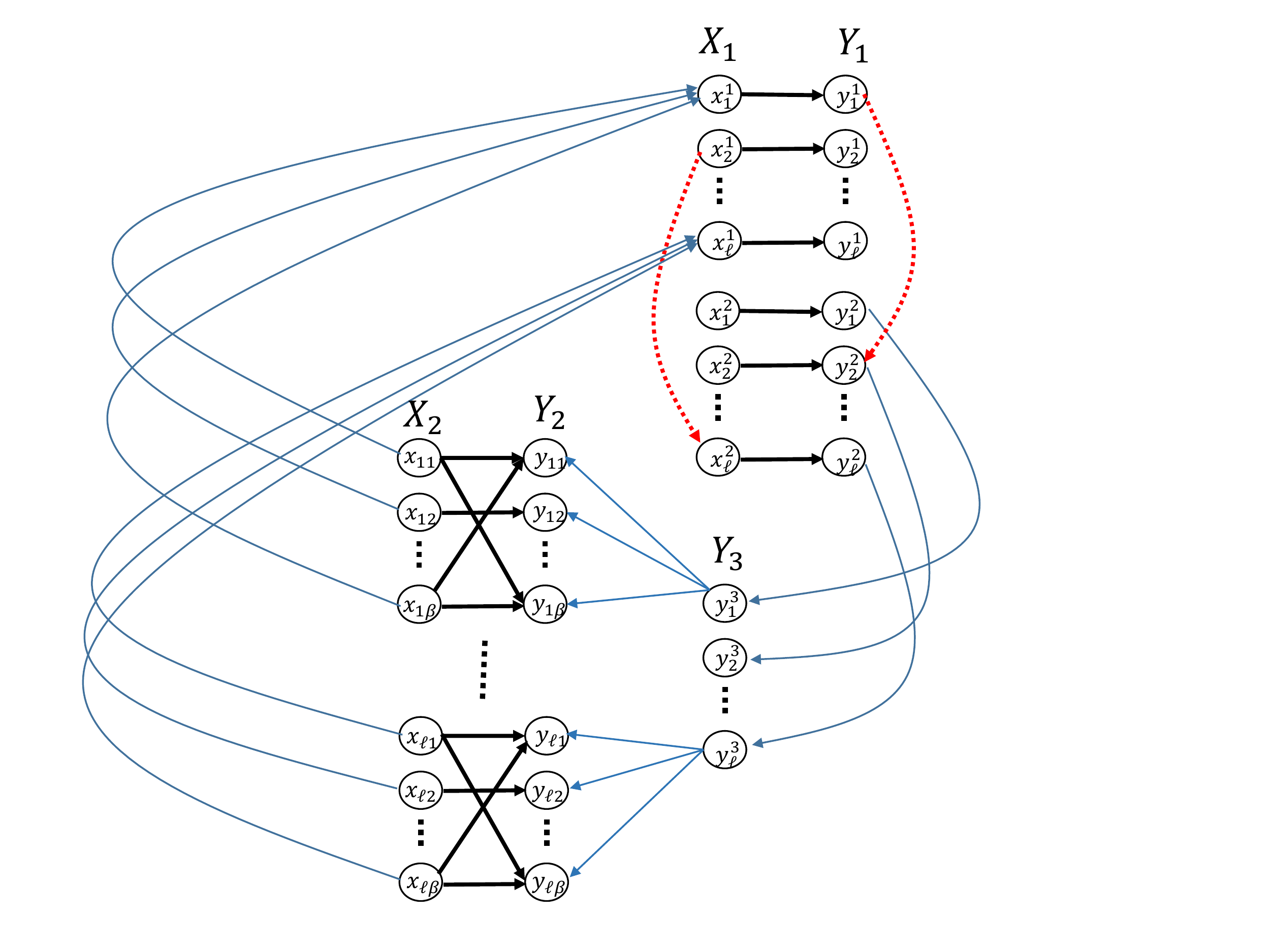}
 \caption{The graph $G$ with some of its edges omitted for clarity. In the left side there is a complete bipartite graph between $X_2$ and $Y_2$. The red dashed edges are examples of optional edges which depend on the input strings of Alice and Bob. Note that these edges affect the distances between vertices in $X_1$ and $Y_1$. For example, the edge $(y^1_1,y^2_2)$ creates a path of length $2$ from $x^1_1$ to $y^2_2$.}
\label{alice_bob2}
\end{figure}

Since $D$ is a dense graph, and the goal is to construct a sparse $k$-spanner, we must avoid adding too many edges of $D$ to the spanner. However, in order to be able to omit an edge in $D$ from the spanner, there must be a path of length at most $k$ that covers it.
Our construction crucially connects the graphs in a way that now each bit of the inputs affects $\beta^2$ pairs of vertices in $D$. For each such pair $(u,v)$, there is a path of length at most $5$ between $u$ and $v$ that does not include the edge $(u,v)$ if and only if the relevant bit equals 1 in both the input strings of Alice and Bob. It follows that if the input strings of Alice and Bob are not disjoint we must add at least $\beta^2$ edges of $D$ to the spanner, which is highly expensive.

Based on these ingredients, we show that obtaining an $\alpha$-approximation for the directed $k$-spanner problem for $k \geq 5$ requires $\Omega(\frac{\sqrt{n }}{\sqrt{\alpha}\log{n}})$ rounds in the \congest model, even when using randomized algorithms.
The analysis appears at Section \ref{sec:hardness}. 

\paragraph*{Additional results:} In Section \ref{sec:deterministic}, we show that any \emph{deterministic} algorithm solving the directed $k$-spanner problem for $k \geq 5$, requires $\Omega(\frac{n}{\sqrt{\alpha} \cdot \log{n}})$ rounds. The trick that allows a stronger lower bound is that we use the \textit{gap disjointness} problem from communication complexity, instead of set disjointness.

In Section \ref{sec:weighted}, we extend our construction to the weighted case, showing that any approximation for the weighted $k$-spanner for $k \geq 4$ in the \congest model takes $\widetilde{\Omega}(n)$ rounds for directed graphs or $\widetilde{\Omega}(\frac{n}{k})$ for undirected graphs, even for randomized algorithms. In the weighted case, rather than guaranteeing that each input bit affects many edges of the spanner, we simply assign weight 0 to all the edges that are not in $D$ and weight 1 to all the edges of $D$. Hence, taking even a single edge from $D$ is very expensive if we can avoid it. This allows us to show a simpler construction, obtaining a stronger lower bound for the weighted case.

\section{Hardness of approximation of weighted 2-spanner} \label{sec:MVC_brief}

In Section \ref{sec:MVC}, we show that approximating the weighted 2-spanner problem is at least as hard as approximating the (unweighted) minimum vertex cover (MVC) problem. Here we give a high-level description of our reduction.

The main intuition is that in MVC the goal is to cover all the edges of the graph by vertices, where in the 2-spanner problem the goal is to cover all the edges by edges of the spanner. Hence, we would like to design a reduction where for every vertex $v$ in a input graph $G$ for MVC there is a corresponding edge $e_v$ in a corresponding input graph $G_S$ for minimum 2-spanner, where the edge $e_v$ covers in $G_S$ the same edges that $v$ covers in $G$. We formalize this intuition and define the graph $G_S$ as follows. For every vertex $v$ in $G$, we add an edge $e_v = \{v_1,v_2\}$ to the graph $G_S$ having weight 1, and we add additional two edges of weight 0 that cover $e_v$. For each edge $e$ in $G$, we add 3 edges to $G_S$, one has weight 2 and the other have weight 0 (See Figure \ref{vc2} for an illustration).

\setlength{\intextsep}{2pt}
\begin{figure}[h]
\centering
\setlength{\abovecaptionskip}{-6pt}
\setlength{\belowcaptionskip}{8pt}
\includegraphics[scale=0.5]{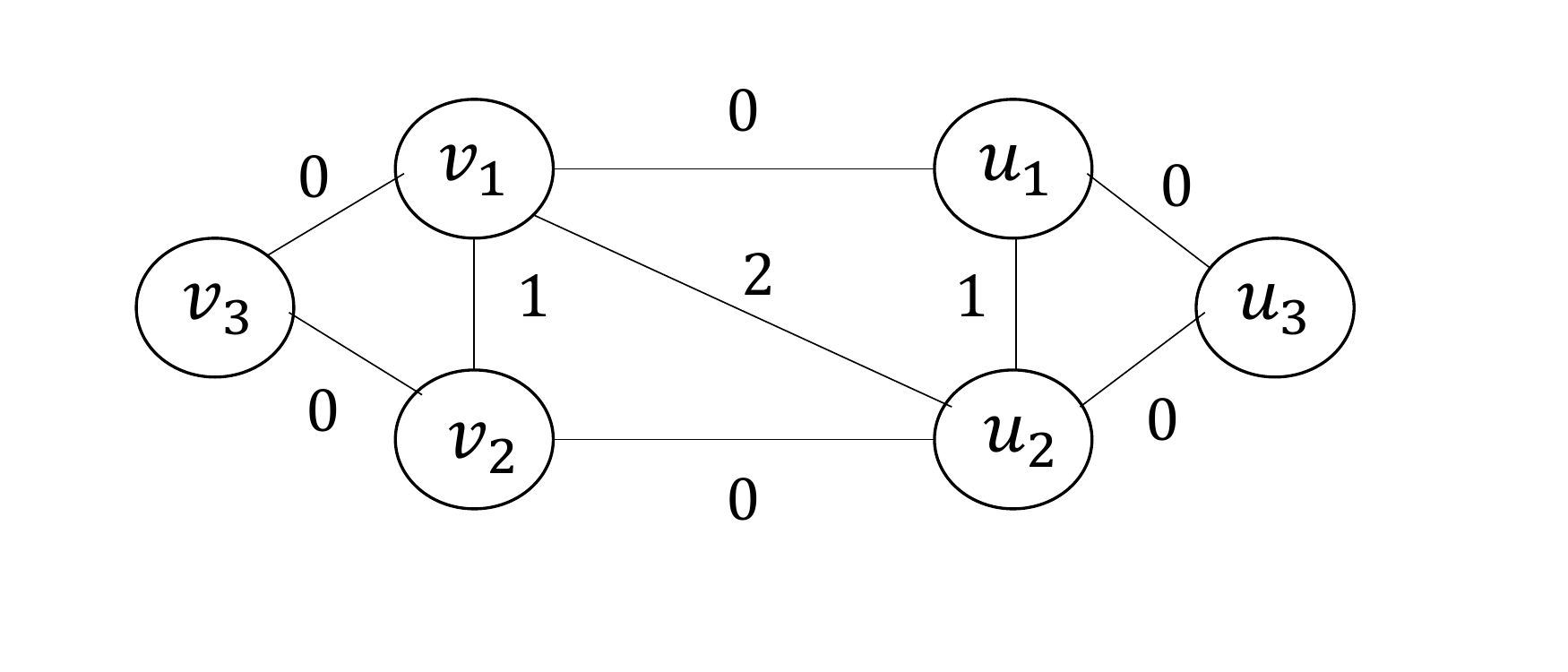}
\caption{For each vertex $v \in G$, there is a corresponding triangle in $G_S$ between the vertices $v_1,v_2,v_3$. The edge $\{v,u\} \in G$ has 3 corresponding edges in $G_S$: $\{v_1,u_1\}, \{v_2,u_2\}$, $\{v_1,u_2\}$.} \label{vc2}
\end{figure}

It holds that all the edges of weights 0 or 1 in $G_S$ are covered by edges of weight 0. Hence, in order to find the minimum weight 2-spanner for $G_S$ our goal is to cover all the edges having weight 2 optimally.
We show that a vertex cover $C$ in $G$ gives a 2-spanner for $G_S$ with cost $|C|$ by adding to the spanner all the edges of weight 0, and the edges $\{e_v|v \in C\}$. Similarly, we show that a 2-spanner $H$ for $G_S$ gives a vertex cover in $G$ with size $w(H)$.
To conclude, the cost of the minimum 2-spanner in $G_S$ is exactly the size of the minimum vertex cover in $G$.

Since the reduction is local, an approximation algorithm for weighted 2-spanner in $G_S$ can be simulated by the vertices in $G$, which gives an approximation algorithm for MVC. Therefore, known lower bounds for MVC translate directly to lower bounds for weighted 2-spanner.
Our reduction therefore gives that $\Omega(\frac{\log{\Delta}}{\log{\log{\Delta}}})$ or  $\Omega(\sqrt{\frac{\log{n}}{\log{\log{n}}}})$ rounds are required for a logarithmic approximation ratio for weighted 2-spanner in the \local model, by plugging in the lower bounds for MVC given in~\cite{kuhn2016local}. In addition, $\widetilde{\Omega}(n^2)$ rounds are required for solving the problem \emph{optimally} in the \congest model, using the lower bound for exact MVC given in~\cite{censor2017quadratic}.
The full details and proofs appear in Section \ref{sec:MVC}.
}


\section{Hardness of approximation in the CONGEST model} 
\label{sec:hardness}

In this section, we prove hardness of approximation results for approximating $k$-spanners in the \congest model. As explained in Section~\ref{sec:intro}, we build upon the previous used framework of reducing 2-party communication problems to distributed problems for the \congest model. The key technical challenge that we overcome is how to plant a dense subgraph into the construction, without inducing a large cut between the vertices simulated by the two players, but while still having the choice of edges taken from the dense subgraph to the spanner depend on both inputs.

We describe a graph construction that allows us to provide a reduction from problems of 2-party communication. In the latter setting, two players, Alice and Bob, receive input strings $a=(a_1,...,a_N)$ and $b=(b_1,...,b_N)$, respectively, of size $N$. Their goal is to solve a problem related to their inputs, while communicating a minimum number of bits. For example, the set disjointness problem requires the players to decide if their input strings represent disjoint subsets of $[N]$, that is, they need to decide if there is a bit $1 \leq i \leq N$ such that $a_i = b_i =1$. The communication complexity of set disjointness is known to be linear in the length of the strings~\cite{razborov1992distributional,Kushilevitz:1996:CC:264772}.

\begin{lemma} \label{set_disj}
Solving the set disjointness problem on input strings of size $N$, requires exchanging $\Omega(N)$ bits, even using randomized protocols.
\end{lemma}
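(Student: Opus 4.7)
The plan is to prove the lower bound via Yao's minimax principle: it suffices to exhibit a distribution $\mu$ on input pairs $(a,b) \in \{0,1\}^N \times \{0,1\}^N$ such that every deterministic protocol with error at most $1/4$ under $\mu$ exchanges $\Omega(N)$ bits. A randomized protocol succeeding with constant probability on every input succeeds with the same probability under $\mu$, and by averaging there exist coin flips yielding a deterministic protocol of the same cost with the required error bound on $\mu$. Thus the randomized lower bound reduces to a distributional one.

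First I would fix the hard distribution following Razborov: with probability $3/4$ sample $(a,b)$ uniformly from pairs of $N/4$-subsets of $[N]$ that are disjoint, and with probability $1/4$ sample uniformly from pairs of $N/4$-subsets that intersect in exactly one coordinate. Both cases have $\mu$-mass bounded away from $0$ and $1$, so a $3/4$-correct protocol under $\mu$ must separate the two types on most of the support.

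Next I would use the standard combinatorial structure of deterministic two-party protocols: a protocol using $c$ bits induces a partition of $\{0,1\}^N \times \{0,1\}^N$ into at most $2^c$ combinatorial rectangles, each labeled with a single output. The rectangles labeled ``disjoint'' must together capture essentially all of the disjoint mass but only a tiny fraction of the intersecting mass. The key technical step is Razborov's corruption lemma: every combinatorial rectangle $A \times B$ with $\mu(A \times B) \ge 2^{-\epsilon N}$ must contain a uniquely-intersecting pair with probability at least an absolute constant, conditional on $(a,b) \in A \times B$. Combined with the correctness requirement, any rectangle labeled ``disjoint'' has $\mu$-mass at most $2^{-\Omega(N)}$, and since the disjoint rectangles collectively cover $\Omega(1)$ of $\mu$, there must be at least $2^{\Omega(N)}$ of them, forcing $c = \Omega(N)$.

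The main obstacle is the corruption lemma itself, which requires a careful counting argument controlling how many disjoint pairs a large rectangle can contain under the chosen near-product distribution on low-weight subsets. Rather than reproving it, I would invoke the published analysis of Razborov~\cite{razborov1992distributional} or the self-contained exposition in Kushilevitz and Nisan~\cite{Kushilevitz:1996:CC:264772}, both of which are cited in the statement. The rest of the argument is then a short chain of standard reductions (Yao's principle, protocol-to-rectangle decomposition, and union bounds over the monochromatic rectangles).
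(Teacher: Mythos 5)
The paper does not prove this lemma itself; it simply cites Razborov and Kushilevitz--Nisan as the source of the $\Omega(N)$ randomized lower bound for set disjointness. Your sketch is a faithful outline of exactly the argument in those references---Yao's minimax reduction to distributional complexity under Razborov's near-uniform distribution on low-weight sets, rectangle decomposition of deterministic protocols, and the corruption lemma bounding the mass of any monochromatic ``disjoint'' rectangle---and you correctly identify the corruption lemma as the sole nontrivial technical ingredient and defer to the cited texts for it. Since the paper's ``proof'' is the citation itself, your proposal takes essentially the same route and is correct.
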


We start by showing that approximating the \emph{directed} $k$-spanner problem in the \congest model is hard for $k \geq 5$, and then modify our construction to provide hardness results for the weighted case.

The general approach is to build a dense graph $G$, where some of its edges depend on the inputs of Alice and Bob, such that if the inputs of Alice and Bob are disjoint then there is a sparse $5$-spanner in $G$ (which is also a $k$-spanner for $k \geq 5$), and otherwise any $k$-spanner has many edges. By simulating the distributed approximation algorithm for the $k$-spanner problem, Alice and Bob solve set disjointness. Hence, depending on the parameters of our graph construction, a communication lower bound for the latter would imply a lower bound on the number of rounds required for the former.

In \cite{frischknecht2012networks,holzer2012optimal}, a reduction from set disjointness is used in order to show a lower bound for computing the diameter of a graph. The main idea is that each bit of the inputs affects the distance between two vertices in the graph, and if the distance between any of these pairs of vertices is long it affects the diameter of the graph. This idea is useful also for showing lower bound for spanner problems, and indeed one of the elements in our construction is similar to the constructions in \cite{frischknecht2012networks,holzer2012optimal}. However, the main difference in our case is that the distance between one pair of vertices in the graph does not affect significantly the size of the minimum spanner.

In order to overcome it, we suggest the following construction. Our graph consists of two subgraphs. One of them depends on the inputs, and the other one is a complete bipartite graph $D$ that each of its sides is divided to blocks of size $\beta$. We connect the two subgraphs in such a way that each bit $i$ of the inputs affects $\beta^2$ edges of $D$, which must be added to the spanner if and only if $a_i = b_i =1$.

Let $\ell, \beta$ be positive integers. We construct a graph $G=G(\ell,\beta)$ according to the parameters $\ell$ and $\beta$. Later we plug-in different values of $\ell$ and $\beta$ in order to obtain several trade-offs.
The graph $G(\ell,\beta)$ is a directed graph, with $V = X_1 \cup X_2 \cup Y_1 \cup Y_2 \cup Y_3$, where $X_1 = \{x^1_i| 1 \leq i \leq \ell \} \cup \{x^2_i| 1 \leq i \leq \ell \}, Y_1 = \{y^1_i| 1 \leq i \leq \ell\} \cup \{y^2_i| 1 \leq i \leq \ell \}$, $X_2 = \{x_{ij}|1 \leq i \leq \ell, 1 \leq j \leq \beta \}, Y_2 = \{y_{ij}|1 \leq i \leq \ell, 1 \leq j \leq \beta \}$, and $Y_3 = \{y^3_i| 1 \leq i \leq \ell\}$. See Figure \ref{alice_bob} for an illustration.

The set of edges consists of a matching between $X_1$ and $Y_1$ that includes all the directed edges $(x^1_i,y^1_i)$ and $(x^2_i,y^2_i)$, for $1 \leq i \leq \ell$. In addition, there is a complete bipartite graph $D$ between the vertices of $X_2$ and $Y_2$ that includes all the directed edges $(x_{ij},y_{rs})$ for $1 \leq i,r \leq \ell, 1 \leq j,s \leq \beta$. For each vertex $x_{ij} \in X_2$ there is an edge $(x_{ij},x^1_i)$. For each vertex $y_{ij} \in Y_2$ there is an edge $(y^3_i,y_{ij})$. In addition, the graph includes the edges $(y^2_i,y^3_i)$, for $1 \leq i \leq \ell$.

In addition, the two input strings $a,b$ of length $\ell^2$ bits, denoted by $a_{ij},b_{ij}$ for $1 \leq i,j \leq \ell$, affect $G$ in the following way. The edge $(x^1_i,x^2_j)$ is in $G$ if and only if $a_{ij}=0$, and the edge $(y^1_i,y^2_j)$ is in $G$ if and only if $b_{ij}=0$.

\setlength{\intextsep}{0pt}
\begin{figure}[h]
\centering
\setlength{\abovecaptionskip}{-2pt}
\setlength{\belowcaptionskip}{8pt}
\includegraphics[scale=0.5]{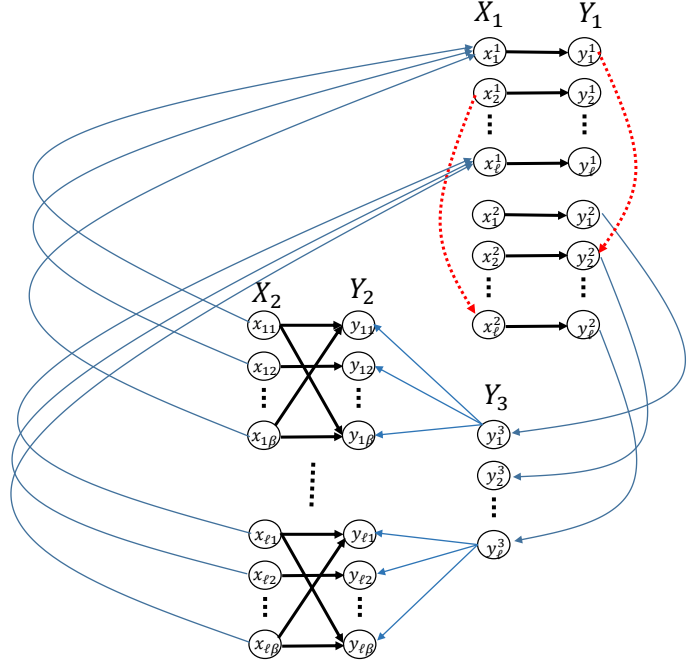}
 \caption{The graph $G$, with some of its edges omitted for clarity. The red dashed edges are examples of optional edges which depend on the input strings $a$ and $b$.}
\label{alice_bob}
\end{figure}

Note that the number of vertices in $G$ is $n = \Theta(\ell \beta)$, and that $D$ consists of $(\ell \beta)^2 = \Theta(n^2)$ edges, and recall the goal of constructing a sparse $k$-spanner for $G$ with $k \geq 5$.
Since $D$ is a dense subgraph, taking its edges to the spanner would be expensive. However, in order to avoid taking the edges of $D$ to the spanner, the spanner must include a directed path of length at most $k$ between every pair of vertices $x_{ij},y_{rs}$, which does not include edges of $D$. The existence of such a path depends on the input strings in the following way.

\begin{claim} \label{directed}
If one of the edges $(x^1_i,x^2_r),(y^1_i,y^2_r)$ is in $G$, there is a directed path of length $5$ between the vertices $x_{ij},y_{rs}$ that does not contain edges of $D$. Otherwise, the only directed path from $x_{ij}$ to $y_{rs}$ is the path that consists of the edge $(x_{ij},y_{rs})$.
\end{claim}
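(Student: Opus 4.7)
The proof splits into an existence part and a uniqueness part. For the existence direction, the plan is to exhibit two explicit length-$5$ directed paths from $x_{ij}$ to $y_{rs}$, depending on which input-dependent edge is present. If $(x^1_i, x^2_r) \in G$ (i.e.\ $a_{ir}=0$), the path will be $x_{ij} \to x^1_i \to x^2_r \to y^2_r \to y^3_r \to y_{rs}$, using in order the edge from $X_2$ into $X_1$, the input-dependent edge, the matching edge $(x^2_r, y^2_r)$, the edge $(y^2_r, y^3_r)$, and the edge from $Y_3$ into $Y_2$. If instead $(y^1_i, y^2_r) \in G$ (i.e.\ $b_{ir}=0$), replace the middle two edges by the matching $(x^1_i, y^1_i)$ and the input-dependent $(y^1_i, y^2_r)$, giving $x_{ij} \to x^1_i \to y^1_i \to y^2_r \to y^3_r \to y_{rs}$. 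Neither path uses an edge of $D$.

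For the uniqueness direction, the plan is first to tabulate the outgoing and incoming edges of every vertex type in $G$, and then perform a backward reachability analysis towards $y_{rs}$. Three structural observations will drive the argument: (a) the only non-dead-end edge leaving $x_{ij}$ is $(x_{ij}, x^1_i)$, because every $D$-edge out of $x_{ij}$ enters $Y_2$, whose vertices have no outgoing edges at all; (b) the only non-$D$ incoming edge of $y_{rs}$ is $(y^3_r, y_{rs})$, and $y^3_r$'s only incoming edge is $(y^2_r, y^3_r)$, so every path to $y_{rs}$ that is not the direct $D$-edge must route through $y^2_r$; and (c) the only $x^1$- or $y^1$-vertex reachable from $x_{ij}$ is $x^1_i$ or $y^1_i$, respectively, since $(x_{ij}, x^1_i)$ is the unique outgoing edge from $x_{ij}$ into $X_1 \cup Y_1$ and the only incoming edge of each $y^1_{i'}$ is the matching edge $(x^1_{i'}, y^1_{i'})$.

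With these in hand, the enumeration reduces to analyzing the two ways to enter $y^2_r$: via the matching edge $(x^2_r, y^2_r)$, or via an input-dependent edge $(y^1_{i'}, y^2_r)$ with $b_{i'r}=0$. The first route requires reaching $x^2_r$ through some edge $(x^1_{i'}, x^2_r)$ with $a_{i'r}=0$, and by (c) this forces $i'=i$ and hence $a_{ir}=0$; the second route similarly requires $i'=i$ by (c), hence $b_{ir}=0$. When both $a_{ir}=1$ and $b_{ir}=1$, both routes are blocked, so by (a) the only directed path from $x_{ij}$ to $y_{rs}$ is the direct $D$-edge. The main obstacle I anticipate is making sure that longer paths that attempt to double back through the graph are fully ruled out; this is handled by the observation that the out-edge structure is essentially layered ($Y_2$-vertices are sinks, the only out-edges of $Y_3$ are into $Y_2$, and $X_2$-vertices have no incoming edges), so once a walk has descended past $X_1 \cup Y_1$ it cannot return, making the layer-by-layer backward analysis exhaustive.
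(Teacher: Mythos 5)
Your proposal is correct and takes essentially the same approach as the paper's proof: a layered structural analysis of the directed graph $G$, exploiting that $Y_2$ is a sink, $X_2$ has no incoming edges, and any non-trivial path from $x_{ij}$ to $y_{rs}$ must start with $(x_{ij},x^1_i)$ and finish with $(y^2_r,y^3_r),(y^3_r,y_{rs})$, thereby reducing the question to reachability from $x^1_i$ to $y^2_r$. The only cosmetic difference is that the paper frames the middle step as an analysis of how a path crosses the directed cut between $X_1$ and $Y_1$ (via $(x^1_i,y^1_i)$ or a matching edge $(x^2_j,y^2_j)$ with $j=r$), while you frame it as an enumeration of the two incoming edges of $y^2_r$; these are two equivalent descriptions of the same two-case split, leading to the same conclusion.
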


\begin{proof}
Note that any directed path from $x_{ij}$ to $y_{rs}$ that does not include the edges of $D$ must begin with the edge $(x_{ij},x^1_i)$ and must end with the two edges $(y^2_r,y^3_r),(y^3_{r},y_{rs})$. Hence, the existence of such a path depends on whether there is a directed path from $x^1_i$ to $y^2_r$. We show that there is a directed path of length 2 from $x^1_i$ to $y^2_r$ if at least one of the edges $(x^1_i,x^2_r),(y^1_i,y^2_r)$ is in $G$. Otherwise, there is no directed path of any length from $x^1_i$ to $y^2_r$.

Let $P$ be a directed path from $x^1_i$ to $y^2_r$. The path $P$ must cross the cut between $X_1$ to $Y_1$ either by the edge $(x^1_i,y^1_i)$ or by the edge $(x^2_r,y^2_r)$, since any path (of any length) from $x^1_i$ can only cross the cut through the edge $(x^1_i,y^1_i)$ or by an edge of the form $(x^2_j,y^2_j)$. However, if $j \neq r$, $y^2_r$ is not reachable from $y^2_j$. If $P$ crosses by the edge $(x^1_i,y^1_i)$ the only way to reach $y^2_r$ from $y^1_i$ is by the edge $(y^1_i,y^2_r)$. In the second case, the edge $(x^1_i,x^2_r)$ must be the first edge in the path.

In conclusion, if one of the edges $(x^1_i,x^2_r),(y^1_i,y^2_r)$ is in $G$, then there is a directed path of length $5$ between the vertices $x_{ij},y_{rs}$ that does not contain edges of $D$. Otherwise, there is no directed path of any length from $x^1_i$ to $y^2_r$. In this case, the only directed path from $x_{ij}$ to $y_{rs}$ is the path that consists of the edge $(x_{ij},y_{rs})$.
\end{proof}

Claim~\ref{directed} captures the essence of why our construction is suitable for an approximation problem. Next, we use our graph construction and this claim in order to show our hardness results.

\subsection{Randomized directed $k$-spanner}

In this section, we address the directed $k$-spanner problem for $k \geq 5$, and show that obtaining an $\alpha$-approximation requires $\Omega(\frac{\sqrt{n }}{\sqrt{\alpha}\log{n}})$ rounds in the \congest model, even when using randomized algorithms.

\begin{lemma} \label{disj}
Let $G = G(\ell,\beta)$ for $\beta \geq \ell$, let $k \geq 5$, and let $c=7$. If the input strings $a,b$ are disjoint, then there is a $k$-spanner of size at most $c \ell \beta$ for $G$. Otherwise, any $k$-spanner for $G$ includes at least $\beta^2$ edges of $D$.
\end{lemma}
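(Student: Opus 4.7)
The proof plan is to apply Claim~\ref{directed} in both directions, since it already characterizes exactly when a $D$-edge $(x_{ij},y_{rs})$ admits a covering path avoiding $D$.

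For the disjoint case, I propose to take as the spanner $H$ the set of \emph{all} non-$D$ edges of $G$. Any edge outside $D$ is trivially covered (it is in $H$). For an edge $(x_{ij},y_{rs}) \in D$, disjointness of $a$ and $b$ forces $a_{ir}=0$ or $b_{ir}=0$, so by Claim~\ref{directed} there is a directed path of length $5$ from $x_{ij}$ to $y_{rs}$ using only non-$D$ edges, i.e., lying in $H$. Hence $H$ is a $5$-spanner, and therefore a $k$-spanner for any $k\ge 5$. The remaining task is to bound $|H|$: the matching contributes $2\ell$ edges, the edges $(x_{ij},x^1_i)$ and $(y^3_i,y_{ij})$ contribute $\ell\beta$ each, the edges $(y^2_i,y^3_i)$ contribute $\ell$, and the input-dependent edges $(x^1_i,x^2_j)$ and $(y^1_i,y^2_j)$ contribute at most $\ell^2$ each. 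Using $\beta\ge\ell$, the total is at most $2\ell\beta + 2\ell^2 + 3\ell \le 4\ell\beta + 3\ell \le 7\ell\beta$, matching $c=7$.

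For the non-disjoint case, fix indices $i,r$ with $a_{ir}=b_{ir}=1$, so neither $(x^1_i,x^2_r)$ nor $(y^1_i,y^2_r)$ is present in $G$. By Claim~\ref{directed}, for \emph{every} choice of $j\in[\beta]$ and $s\in[\beta]$, the only directed path in $G$ from $x_{ij}$ to $y_{rs}$ is the single edge $(x_{ij},y_{rs})\in D$. Consequently, any $k$-spanner (indeed any subgraph that covers these $\beta^2$ edges with any finite-length directed path) must contain all $\beta^2$ edges $\{(x_{ij},y_{rs}) : 1\le j,s\le \beta\}\subseteq D$, yielding the required lower bound.

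There is no real obstacle beyond bookkeeping: Claim~\ref{directed} does the structural work, and the disjoint case reduces to an edge count that is slightly loose but comfortably absorbed by $c=7$ under the hypothesis $\beta\ge\ell$. The only subtlety is remembering that ``path'' in Claim~\ref{directed} is directed, so in the non-disjoint direction one needs that \emph{no} directed $x^1_i\rightsquigarrow y^2_r$ path exists at all, not merely no short one; this is exactly what the second half of Claim~\ref{directed} provides and is what makes the $\beta^2$ edges unavoidable regardless of the stretch parameter $k$.
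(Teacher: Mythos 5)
Your proof is correct and follows essentially the same route as the paper's: in the disjoint case both take all non-$D$ edges, invoke Claim~\ref{directed} to verify the $5$-spanner property, and bound the edge count by $2\ell\beta + 2\ell^2 + 3\ell \le 7\ell\beta$ via $\ell\le\beta$; in the non-disjoint case both fix a bad index pair $(i,r)$ and use the second half of Claim~\ref{directed} to force all $\beta^2$ edges $(x_{ij},y_{rs})$ into any spanner. No gaps.
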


\begin{proof}
If the input strings $a,b$ are disjoint, then for every pair of indexes $i,r$ at least one of the edges $(x^1_i,x^2_r),(y^1_i,y^2_r)$ is in $G$. Hence, by Claim \ref{directed}, there is a directed path of length at most $5$ between every two vertices $x_{ij},y_{rs}$, which does not contain edges of $D$. This gives a 5-spanner of size at most $c \ell \beta$ edges for $G$ by taking all the edges not in $D$, since there are at most $2\ell \beta + 2\ell^2+3\ell$ such edges, which is at most $c\ell\beta$ since $\ell \leq \beta$ and $c=7$. This is also a $k$-spanner for any $k \geq 5$.

If the input strings are not disjoint, then there is a pair of indexes $i,r$ such that neither of the edges $(x^1_i,x^2_r),(y^1_i,y^2_r)$ is in $G$. Hence, by Claim \ref{directed}, there is no directed path between the vertices $x_{ij},y_{rs}$ except for the path that includes the edge $(x_{ij},y_{rs})$. Therefore, we need to take all the edges $(x_{ij},y_{rs})$ to the spanner for all values of $j$ and $s$, which means adding $\beta^2$ edges of $D$ to the spanner.
\end{proof}

Let $k \geq 5$ and let $A$ be a distributed $\alpha$-approximation algorithm for the minimum $k$-spanner problem. Denote by $T(n)$ the time complexity of $A$ on a graph with $n$ vertices. The approximation ratio $\alpha=\alpha(n)$ of the algorithm $A$ may depend on $n$, and we assume that it is a monotonic increasing function of $n$, and that if $n = \Theta(n')$, then $\alpha(n) = \Theta(\alpha(n'))$.

Our goal is to show that $A$ can be used to solve set disjointness.
If $\alpha \cdot c \ell \beta < \beta^2$, then by Lemma \ref{disj}, the algorithm $A$ gives a protocol for set disjointness, in which case we show a lower bound of $\Omega(\frac{\ell}{\log{n}})$ on the time complexity of $A$, as stated in the following lemma.

\begin{lemma} \label{threshold}
Let $G=G(\ell,\beta)$. If there is a threshold $t$ such that if the input strings $a,b$ are disjoint, an optimal spanner of $G$ has at most $t$ edges, and otherwise each spanner of $G$ includes more than $\alpha(n) \cdot t$ edges of $D$, then $T(n) = \Omega(\frac{\ell}{\log{n}}).$
\end{lemma}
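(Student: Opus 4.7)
The plan is to reduce set-disjointness on inputs of length $\ell^2$ to running the algorithm $A$ on the graph $G(\ell,\beta)$. Given input strings $a,b \in \{0,1\}^{\ell^2}$, Alice and Bob jointly construct $G(\ell,\beta)$ (the input-independent part being shared knowledge, and each player supplying the optional edges determined by her own input) and simulate $A$ on it. Under the threshold hypothesis of the lemma, the disjoint case yields a spanner of size at most $\alpha(n)\cdot t$, while the non-disjoint case forces every spanner to contain strictly more than $\alpha(n)\cdot t$ edges of $D$; hence the size of the spanner returned by $A$ resolves set-disjointness.

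The crucial step is the choice of vertex partition. I would let Alice simulate $V_A = X_1 \cup X_2 \cup Y_2 \cup Y_3$ and let Bob simulate $V_B = Y_1$. Three properties must be verified. First, the edges that depend on $a$, namely the optional $(x^1_i,x^2_j)$, lie inside $X_1\subseteq V_A$, and those depending on $b$, namely $(y^1_i,y^2_j)$, lie inside $Y_1=V_B$, so each player has all the information needed to simulate her own vertices. Second, and most importantly, the entire dense bipartite graph $D$ between $X_2$ and $Y_2$ is contained inside $V_A$. Third, enumerating the remaining edges, the only ones crossing the partition are the matching edges $(x^1_i,y^1_i)$ and $(x^2_i,y^2_i)$ together with the edges $(y^2_i,y^3_i)$, giving a cut of exactly $3\ell = O(\ell)$ edges.

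Using this partition, one round of $A$ can be simulated by exchanging the $O(\log n)$-bit messages sent along the $O(\ell)$ cut edges, at cost $O(\ell\log n)$ bits per round; this remains true for randomized $A$ using public randomness, which is free in two-party communication complexity. After the $T(n)$-round simulation is complete, each player knows the spanner edges incident to her simulated vertices, so Alice and Bob can exchange $O(\log n)$ additional bits to compute the total size of the output spanner and compare it to $\alpha(n)\cdot t$, thereby solving set-disjointness. By Lemma~\ref{set_disj} the total communication must be $\Omega(\ell^2)$, so $O(T(n)\cdot \ell\log n + \log n) = \Omega(\ell^2)$, which rearranges to $T(n) = \Omega(\ell/\log n)$.

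The main obstacle is the choice of partition. The symmetric choice $V_A=X_1\cup X_2$, $V_B=Y_1\cup Y_2\cup Y_3$ would place the entire dense graph $D$ across the cut, inflating it to $(\ell\beta)^2$ edges and completely destroying the lower bound (this is exactly the pitfall flagged in Section~\ref{sec:intro}). The non-symmetric placement of $Y_2\cup Y_3$ on Alice's side, at the cost of only the $\ell$ crossing edges $(y^2_i,y^3_i)$, is precisely what keeps $D$ on one side while preserving an $O(\ell)$ cut; once this has been set up, the remainder is a standard simulation argument.
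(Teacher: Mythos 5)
Your proposal is correct and takes essentially the same approach as the paper: the same asymmetric partition $V_B = Y_1$, $V_A = V \setminus Y_1$, the same counting of the $3\ell$ cut edges, and the same per-round simulation cost of $O(\ell \log n)$ bits, yielding $T(n) = \Omega(\ell / \log n)$. The only cosmetic difference is that the paper has Alice decide by counting the edges of $D$ in the output spanner (which she knows with no extra communication since $D \subseteq V_A$), whereas you add an $O(\log n)$-bit exchange at the end to compute the total spanner size; both work and give the same bound.
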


\begin{proof}
We use $A$ to solve set disjointness on input strings of length $N = \ell^2$ in the following way. Let $a,b$ be two input strings of length $N$, given to Alice and Bob respectively.
We take the graph $G=G(\ell,\beta)$ and define $V_B = Y_1, V_A = V \setminus V_A$. Since the input strings $a$ and $b$ affect only edges between vertices within $V_A$ and within $V_B$ respectively, it holds that Alice knows all the edges adjacent to vertices in $V_A$ and Bob knows all the edges adjacent to vertices in $V_B$.
The cut between $V_A$ to $V_B$ consists of $\Theta(\ell)$ edges: the $2\ell$ edges of the matching between $X_1$ to $Y_1$, and the $\ell$ edges between $Y_1$ to $Y_3$. Now Alice and Bob simulate $A$ on $G$ as follows. Alice simulates the vertices in $V_A$ and Bob simulates the vertices in $V_B$. At each round, Alice and Bob exchange the messages going over the cut between $V_A$ and $V_B$ in either direction. Messages that are sent between vertices in $V_A$ or between vertices in $V_B$ are simulated locally by Alice and Bob, without any communication.
Since the size of messages is $O(\log{n})$ bits, and the size of the cut is $\Theta(\ell)$, they can simulate one round of $A$ by exchanging at most $O(\ell \cdot \log{n})$ bits, and therefore they can simulate the entire execution of $A$ by exchanging at most $O(T(n) \cdot \ell \cdot \log{n})$ bits.

At the end of the simulation, Alice knows which of the edges of $D$ are taken to the spanner. If there are more than $\alpha(n) \cdot t$ edges of $D$ in the spanner, Alice concludes that the input strings are not disjoint, and otherwise she concludes that they are disjoint.

To show that this produces the correct output, recall the condition of the lemma that if the input strings are disjoint then the size of an optimal spanner is at most $t$ and otherwise it is more than $\alpha(n) \cdot t$. Therefore, if the input strings are disjoint, since $A$ is an $\alpha(n)$-approximation algorithm, it constructs a spanner with at most $\alpha(n) \cdot t$ edges, in which case Alice indeed outputs that the input strings are disjoint. Otherwise, if the input strings are not disjoint, the size of any spanner is more than $\alpha(n) \cdot t$ edges, in which case Alice indeed outputs that the input strings are not disjoint.

Hence, Alice and Bob solve set disjointness by exchanging $O(T(n) \cdot \ell \cdot \log{n})$ bits. However, any (perhaps randomized) protocol that solves disjointness on inputs of size $N= \ell^2$ requires exchanging $\Omega(\ell^2)$ bits by Lemma \ref{set_disj}. This gives $T(n) = \Omega(\frac{\ell^2}{\ell \cdot \log{n}}) = \Omega(\frac{\ell}{\log{n}})$.
\end{proof}

Using Lemma \ref{disj} and Lemma \ref{threshold}, we prove our following main theorem.

\DirectedHardness*

\begin{proof}
We show that there is a threshold $t$ that distinguishes whether the inputs are disjoint. Then, using Lemma \ref{threshold}, we get a lower bound on the round complexity of $A$.

We define $G=G(\ell,\beta)$ with the following choice of the parameters $\beta,\ell$.
Let $n'$ be a positive integer, and let $c=7$. Let $q = \lceil \alpha(n') c \rceil + 1$. Let $\ell = \lfloor \sqrt{\frac{n'}{cq}} \rfloor$, and let $\beta = q \ell$. The requirement $\alpha(n) \leq \frac{n}{100}$ ensures that $cq \leq n'$, which shows that $\ell$ is positive. The number of vertices in $G$ is $n = \Theta( \ell \beta) = \Theta(q \ell^2) = \Theta (q \cdot \frac{n'}{q})=\Theta(n').$ In addition, note that $n \leq c \ell \beta$, since the number of vertices in $G$ is $2 \ell \beta + 5 \ell$, which gives
$n \leq c \ell \beta = cq \ell^2 \leq cq \cdot \frac{n'}{cq}=n'.$

Let $t= c \ell \beta$. By Lemma \ref{disj}, If the inputs are disjoint, there is a $k$-spanner for $G$ having at most $t = c  \ell \beta$ edges, and otherwise any $k$-spanner for $G$ includes at least $\beta^2$ edges of $D$. By the definition of $q$, it holds that $\alpha(n') \cdot c < q$, which gives $\alpha(n') \cdot c \ell \beta < q \ell \beta = \beta^2 $. Since $n \leq n'$, it holds that $\alpha(n) \leq \alpha(n')$, which gives $\alpha(n) \cdot t = \alpha(n) \cdot c \ell \beta < \beta^2$.

Hence, $t$ satisfies the conditions of Lemma \ref{threshold}, which gives $T(n) = \Omega(\frac{\ell}{\log{n}})$.
Since $\ell = \Theta(\sqrt{\frac{n}{q}})=\Theta(\sqrt{\frac{n}{\alpha(n)}})$,
it holds that $T(n) = \Omega(\frac{\ell}{\log{n}}) = \Omega(\frac{\sqrt{n}}{\sqrt{\alpha(n)} \cdot \log{n}}).$
\end{proof}

Theorem \ref{hardness} shows that achieving a constant or a polylogarithmic approximation ratio for the directed $k$-spanner problem in the \congest model requires $\widetilde{\Omega}(\sqrt{n})$ rounds, and even achieving an approximation ratio of $n^{\epsilon}$ is hard, requiring $\widetilde{\Omega}(n^{1/2-\epsilon/2})$ rounds, for any $0 < \epsilon < 1$.

This proves a strict separation between the \local and \congest models, since there is a constant round  $O(n^{\epsilon})$-approximation algorithm \cite{barenboim2016fast}, and a polylogarithmic $(1+\epsilon)$-approximation algorithm (see Section \ref{sec:epsilon}) for directed $k$-spanner in the \local model.

It also separates the undirected and directed $k$-spanner problems, since there are randomized $k$-round algorithms in the \congest model for constructing $(2k-1)$-spanners with $O(n^{1+1/k})$ edges \cite{elkin2017efficient}. These algorithms obtain an approximation ratio of $O(n^{1/k})$ for the undirected minimum $(2k-1)$-spanner problem in $k$ rounds, where achieving the same approximation for the directed problem requires $\widetilde{\Omega}(n^{1/2-1/{2k}})$ rounds according to Theorem \ref{hardness}.

\subsection{Deterministic directed $k$-spanner} \label{sec:deterministic}

We next show that any \emph{deterministic} algorithm solving the directed $k$-spanner problem for $k \geq 5$, requires $\Omega(\frac{n}{\sqrt{\alpha} \cdot \log{n}})$ rounds. The trick that allows a stronger lower bound is that we use a different problem from communication complexity, which we refer to as the \textit{gap disjointness} problem. This problem is also mentioned in \cite{fischer2017distributed}.

In the gap disjointness problem, Alice and Bob receive the input strings $a=(a_1,...,a_N)$ and $b=(b_1,...,b_N)$, respectively, and their goal is to distinguish whether their input strings are disjoint or are \textit{far} from being disjoint. The inputs are far from being disjoint if there are at least $\frac{N}{12}$ indexes $i$, such that $a_i = b_i =1$. If the inputs are neither disjoint nor far from being disjoint, any output of Alice and Bob is valid.
The gap disjointness problem can be easily solved by randomized protocols exchanging $O(1)$ bits. However, solving the problem deterministically requires exchanging $\Omega(N)$ bits.

\begin{lemma} \label{gap}
Solving the gap disjointness problem deterministically on input strings of size $N$ requires exchanging $\Omega(N)$ bits.
\end{lemma}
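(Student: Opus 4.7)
The plan is to prove Lemma~\ref{gap} by a direct fooling-set argument, reducing the deterministic communication lower bound to the existence of a large almost-disjoint set system. Concretely, I will exhibit $2^{\Omega(N)}$ YES instances of gap disjointness such that every ``swapped'' pair falls inside the NO promise; this forces any deterministic protocol to produce a distinct transcript on each and hence to use $\Omega(N)$ bits.

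The combinatorial core is the construction of a family $\mathcal{F}$ of subsets of $[N]$, each of size exactly $N/2$, such that $|F \cap F'| \leq 5N/12$ for every two distinct $F, F' \in \mathcal{F}$, and $|\mathcal{F}| \geq 2^{cN}$ for an absolute constant $c > 0$. For a fixed $F$, the intersection with a uniformly random size-$N/2$ subset $F'$ is hypergeometric with mean $N/4$, and a standard Chernoff/Hoeffding tail bound gives $\Pr[|F \cap F'| > 5N/12] \leq 2^{-\Omega(N)}$. A greedy argument therefore produces an $\mathcal{F}$ of exponential size before the union of ``forbidden'' neighbors of the previously chosen sets exhausts $\binom{N}{N/2}$.

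Given $\mathcal{F}$, assign to each $F \in \mathcal{F}$ the input pair $(a^F, b^F) = (\mathbf{1}_F, \mathbf{1}_{F^c})$, so $a^F \cap b^F = \emptyset$ and each such pair is a YES instance. For any two distinct $F, F' \in \mathcal{F}$, the swapped pair has overlap $|F \cap F'^c| = |F| - |F \cap F'| \geq N/2 - 5N/12 = N/12$, which is a NO instance by the definition of the problem. Any deterministic protocol induces a partition of the input space into monochromatic combinatorial rectangles; if two of the YES instances $(a^F, b^F)$ and $(a^{F'}, b^{F'})$ were covered by the same rectangle, that rectangle would also contain $(a^F, b^{F'})$, contradicting monochromaticity. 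Hence the number of distinct protocol transcripts is at least $|\mathcal{F}| = 2^{\Omega(N)}$, so the communication cost is $\Omega(N)$.

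The main subtlety is that gap disjointness is a \emph{promise} problem, where the standard fooling-set theorem requires the swapped pairs to be genuine NO instances rather than merely ``not YES'': this is precisely why the threshold $5N/12$ in the construction of $\mathcal{F}$ is calibrated to the $N/12$ density that defines the NO side of the promise. Beyond this calibration, the remaining ingredients---the Chernoff constants and the greedy survival argument---are routine, so the argument hinges entirely on establishing the promised set system.
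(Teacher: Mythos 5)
Your proof is correct, and it takes essentially the same route as the source the paper cites: the paper defers Lemma~\ref{gap} to Example~5.5 of Kushilevitz--Nisan, which is exactly this kind of fooling-set argument built from a large family of size-$N/2$ sets with pairwise intersections bounded away from $N/2$, so that each swapped pair lands on the far side of the promise. You have merely filled in the details the citation would supply, including the key calibration ($|F\cap F'|\leq 5N/12$ forcing $|F\cap F'^{\,c}|\geq N/12$) that makes the fooling-set argument valid for the promise version rather than only for total disjointness.
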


For a proof of Lemma \ref{gap}, see example 5.5 in \cite{Kushilevitz:1996:CC:264772}, where it is shown that approximating the size of the intersection $|a \cap b|$ requires exchanging $\Omega(N)$ bits. The proof relies only on showing that distinguishing between disjoint inputs and inputs with intersection of more than $\frac{N}{6}$ bits is difficult (note that any such inputs have intersection of size at least $\frac{N}{12}$). Hence, the exact same proof shows that solving gap disjointness requires exchanging $\Omega(N)$ bits using a deterministic protocol.

In order to use set disjointness for the proof of Theorem~\ref{hardness}, it was necessary to devise a construction where each bit of the input affects many edges of the spanner, in order to argue that even if there is only one index $i$ such that $a_i = b_i =1$, then the players can correctly decide whether the inputs are disjoint by checking the size of the spanner. However, when we use gap disjointness, the players need to distinguish only between the case that the inputs are disjoint and the case that they are far from being disjoint, which allows much more flexibility and gives stronger lower bounds for the deterministic case.

\begin{lemma} \label{disj2}
Let $G=G(\ell,\beta)$ for $1 \leq \beta \leq \ell$, let $k \geq 5$ and let $c=7$.
If the input strings $a,b$ are disjoint, then there is a $k$-spanner of size at most $c \ell^2$. If the input strings are far from being disjoint, any $k$-spanner for $G$ includes at least $\frac{\beta^2}{12} \ell^2$ edges of $D$.
\end{lemma}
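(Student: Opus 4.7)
The plan is to mirror the structure of Lemma~\ref{disj} but exploit the different parameter regime $\beta \leq \ell$ together with the weaker ``far-from-disjoint'' guarantee that gap disjointness supplies.

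First, for the disjoint case, I would argue exactly as in Lemma~\ref{disj}: if $a \cap b = \emptyset$, then for every pair of indices $(i,r)$ at least one of $(x^1_i,x^2_r)$, $(y^1_i,y^2_r)$ is present in $G$, so by Claim~\ref{directed} every pair $x_{ij}, y_{rs}$ admits a directed path of length $5$ that avoids $D$. Hence the subgraph consisting of all edges of $G$ that are not in $D$ is a $5$-spanner, and a $k$-spanner for every $k \geq 5$. The only difference from Lemma~\ref{disj} is the edge count: the non-$D$ edges comprise the $2\ell$ matching edges between $X_1$ and $Y_1$, at most $2\ell^2$ input-dependent edges between $X_1$ and $X_2$ and between $Y_1$ and $Y_2$, the $2\ell\beta$ edges joining $X_2$ to $X_1$ and $Y_3$ to $Y_2$, and the $\ell$ edges $(y^2_i, y^3_i)$. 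Under $\beta \leq \ell$, the $2\ell^2$ input-dependent edges dominate, and the total is at most $2\ell^2 + 2\ell\beta + 3\ell \leq c\ell^2$ for $c = 7$, which is the claimed bound.

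Next, for the far-from-disjoint case, I would use the gap-disjointness hypothesis that there exist at least $\frac{N}{12} = \frac{\ell^2}{12}$ pairs of indices $(i,r)$ with $a_{ir} = b_{ir} = 1$. For each such pair, neither $(x^1_i,x^2_r)$ nor $(y^1_i,y^2_r)$ lies in $G$, so the second clause of Claim~\ref{directed} applies: for every $j,s$ with $1 \leq j,s \leq \beta$, the only directed path from $x_{ij}$ to $y_{rs}$ in $G$ is the single-edge path $(x_{ij}, y_{rs}) \in D$. Consequently, any $k$-spanner is forced to include all $\beta^2$ such edges of $D$ for that particular pair $(i,r)$, and these edge sets are disjoint across distinct pairs $(i,r)$ since the endpoints uniquely determine the pair. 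Summing over the $\geq \ell^2/12$ pairs gives at least $\frac{\beta^2 \ell^2}{12}$ edges of $D$ in any $k$-spanner.

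I do not anticipate a genuine obstacle: the heavy lifting was already done in Claim~\ref{directed}, and the remainder is a counting exercise. The only point to watch is bookkeeping of the parameter regime, namely that $\beta \leq \ell$ makes the input-dependent edges (not the $\ell\beta$-sized pieces) the bottleneck in the spanner size bound, and that distinct ``bad'' pairs $(i,r)$ contribute disjoint collections of forced $D$-edges so the $\beta^2$ contributions add without double counting.
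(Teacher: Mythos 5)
Your proof is correct and follows essentially the same route as the paper's: reuse the spanner-avoiding-$D$ construction from Lemma~\ref{disj} with a new edge count under $\beta \leq \ell$, and apply Claim~\ref{directed} to the $\geq \ell^2/12$ intersecting index pairs, each forcing $\beta^2$ distinct edges of $D$. Your remark about the forced edge sets being disjoint across pairs $(i,r)$ is a useful explicit observation that the paper leaves implicit.
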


\begin{proof}
If the input strings are disjoint, taking all the edges not in $D$ is a 5-spanner, as shown in the proof of Lemma \ref{disj}. These are at most $2\ell \beta + 2\ell^2+3\ell$ edges not in $D$, which is at most $c\ell^2$ since $\beta \leq \ell$ and $c=7$.
This is also a $k$-spanner for any $k \geq 5$.

If the input strings are far from being disjoint then there are at least $\frac{\ell^2}{12}$ pairs $(i,r)$ such that none of the edges $(x^1_i,x^2_r),(y^1_i,y^2_r)$ are in $G$.
Hence, by Claim \ref{directed}, there are at least $\frac{\ell^2}{12}$ pairs $(i,r)$ such that there is no directed path between the vertices $x_{ij},y_{rs}$ except for the path that consists of the edge $(x_{ij},y_{rs})$. For each such pair, we need to take all the directed edges $(x_{ij},y_{rs})$ to the spanner for all the values of $j$ and $s$, which means adding $\beta^2$ edges to the spanner. Summing over all the $\frac{\ell^2}{12}$ pairs, we get that any $k$-spanner must include at least $\frac{\beta^2}{12} \ell^2$ edges of $D$.
\end{proof}

Let $k \geq 5$ and let $A$ be a deterministic distributed $\alpha$-approximation algorithm for the minimum $k$-spanner problem. Denote by $T(n)$ the round complexity of $A$ on a graph with $n$ vertices. The following lemma adapts Lemma \ref{threshold} to the gap disjointness problem. Its proof is the same as the proof of Lemma \ref{threshold}, with the difference that now Alice concludes that the input strings are \textit{far from being disjoint} if and only if the constructed spanner has more than $\alpha(n) \cdot t$ edges of $D$. Also, now the lower bound holds only for the deterministic case, since it relies on the communication complexity of gap disjointness.

\begin{lemma} \label{threshold2}
Let $G=G(\ell,\beta)$. If there is a threshold $t$ such that if the input strings $a,b$ are disjoint then an optimal $k$-spanner of $G$ has at most $t$ edges, and if the input strings are far from being disjoint then each $k$-spanner of $G$ includes more than $\alpha(n) \cdot t$ edges of $D$. Then, $T(n) = \Omega(\frac{\ell}{\log{n}}).$
\end{lemma}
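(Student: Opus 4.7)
The plan is to mirror the proof of Lemma \ref{threshold} almost verbatim, swapping set disjointness for gap disjointness and using Lemma \ref{gap} in place of Lemma \ref{set_disj}. The only substantive change in the reduction is the decision rule Alice applies at the end of the simulation.

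First I would set up the simulation exactly as before. Given input strings $a,b \in \{0,1\}^{\ell^2}$, Alice and Bob construct $G = G(\ell,\beta)$ with the bits of $a$ determining the optional $(x^1_i,x^2_j)$ edges and the bits of $b$ determining the optional $(y^1_i,y^2_j)$ edges. Partition the vertex set as $V_B = Y_1$ and $V_A = V \setminus V_B$, so that every optional edge lies strictly inside $V_A$ or strictly inside $V_B$; hence Alice knows every edge touching $V_A$ and Bob knows every edge touching $V_B$. The cut between $V_A$ and $V_B$ still consists of the $2\ell$ matching edges between $X_1$ and $Y_1$ together with the $\ell$ edges between $Y_1$ and $Y_3$, i.e.\ $\Theta(\ell)$ edges. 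Running $A$ on $G$, each round can be simulated by exchanging $O(\ell\log n)$ bits across this cut, for a total communication of $O(T(n)\cdot \ell\cdot \log n)$.

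Next I would describe Alice's output rule. All of $D$ sits inside $V_A$, so after the simulation Alice knows exactly which edges of $D$ were placed into the spanner. She outputs ``disjoint'' if at most $\alpha(n)\cdot t$ edges of $D$ were taken, and ``far from disjoint'' otherwise. Correctness follows from the lemma's hypothesis together with the fact that $A$ is an $\alpha$-approximation: if $a,b$ are disjoint then the optimum has at most $t$ edges so $A$'s output has at most $\alpha(n)\cdot t$ edges (in particular at most $\alpha(n)\cdot t$ of them lie in $D$), while if $a,b$ are far from disjoint then \emph{every} $k$-spanner contains strictly more than $\alpha(n)\cdot t$ edges of $D$, so in particular $A$'s output does. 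In the remaining intermediate case gap disjointness allows any answer, so there is nothing to check.

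Finally I would invoke the communication lower bound. By Lemma \ref{gap}, any deterministic protocol for gap disjointness on strings of length $N=\ell^2$ requires $\Omega(\ell^2)$ bits of communication. Combined with the simulation cost, this yields
\[
T(n)\cdot \ell\cdot \log n \;=\; \Omega(\ell^2),
\]
and hence $T(n) = \Omega(\ell/\log n)$, as claimed. There is no real obstacle here beyond making sure the decision rule is phrased correctly for the gap regime; the structural work (the cut bound and the separation between disjoint and far-from-disjoint cases) is precisely what the graph construction $G(\ell,\beta)$ together with Lemma \ref{disj2} was designed to provide.
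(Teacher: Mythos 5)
Your proposal is correct and follows essentially the same route as the paper: the paper explicitly states that the proof of Lemma~\ref{threshold2} is the same as that of Lemma~\ref{threshold}, with Alice's decision rule changed to ``far from disjoint iff more than $\alpha(n)\cdot t$ edges of $D$ are in the spanner'' and the set-disjointness lower bound replaced by the deterministic gap-disjointness lower bound of Lemma~\ref{gap}. You have carried out exactly this modification, including the observation that $D$ lies entirely in $V_A$ so Alice can count its edges, so nothing further is needed.
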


Using Lemma \ref{disj2} and Lemma \ref{threshold2}, we show the following.

\begin{theorem} \label{deterministic}
Any deterministic distributed $\alpha$-approximation algorithm in the \congest model for the directed $k$-spanner problem for $k \geq 5$ takes $\Omega(\frac{n}{\sqrt{\alpha} \cdot \log{n}})$ rounds, for $1 \leq \alpha \leq \frac{n}{c'}$ for a constant $c' > 1$.
\end{theorem}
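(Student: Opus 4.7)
The plan is to mirror the proof of Theorem \ref{hardness} but to use the gap-disjointness machinery (Lemma \ref{disj2} and Lemma \ref{threshold2}) in place of its randomized counterparts, and to pick the parameters $\ell,\beta$ of $G(\ell,\beta)$ differently so that the bound on the number of rounds becomes $\widetilde{\Omega}(n/\sqrt{\alpha})$ rather than $\widetilde{\Omega}(\sqrt{n/\alpha})$.

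First I would set up the parameters. Given a target size $n'$ and writing $c=7$ as in Lemma \ref{disj2}, I would choose $\beta = \lceil \sqrt{12c\,\alpha(n')}\rceil + 1$ and then $\ell = \lfloor n'/(C\beta)\rfloor$ for a suitable absolute constant $C$, so that the resulting graph $G(\ell,\beta)$ satisfies $n = \Theta(\ell\beta) = \Theta(n')$. The requirement $1 \leq \beta \leq \ell$ of Lemma \ref{disj2} is what forces the hypothesis $\alpha \leq n/c'$: rearranging $\beta \leq \ell$ yields $\beta^2 \lesssim n'$, i.e., $\alpha(n') \lesssim n'$, which fixes a constant $c'>1$ such that the parameters are well-defined whenever $\alpha(n) \leq n/c'$.

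Second I would verify the threshold hypothesis of Lemma \ref{threshold2} for $t = c\ell^2$. By Lemma \ref{disj2}, if Alice's and Bob's inputs are disjoint then $G(\ell,\beta)$ admits a $k$-spanner with at most $c\ell^2 = t$ edges, whereas if they are far from disjoint then every $k$-spanner must contain at least $(\beta^2/12)\ell^2$ edges of $D$. My choice of $\beta$ guarantees $\beta^2 > 12c\,\alpha(n') \geq 12c\,\alpha(n)$ (using monotonicity and $n = \Theta(n')$, absorbing the constant into $c'$ if needed), which rearranges exactly to $(\beta^2/12)\ell^2 > \alpha(n) \cdot t$. Thus the gap between the optimum in the two cases exceeds the approximation factor $\alpha(n)$, so Lemma \ref{threshold2} applies and gives $T(n) = \Omega(\ell/\log n)$.

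Finally I would substitute back: since $\ell = \Theta(n/\beta) = \Theta(n/\sqrt{\alpha(n)})$, we obtain
\[
T(n) \;=\; \Omega\!\left(\frac{\ell}{\log n}\right) \;=\; \Omega\!\left(\frac{n}{\sqrt{\alpha(n)}\cdot\log n}\right),
\]
which is the claimed bound. The main subtlety, and the place I expect to spend most care, is the bookkeeping around the inequality $\beta \leq \ell$: it is what enables the ``dense side'' $D$ of the construction to dominate $n$ via $n=\Theta(\ell\beta)$ with $\beta$ as large as $\Theta(\sqrt{\alpha})$, but it also dictates the precise constant $c'$ appearing in the hypothesis $\alpha \leq n/c'$. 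Beyond that, the argument is a clean adaptation of the randomized proof, since Lemma \ref{threshold2} already packages the simulation across the $\Theta(\ell)$-sized cut between $V_A$ and $V_B$ and the $\Omega(N)$ deterministic lower bound for gap-disjointness from Lemma \ref{gap}.
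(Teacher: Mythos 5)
Your proposal is correct and mirrors the paper's proof essentially verbatim: the same parameter choices $\beta = \lceil\sqrt{12c\,\alpha(n')}\rceil + 1$ and $\ell = \lfloor n'/(c\beta)\rfloor$, the same threshold $t = c\ell^2$, the same invocation of Lemma~\ref{disj2} and Lemma~\ref{threshold2}, and the same bookkeeping that the constraint $\beta \leq \ell$ fixes the constant $c'$ in the hypothesis $\alpha \leq n/c'$. There is no gap; this is the paper's argument.
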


\begin{proof}
We construct the graph $G=G(\ell,\beta)$ with the following choice for the parameters $\ell, \beta$. Let $n'$ be a positive integer, and let $c=7$. Let $\beta = \lceil \sqrt{12\alpha(n') c} \rceil +1$, and let $\ell = \lfloor \frac{n'}{c \beta} \rfloor$.

The number of vertices in $G$ is $n = \Theta( \ell \beta) = \Theta(\frac{n'}{\beta} \beta) =\Theta(n').$ In addition, it holds that $n \leq c \ell \beta$ since the number of vertices in $G$ is $2 \ell \beta + 5 \ell$, which gives
$n \leq c \ell \beta \leq \frac{n'}{c \beta} c \beta = n'.$
In order to use Lemma \ref{disj2} we need to verify that $\beta \leq \ell.$ Note that $n = c_1 \ell \beta$ for a constant $2 \leq c_1 \leq c$. It follows that $\beta \leq \ell$ if and only if $c_1 \beta^2 \leq n$.
Since $\beta = \Theta(\sqrt{\alpha(n)}) = c_2 \sqrt{\alpha(n)}$ for a constant $c_2$, if we choose $c' = c_1 c_2^2$, we get that if $\alpha(n) \leq \frac{n}{c'}$, then $\beta \leq \sqrt{\frac{n}{c_1}}$, which gives $c_1 \beta^2 \leq n$ as needed.

We now define $t = c\ell^2$. By Lemma \ref{disj2}, if the input strings $a,b$ are disjoint, then there is a $k$-spanner of size at most $t = c \ell^2$. Otherwise, if the input strings are far from being disjoint, then any $k$-spanner for $G$ includes at least $\frac{\beta^2}{12} \ell^2$ edges of $D$. By the choice of $\beta$ and since $n \leq n'$, it holds that $12 \alpha(n) \cdot c < \beta^2$, which gives $\alpha(n) \cdot t = \alpha(n) \cdot c \ell^2 < \frac{\beta^2}{12} \ell^2$, which shows that $t$ satisfies the conditions of Lemma \ref{threshold2}.
Using Lemma \ref{threshold2} we get that $T(n) = \Omega(\frac{\ell}{\log{n}}).$
Note that now $\ell = \Theta(\frac{n}{\beta}) = \Theta(\frac{n}{\sqrt{\alpha(n)}})$, which shows that $T(n) = \Omega(\frac{n}{\sqrt{\alpha(n)} \cdot \log{n}}).$
\end{proof}

Theorem \ref{deterministic} shows that achieving a constant or a polylogarithmic approximation ratio for the directed $k$-spanner problem in the \congest model requires $\widetilde{\Omega}(n)$ rounds for any deterministic algorithm. In addition, even an approximation ratio of $n^{\epsilon}$ is hard, requiring $\widetilde{\Omega}(n^{1-\epsilon/2})$ rounds, for any $0 < \epsilon < 1$. Notably, even an approximation ratio of $\frac{n}{c}$ for appropriate values of $c$ is hard, requiring $\widetilde{\Omega}(\sqrt{n})$ rounds. This is to be contrasted with the fact that obtaining an approximation ratio of $n$ requires no communication, since any $k$-spanner has at least $n-1$ edges.

Theorem \ref{deterministic} separates the \local and the \congest models, since the deterministic network decomposition described in \cite{barenboim2016fast} gives a deterministic $O(n^{\epsilon})$-approximation for directed $k$-spanner for a constant $k$ in polylogarithmic time in the \local model.

It also separates the undirected and directed $k$-spanner problems for deterministic algorithms. Currently the best deterministic algorithm in the \congest model for the undirected $k$-spanner problem, is a recent algorithm \cite{GrossmanParter} which constructs $(2k-1)$-spanners of size $O(n^{1+1/k})$ in $O(n^{1/2-1/k})$ rounds for a constant even $k$ (in the \local nodel there is a $k$-round deterministic algorithm for this problem \cite{derbel2008locality}). This gives an $O(n^{1/k})$-approximation for undirected $(2k-1)$-spanners. Achieving the same approximation for the directed problem requires $\widetilde{\Omega}(n^{1-1/{2k}})$ rounds according to Theorem \ref{deterministic}.

\subsection{Weighted $k$-spanner} \label{sec:weighted}

We extend our construction to the weighted case, showing that any approximation for the weighted $k$-spanner in the \congest model takes $\widetilde{\Omega}(n)$ rounds for $k \geq 4$, even for randomized algorithms. A similar result holds for the weighted undirected case. In the weighted case, rather than guaranteeing that each input bit affects many edges of the spanner, we simply assign weight 0 to all the edges that are not in $D$ and weight 1 to all the edges of $D$. Hence, taking even a single edge from $D$ is very expensive if we can avoid it. This allows us to show a simpler construction, obtaining a stronger lower bound for the weighted case, as follows.

We build a graph $G_w(\ell) = G_w = (V_w, E_w)$ which is the same as $G$, except for the following differences (see Figure \ref{alice_bob_w}). We define $\beta = 1$, and change the set of vertices to be $V_w = V \setminus Y_3$. Since $\beta =1$, the vertices in $X_2$ and $Y_2$ are only of the form $x_{i1},y_{i1}$ for $1 \leq i \leq \ell$. We change their names from $x_{i1},y_{i1}$ to $x_i$ and $y_i$, respectively.
For each $1 \leq i \leq \ell$ we replace the two edges $(y^2_i,y^3_i),(y^3_i,y_{i1}) \in E$ by the edge $(y^2_i,y_{i}) \in E_w.$
Since $\beta =1$, the size of the cut between $Y_1$ and the rest of the graph is still $\Theta(\ell)$.

\setlength{\intextsep}{0pt}
\begin{figure}[h]
\centering
\setlength{\abovecaptionskip}{-2pt}
\setlength{\belowcaptionskip}{8pt}
\includegraphics[scale=0.5]{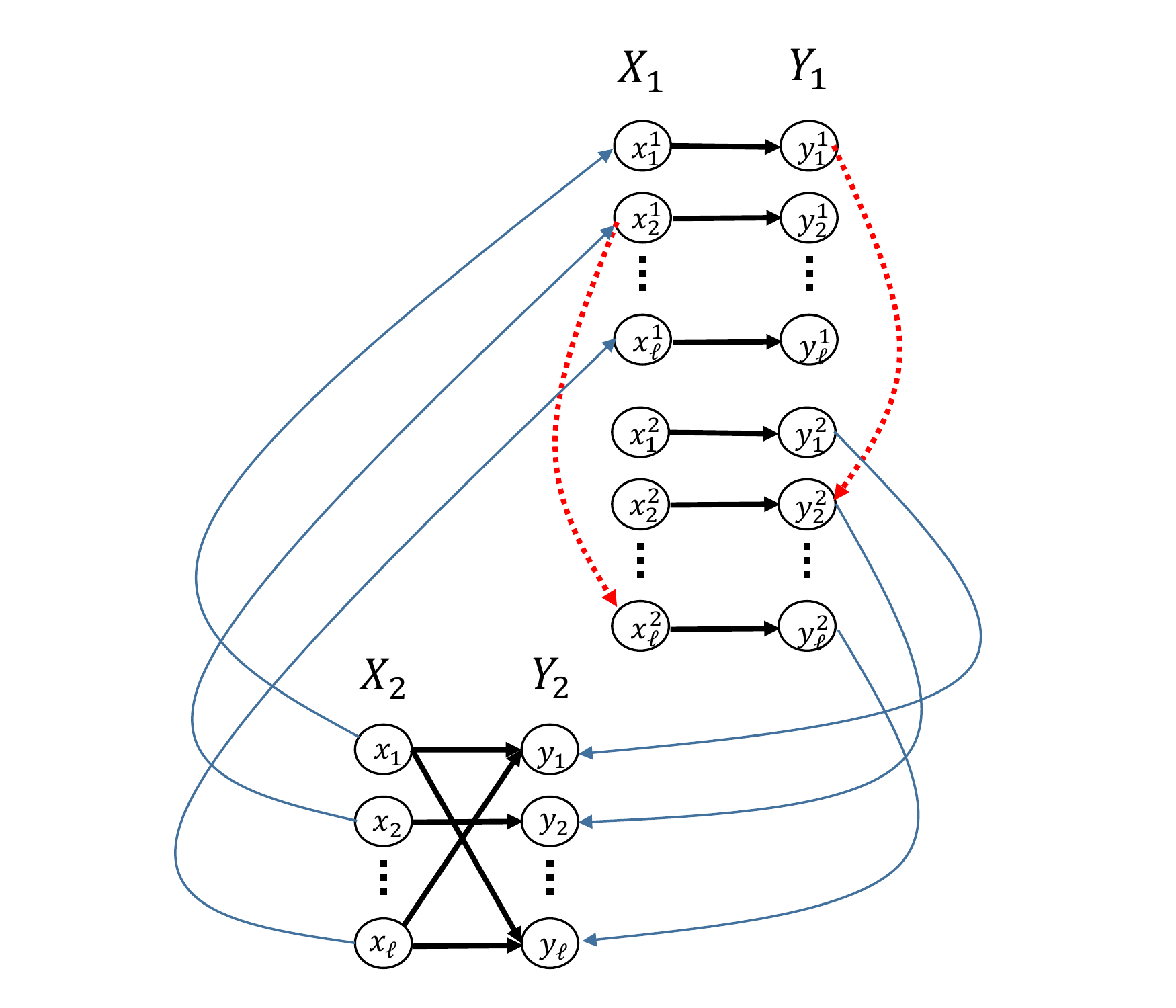}
 \caption{The graph $G_w$, with some of its edges omitted for clarity. The red dashed edges are examples of optional edges which depend on the input strings $a$ and $b$.}
\label{alice_bob_w}
\end{figure}

The following theorem states our lower bound for the weighted directed case.

\begin{theorem}
Any (perhaps randomized) distributed $\alpha$-approximation algorithm in the \congest model for the weighted directed $k$-spanner problem for $k \geq 4$ takes $\Omega(\frac{n}{\log{n}})$ rounds.
\end{theorem}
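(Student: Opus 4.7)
The plan is to mirror the reduction used for Theorem \ref{hardness}, with two modifications that exploit the weighted setting. First, I would establish a structural claim analogous to Claim \ref{directed} for $G_w$: since the two-edge detour $y^2_r \to y^3_r \to y_{r1}$ has been replaced by the single edge $(y^2_r, y_r)$, any directed path from $x_i$ to $y_r$ that avoids the edges of $D$ must start with $(x_i, x^1_i)$ and end with $(y^2_r, y_r)$, so it suffices to exhibit a length-$2$ path from $x^1_i$ to $y^2_r$. Exactly as in the proof of Claim \ref{directed}, such a length-$2$ path exists iff one of $(x^1_i, x^2_r)$ or $(y^1_i, y^2_r)$ is present in $G_w$, i.e.\ iff $a_{ir} \cdot b_{ir} = 0$; and if both bits are $1$, then there is no directed path at all from $x^1_i$ to $y^2_r$, so the direct edge $(x_i, y_r)$ is the only directed path from $x_i$ to $y_r$. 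Hence a spanning length of $k \geq 4$ now suffices, which is what makes the bound apply to $k \geq 4$ rather than $k \geq 5$.

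Next I would exploit the weights to obtain an \emph{unbounded} approximation gap. Since every edge outside $D$ has weight $0$ and every edge of $D$ has weight $1$, a disjoint pair of input strings yields a $k$-spanner of weight $0$ by simply taking all weight-$0$ edges: by the structural claim, every edge $(x_i,y_r) \in D$ is covered by a length-$4$ directed path using only weight-$0$ edges. Conversely, if the inputs are not disjoint, there is some pair $(i,r)$ with $a_{ir} = b_{ir} = 1$, and then $(x_i, y_r) \in D$ cannot be covered by any other directed path, so it must itself belong to every $k$-spanner, yielding weight at least $1$. Thus any $\alpha$-approximation algorithm for weighted directed $k$-spanner, for any finite $\alpha$, returns a spanner of weight $0$ in the disjoint case and weight at least $1$ in the non-disjoint case.

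Third, I would set up the simulation exactly as in Lemma \ref{threshold}. Set $V_B = Y_1$ and $V_A = V_w \setminus V_B = X_1 \cup X_2 \cup Y_2$. The $a$-dependent edges $(x^1_i, x^2_r)$ are internal to $V_A$ and the $b$-dependent edges $(y^1_i, y^2_r)$ are internal to $V_B$ (since both $y^1_i$ and $y^2_r$ lie in $Y_1$), so Alice and Bob each know all edges incident to their own side. The cut between $V_A$ and $V_B$ consists only of the $2\ell$ matching edges $(x^1_i, y^1_i), (x^2_i, y^2_i)$ and the $\ell$ edges $(y^2_i, y_i)$, a total of $\Theta(\ell)$ edges that are independent of the inputs. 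Therefore one round of the purported algorithm $A$ can be simulated with $O(\ell \log n)$ bits of communication, and the full execution with $O(T(n) \cdot \ell \log n)$ bits. Crucially, all weight-$1$ edges lie inside $V_A$ (both endpoints in $X_2 \cup Y_2 \subset V_A$), so at the end of the simulation Alice knows the total weight of the spanner produced by $A$ and can output ``disjoint'' iff this weight is $0$, which by the gap above is a correct protocol for set disjointness on inputs of length $\ell^2$.

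Finally, combining with Lemma \ref{set_disj} gives $T(n) \cdot \ell \log n = \Omega(\ell^2)$, i.e.\ $T(n) = \Omega(\ell/\log n)$. Since $n = |V_w| = 6\ell = \Theta(\ell)$, this yields the claimed bound $T(n) = \Omega(n/\log n)$. The main obstacle I anticipate is simply verifying the shortened path-length argument carefully (making sure the removal of $Y_3$ does not open up an unintended directed path in the non-disjoint case and that the $k=4$ bound is tight for the upper side), together with checking that the cut size remains $\Theta(\ell)$ after the $Y_3$-contraction so that the per-round simulation cost is $O(\ell \log n)$; the rest is essentially a repackaging of the argument of Theorem \ref{hardness}.
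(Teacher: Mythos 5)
Your proposal is correct and follows essentially the same route as the paper's own proof: you establish the same structural claim that a length-$4$ weight-$0$ path from $x_i$ to $y_r$ exists iff one of $(x^1_i,x^2_r),(y^1_i,y^2_r)$ is present, derive the same $0$ versus $\geq 1$ weight gap so that any finite-$\alpha$ approximation distinguishes disjoint from non-disjoint inputs, and use the same Alice/Bob simulation over the $\Theta(\ell)$-edge cut with $V_B = Y_1$ to conclude $T(n) = \Omega(\ell/\log n) = \Omega(n/\log n)$. The concerns you flagged at the end (no unintended long weight-$0$ detours after contracting $Y_3$, and the cut remaining $\Theta(\ell)$) are exactly the points the paper's proof handles and they check out.
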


\begin{proof}
Let $n= 6\ell$ for a positive integer $\ell$, and let $G_w = G_w(\ell)$. Note that the number of vertices in $G_w$ is exactly $n$.
There is a 4-spanner of cost $0$ for $G_w$ if and only if there is a path of length at most $4$ of edges of weight $0$ between every pair of vertices $x_{i},y_{j}$. A path of length at most $4$ between $x_{i}$ and $y_{j}$ that includes only edges of weight 0, must start with the edge $(x_{i},x^1_i)$ and must end with the edge $(y^2_j,y_{j})$. Following the proof of Claim \ref{directed}, we argue that there is such a path if and only if one of the edges $(x^1_i,x^2_j),(y^1_i,y^2_j)$ is in $G_w$. Otherwise, there is no directed path of weight 0 between $x_{i}$ and $y_{j}$.

It follows that for every $k \geq 4$, there is a $k$-spanner of cost 0 for $G_w$ if and only if the inputs $a$ and $b$ are disjoint. Hence, a distributed $\alpha$-approximation algorithm $A$ for the weighted $k$-spanner problem can be used to solve set disjointness: we define $V_B = Y_1, V_A = V \setminus V_B$ and let Alice and Bob simulate the algorithm on $G_w$ as before. At the end of the simulation, Alice concludes that the inputs are disjoint if and only if none of the edges of $D$ are taken to the spanner.

If the inputs are disjoint, then there is a spanner of cost 0. Hence, for any $\alpha \geq 1$, an $\alpha$-approximation must return a spanner of cost 0 if such exists. Otherwise, any spanner must include at least one of the edges of $D$ which proves that the output of Alice is indeed correct.

As in the proof of Lemma \ref{threshold}, we get that $T(n) = \Omega(\frac{\ell}{\log{n}})$. Since $n = 6 \ell$, this gives $T(n) = \Omega(\frac{n}{\log{n}})$.
\end{proof}

We prove a similar bound for the weighted undirected $k$-spanner problem for $k \geq 4$.
In the undirected case, we would like to construct a similar graph $G_w$, with only modifying all of its edges to be undirected. It would still hold that there is a path of length at most $4$ of edges of weight $0$ between the vertices $x_{i},y_{j}$ if and only if one of the edges $\{x^1_i,x^2_j\},\{y^1_i,y^2_j\}$ is in $G_w$, following the same proof. However, since the edges are undirected, there may be a path of length longer than $4$ of edges of weight 0 between the vertices $x_{i},y_{j}$, even if none of the edges $\{x^1_i,x^2_j\},\{y^1_i,y^2_j\}$ is in $G_w$, which requires us to modify our construction in order for our bounds to apply also for $k>4$.

We change the construction of $G_w$ as follows. For each $1 \leq i \leq \ell$ we replace the edge $\{y^2_i, y_{i} \}$ by a path of length $k-3$, by adding to the graph $k-4$ vertices $y^3_i,...,y^{k-2}_i$ and the required edges for constructing the path $\{y^2_i,y^3_i,...,y^{k-2}_i,y_{i}\}$. All of the edges of the path have weight 0.

Any path of length at most $k$ of edges of weight 0 between $x_{i}$ to $y_{j}$ must start with the edge $\{x_{i},x^1_i\}$ and must end with the path of length $k-3$ that we added between $y^2_j$ to $y_{j}$. Hence, there is a path of length at most $k$ between $x_{i}$ to $y_{j}$ of edges of weight 0 if and only if there is a path of length 2 between $x^1_i$ and $y^2_j$. This can only happen if one of the edges $\{x^1_i,x^2_j\},\{y^1_i,y^2_j\}$ is in the graph.

The rest of the proof is exactly the same as in the directed case. However, we added $\Theta(k\ell)$ vertices to the graph. Hence, the number of vertices in the graph is $n = \Theta(k \ell)$ and not $\Theta(\ell)$ as before, which gives $\ell = \Theta(\frac{n}{k})$. This allows us to prove a lower bound of $\widetilde{\Omega}(\frac{n}{k})$ to the undirected problem (which is still $\widetilde{\Omega}(n)$ for small values of $k$).

\begin{theorem}
Any (perhaps randomized) distributed $\alpha$-approximation algorithm in the \congest model for the weighted undirected $k$-spanner problem for $k \geq 4$ takes $\Omega(\frac{n}{k \cdot \log{n}})$ rounds.
\end{theorem}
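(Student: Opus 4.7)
The plan is to adapt the directed construction to the undirected setting, reusing the overall reduction-from-disjointness strategy already developed in this section. I would start by taking the graph $G_w(\ell)$ from the directed weighted case and replacing all directed edges by undirected ones, keeping the weight assignment (weight $1$ on the edges of the complete bipartite subgraph $D$, weight $0$ elsewhere). Just as in the directed case, I want a $k$-spanner of cost $0$ to exist if and only if Alice's and Bob's inputs are disjoint; this will force any $\alpha$-approximation algorithm to decide disjointness.

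The obstacle, which the author already flags, is that undirectedness destroys the tight characterization of cheap paths from $x_i$ to $y_j$. In the directed case, a weight-$0$ path from $x_i$ to $y_j$ must use the edge $\{x_i,x^1_i\}$ first and $\{y^2_j, y_j\}$ last, leaving only $2$ slack edges to bridge $x^1_i$ with $y^2_j$, which happens exactly when one of the input-dependent edges $\{x^1_i,x^2_j\}$ or $\{y^1_i,y^2_j\}$ is present. Once edges are undirected, cheap detours of length greater than $4$ could in principle bridge $x^1_i$ and $y^2_j$ even without the input edges, breaking the equivalence for $k \geq 5$. To kill these detours, I would lengthen the attachment of $y_i$ to $y^2_i$: replace the edge $\{y^2_i, y_i\}$ by a weight-$0$ path $y^2_i, y^3_i, \ldots, y^{k-2}_i, y_i$ of length exactly $k-3$, using $k-4$ fresh internal vertices. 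Then any weight-$0$ path of length $\leq k$ from $x_i$ to $y_j$ must spend $1$ edge on $\{x_i,x^1_i\}$ and $k-3$ edges on this gadget ending at $y_j$, leaving only $2$ edges to connect $x^1_i$ to $y^2_j$, and the same case analysis as in Claim~\ref{directed} shows this two-edge connection exists iff an input-dependent edge is there.

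With this structural lemma in hand, the rest is the now-standard simulation argument. Define $V_B = Y_1$ and $V_A = V_w \setminus V_B$; Alice knows the edges inside $V_A$ (including those controlled by $a$) and Bob knows those inside $V_B$ (including those controlled by $b$). Because $\beta = 1$, the cut between $V_A$ and $V_B$ still has $\Theta(\ell)$ edges, so one round of any \congest algorithm can be simulated with $O(\ell \log n)$ bits. A $k$-spanner of cost $0$ exists iff $a$ and $b$ are disjoint, so any $\alpha$-approximation $A$ with round complexity $T(n)$ yields a protocol for set-disjointness on inputs of length $\ell^2$ using $O(T(n)\cdot \ell \log n)$ bits; combined with the $\Omega(\ell^2)$ lower bound of Lemma~\ref{set_disj} this gives $T(n) = \Omega(\ell/\log n)$.

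Finally I would account for the new vertices. Adding the length-$(k-3)$ gadget contributes $\Theta(k)$ vertices per index $i$, so the total vertex count becomes $n = \Theta(k\ell)$, i.e.\ $\ell = \Theta(n/k)$. Substituting yields $T(n) = \Omega\!\left(\frac{n}{k \log n}\right)$, matching the statement of the theorem. The only step that requires genuine thought is the path-gadget lemma replacing Claim~\ref{directed} in the undirected setting; everything else is a direct transcription of the directed weighted proof.
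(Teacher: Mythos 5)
Your proposal matches the paper's proof essentially step for step: undirect the weighted gadget, replace $\{y^2_i,y_i\}$ with a weight-$0$ path of length $k-3$ on $k-4$ fresh internal vertices so that any cheap $x_i$--$y_j$ path is forced through $\{x_i,x^1_i\}$ and the full gadget, leaving exactly two edges for the $x^1_i$--$y^2_j$ hop, then rerun the cut-size/simulation argument and account for the $\Theta(k\ell)$ vertices to get $\ell = \Theta(n/k)$ and hence $T(n)=\Omega(n/(k\log n))$. This is the same construction and the same analysis as in the paper.
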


\section{Hardness of approximation of weighted 2-spanner} 
\label{sec:MVC}

In this section, we show that approximating the weighted 2-spanner problem is at least as hard as approximating the (unweighted) minimum vertex cover (MVC) problem. Therefore, known lower bounds for MVC translate directly to lower bounds for the weighted 2-spanner problem.

In the MVC problem the input is a graph $G=(V,E)$ and the goal is to find a minimum set of vertices $C$ that covers all the edges. That is, it is required that for each edge $e=\{v,u\}$, at least one of $v$ and $u$ is in $C$.

Let $G=(V,E)$ be an input graph to the MVC problem. We construct a new graph $G_S=(V_S,E_S)$ in the following way (see Figure \ref{vc}). For each vertex $v \in V$, there are 3 vertices in $V_S$: $v_1,v_2,v_3$. We connect these 3 vertices by a triangle, where the edge $\{v_1,v_2\}$ has weight 1, and the edges $\{v_1,v_3\},\{v_2,v_3\}$ have weight 0.
In addition, for each edge $\{v,u\} \in E$, there are 3 edges in $E_S$: $\{v_1,u_1\}, \{v_2,u_2\}$, both having weight 0, and one of the edges $\{v_1,u_2\},\{u_1,v_2\}$, according to the order of the IDs of $v$ and $u$, having weight 2.

\setlength{\intextsep}{2pt}
\begin{figure}[h]
\centering
\setlength{\abovecaptionskip}{-6pt}
\setlength{\belowcaptionskip}{8pt}
\includegraphics[scale=0.5]{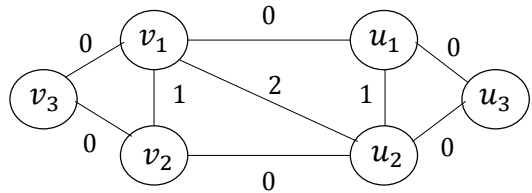}
\caption{For each vertex $v \in G$, there is a corresponding triangle in $G_S$ between the vertices $v_1,v_2,v_3$. The edge $\{v,u\} \in G$ has 3 corresponding edges in $G_S$: $\{v_1,u_1\}, \{v_2,u_2\}$, $\{v_1,u_2\}$.} \label{vc}
\end{figure}

We show that a solution for the weighted 2-spanner problem in $G_S$ gives a solution for MVC in $G$.

\begin{claim} \label{reduction}
The cost of the minimum 2-spanner in $G_S$ is exactly the size of the minimum vertex cover in $G$.
\end{claim}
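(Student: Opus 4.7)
}

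The plan is to prove two inequalities, showing $\sigma(G_S) \le \tau(G)$ and $\tau(G) \le \sigma(G_S)$, where $\tau(G)$ is the size of a minimum vertex cover of $G$ and $\sigma(G_S)$ is the cost of a minimum 2-spanner of $G_S$. Throughout, I use the observation that since weight-0 edges are free, every minimum 2-spanner may be assumed, without loss of generality, to contain \emph{all} weight-0 edges of $G_S$.

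First I would prove $\sigma(G_S) \le \tau(G)$. Given a vertex cover $C$ of $G$, I construct the candidate spanner $H = \{e \in E_S : w(e)=0\} \cup \{\{v_1,v_2\} : v \in C\}$, whose weight is exactly $|C|$. It remains to verify that $H$ covers every edge of $G_S$. Weight-0 edges cover themselves. For a weight-1 triangle edge $\{v_1,v_2\}$ that is not in $H$, the path $v_1 \to v_3 \to v_2$ lies entirely in $H$. For a weight-2 edge, say $\{v_1,u_2\}$ corresponding to $\{v,u\}\in E$, the fact that $C$ is a vertex cover gives $v\in C$ or $u\in C$; in the first case the path $v_1 \to v_2 \to u_2$ is in $H$ (using the weight-0 edge $\{v_2,u_2\}$), and in the second case $v_1 \to u_1 \to u_2$ is in $H$ (using the weight-0 edge $\{v_1,u_1\}$).

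For the harder direction $\tau(G) \le \sigma(G_S)$, I start from a 2-spanner $H$ of $G_S$ containing all weight-0 edges. Define $C = \{v \in V : \{v_1,v_2\} \in H\}$ and $W = \{e \in E : \text{the weight-2 edge corresponding to } e \text{ is in } H\}$, so that $w(H) = |C| + 2|W|$. Let $V(W)\subseteq V$ be the set of endpoints of edges in $W$; since $|V(W)|\le 2|W|$, the set $C' = C \cup V(W)$ satisfies $|C'| \le |C|+2|W| = w(H)$. I then argue that $C'$ is a vertex cover of $G$ by a case analysis on an arbitrary edge $\{v,u\}\in E$: if $\{v,u\}\in W$ then $v,u\in V(W)\subseteq C'$ and we are done, so assume its weight-2 edge (say $\{v_1,u_2\}$) is not in $H$ and must therefore be 2-spanned by $H$. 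Enumerating the length-2 paths from $v_1$ to $u_2$ in $G_S$ and using that $v_3,u_3$ have degree 2, the intermediate vertex of any such path must be one of $v_2$, $u_1$, or $w_1/w_2$ for some other vertex $w$. The first two possibilities force $v\in C$ or $u\in C$ respectively; the remaining two possibilities use another weight-2 edge incident to $u_2$ or $v_1$, which places either $u$ or $v$ into $V(W)$. In every case one of $v,u$ lies in $C'$, so $C'$ covers $\{v,u\}$.

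The main obstacle will be this last case analysis, specifically ruling out or correctly charging the 2-spanning paths of a weight-2 edge that go through another triangle, i.e., through a vertex $w_1$ or $w_2$ with $w\notin\{v,u\}$. The key observation that makes this go through is that such a path necessarily uses a weight-2 edge corresponding to some edge of $E$ incident to $v$ or $u$ in $G$, which is paid for by the $2|W|$ term in $w(H)$ via the inclusion of its endpoints in $V(W)$. Combining both inequalities yields $\sigma(G_S)=\tau(G)$, completing the proof.
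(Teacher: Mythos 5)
Your proof is correct, and the first direction $\sigma(G_S) \le \tau(G)$ coincides with the paper's construction. The harder direction is handled by a genuinely different route. The paper first \emph{normalizes} $H$ to a spanner $H'$ of no greater cost that uses only weight-$0$ and weight-$1$ edges, by replacing every weight-$2$ edge $\{v_1,u_2\} \in H$ with the two weight-$1$ edges $\{v_1,v_2\}$ and $\{u_1,u_2\}$; it then proves (via its own case analysis) that $H'$ is still a $2$-spanner and reads off the vertex cover as $C_H = \{v : \{v_1,v_2\} \in H'\}$, getting the clean equality $|C_H| = w(H')$. You bypass the normalization and charge each retained weight-$2$ edge directly to \emph{both} of its endpoints, setting $C' = C \cup V(W)$ with only the inequality $|C'| \le |C| + 2|W| = w(H)$ --- which is all the claim needs. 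The trade-off is as follows: the paper's route keeps the accounting exact and isolates the delicate part inside the lemma ``$H'$ is still a spanner,'' while yours avoids proving that a modified edge set remains a spanner, at the price of a looser bound and of enumerating the length-$2$ paths in the \emph{original} $H$. Your enumeration is correct: the paths from $v_1$ to $u_2$ through $v_2$ or $u_1$ put $v$ or $u$ into $C$, the paths through $w_1$ or $w_2$ (for $w\notin\{v,u\}$) necessarily traverse a weight-$2$ edge incident to $u_2$ or $v_1$ respectively, landing $u$ or $v$ in $V(W)$, and the degree-$2$ vertices $v_3,u_3$ support no cross-triangle path. Both arguments are valid; yours is somewhat more direct.
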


\begin{proof}
Let $C$ be a minimum vertex cover in $G$. We construct a 2-spanner $H_C$ for $G_S$ as follows. First, $H_C$ includes all the edges having weight 0. In addition, for every $v \in C$, we add to $H_C$ the edge $\{v_1,v_2\}$. Note that these edges have weight 1, and all the other edges we add to $H_C$ have weight 0, hence, the cost of $H_C$ is exactly $|C|$. We now show that $H_C$ is a 2-spanner. All the edges having weight 0 in $G_S$ are added to the spanner, and hence they are covered. All the edges having weight 1 are covered by edges of weight 0, since an edge $\{v_1,v_2\}$ is covered by the path $\{v_1,v_3\},\{v_3,v_2\}$. Let $\{v_1,u_2\}$ be an edge of weight 2 in $G_S$, and let $\{v,u\}$ be the corresponding edge in $G$. Since $C$ is a vertex cover, at least one of the vertices $v,u$ is in $C$. In the former case, we add $\{v_1,v_2\}$ to $H_C$, hence, the edge $\{v_1,u_2\}$ is covered by the path $\{v_1,v_2\}, \{v_2,u_2\}$ (note that $\{v_2,u_2\}$ has weight 0 and is included in $H_C$). In the latter case, $\{v_1,u_2\}$ is covered by the path $\{v_1,u_1\},\{u_1,u_2\}$. Hence, $H_C$ is a 2-spanner having cost $|C|$.

In the other direction, let $H$ be a minimum cost 2-spanner in $G_S$ having cost $w(H)$. We construct a vertex cover $C_H$ in $G$ with size $w(H)$. We start by converting $H$ into a 2-spanner $H'$ with the same cost. First, $H'$ contains all the edges having weight 0 in $G_S$ and all the edges having weight 1 in $H$. In addition, if $H$ includes an edge $\{v_1,u_2\}$ having weight $2$, we replace it in $H'$ by the two edges $\{v_1,v_2\},\{u_1,u_2\}$, each having weight 1. This transformation clearly cannot increase the cost. We next show that $H'$ is still a 2-spanner. Since $H'$ includes all the edges having weight 0 in $G_S$, it covers all the edges of weight 0 or 1 by edges of weight 0, as explained above. In addition, any edge of weight 2 that is covered in $H$ by a path of length 2 that includes only edges of weight 0 or 1, is covered in $H'$ in the same way.
Let $e = \{x_1,y_2\}$ be an edge of weight 2, covered in $H$ by a path $P$ of length at most 2 that includes the edge $e' = \{v_1,u_2\} \in H$ having weight 2 ($e'$ may be different than $e$). It holds that $e' \not \in H'$, since $e'$ has weight 2, hence, we added the edges $\{v_1,v_2\}$ and $\{u_1,u_2\}$ to $H'$. Since $P$ has length at most 2, it follows that $x_1 = v_1$ or $y_2 = u_2$. In the first case, $e=\{v_1,y_2\}$ and the path $\{v_1,v_2\},\{v_2,y_2\}$ covers $e$ (we added $\{v_1,v_2\}$ to $H'$, and since $\{v_1,y_2\} \in E_S$, then $\{v_2,y_2\}$ is also in $E_S$ and has weight 0). In the second case, $e=\{x_1,u_2\}$ and $\{x_1,u_1\},\{u_1,u_2\}$ is a path of length 2 that covers $e$ in $H'$.

Therefore, $H'$ is a 2-spanner with the same cost of $H$. We define $C_H = \{v| \{v_1,v_2\} \in H' \}$. The size of $C_H$ is exactly $w(H')$ since the edges $\{v_1,v_2\} \in H'$ are exactly all the edges of $H'$ having weight 1, and $H'$ includes only edges of weight 0 or 1. In addition, we claim that $C_H$ is a vertex cover. Let $\{v,u\} \in E$, then one of the edges $\{v_1,u_2\}$ or $\{u_1,v_2\}$ is in $E_S$. Assume w.l.o.g that $e=\{v_1,u_2\} \in E_S$. Note that $e \not \in H'$ since it has weight 2. Since $H'$ is a 2-spanner it includes a path of the form $\{v_1,x\},\{x,u_2\}$ that covers $e$. It must hold that $x = v_2$ or $x = u_1$ (if, for example, $x = w_1$ such that $w_1 \neq u_1$, then the edge $\{w_1,u_2\}$ has weight 2 and is not in $H'$). Hence, at least one of the edges $\{v_1,v_2\},\{u_1,u_2\}$ is in $H'$, which means that at least one of $v,u$ is in $C_H$ as needed.

In conclusion, the cost of a minimum 2-spanner in $G_S$ is exactly the size of the minimum vertex cover in $G$.
\end{proof}

We can now relate the above to the number of rounds required for distributed algorithms that solve or approximate these two problems.

\begin{lemma} \label{reduction2}
Let $A$ be a distributed $\alpha$-approximation algorithm for the weighted 2-spanner problem that takes $T(n)$ rounds on a graph with $n$ vertices. Then there is an $\alpha$-approximation algorithm for MVC that takes $3T(3n)$ rounds on a graph with $n$ vertices.
\end{lemma}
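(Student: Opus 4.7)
The approach is to simulate algorithm $A$ on the reduction graph $G_S$ from Claim \ref{reduction}, having each vertex $v \in V(G)$ locally simulate the three corresponding vertices $v_1, v_2, v_3$ of $G_S$. Since the construction of $G_S$ depends only on the ID-ordered neighborhood of each original vertex, $v$ can deterministically determine all edges incident to $v_1, v_2, v_3$ (including the orientation of the weight-$2$ edges) from purely local information, so setting up the simulation requires no communication in $G$.

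The key step is bounding the per-round simulation cost. Messages that $A$ sends along the three triangle edges $\{v_1, v_2\}, \{v_1, v_3\}, \{v_2, v_3\}$ are internal to the simulation at $v$ and are free. For each edge $\{v, u\} \in E(G)$ there are exactly three edges of $G_S$ between the triangles of $v$ and $u$, and each round of $A$ sends at most one $O(\log n)$-bit message per such edge. These three messages can be serialized over the single edge $\{v, u\}$ in $G$ using $3$ rounds of \congest (in \local a single round trivially suffices). Therefore one round of $A$ on $G_S$ is simulated by at most $3$ rounds in $G$, and since $|V(G_S)| = 3n$ the total round complexity is $3T(3n)$.

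To extract the vertex cover, each vertex $v$ inspects the spanner edges $A$ produced that are incident to its simulated vertices, and performs the $H \to H'$ transformation from the proof of Claim \ref{reduction} locally: whenever a weight-$2$ edge $\{v_1, u_2\}$ (or $\{u_1, v_2\}$) lies in $H$, $v$ records $\{v_1, v_2\}$ as belonging to $H'$ instead. Vertex $v$ then outputs that it joins $C_H$ if and only if $\{v_1, v_2\} \in H'$. By Claim \ref{reduction}, $|C_H| = w(H') \leq w(H) \leq \alpha \cdot \mathrm{OPT}(G_S)$, and $\mathrm{OPT}(G_S)$ equals the minimum vertex cover size of $G$, giving an $\alpha$-approximation.

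The main point to verify is that the post-processing is indeed local: every weight-$2$ edge counted toward adding $\{v_1, v_2\}$ to $H'$ is incident to $v_1$ or $v_2$ and is therefore directly observable at $v$ after the simulation, so no further communication is needed. I do not expect any serious obstacle; the proof is essentially a careful accounting of the simulation overhead and a check that both the construction of $G_S$ and the recovery of the vertex cover from $H$ commute with the locality of $G$.
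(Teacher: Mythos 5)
Your proof is correct and follows essentially the same approach as the paper's: locally simulate the three triangle vertices of $G_S$ at each original vertex, serialize the (at most) three messages per original edge into three \congest rounds, then locally apply the $H\to H'$ transformation and membership test from Claim~\ref{reduction}. You add useful explicit observations — that the construction of $G_S$ (including the ID-based orientation of weight-$2$ edges) and the post-processing are both determined by purely local information — which the paper leaves implicit, but the argument is the same.
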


\begin{proof}
We describe an algorithm $A_{MVC}$ that approximates MVC. Let $G$ be an input graph for MVC. The algorithm $A_{MVC}$ simulates $A$ on the graph $G_S$, in the following way. Each vertex $v \in V$ simulates $A_{MVC}$ on the vertices $v_1,v_2,v_3$. Each time a message is sent on one of the 3 edges corresponding to an edge $\{v,u\} \in E$, we send this message over the edge $\{v,u\} \in E$. Since we may need to send 3 different messages on this edge, each round of $A$ can be simulated in three rounds of $A_{MVC}$.\footnote{In the \local model we can send these 3 messages in one round. However, we spend three different rounds in order for the simulation to work also in the \congest model.} When $A$ finishes, we convert the solution $H$ to a vertex cover $C_H$ as described in the proof of Claim \ref{reduction}, without any communication. From Claim \ref{reduction}, it follows that if $H$ is an $\alpha$-approximation for the weighted 2-spanner problem in $G_S$, then $C_H$ is an $\alpha$-approximation for MVC in $G$. Let $n$ be the number of vertices in $G$. The number of vertices in $G_S$ is $3n$ from the definition of $G_S$, hence the time complexity of simulating $A$ on $G_S$ is $3T(3n)$.
\end{proof}

Lemma~\ref{reduction2} shows that if $A$ works in the \congest model, then $A_{MVC}$ works in the \congest model as well. Hence, lower bounds for approximating MVC in both the \local and the \congest models give lower bounds for the weighted 2-spanner problem. This gives the following results.

\begin{theorem} \label{local}
To obtain a constant or a polylogarithmic approximation ratio for the weighted 2-spanner problem, even in the \local model, there are graphs on which every distributed algorithm requires at least $\Omega(\frac{\log{\Delta}}{\log{\log{\Delta}}})$ rounds and $\Omega(\sqrt{\frac{\log{n}}{\log{\log{n}}}})$ rounds.
\end{theorem}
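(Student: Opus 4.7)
The plan is to derive Theorem~\ref{local} as a direct corollary of the reduction encoded in Lemma~\ref{reduction2}, combined with the known lower bounds of Kuhn, Moscibroda, and Wattenhofer \cite{kuhn2016local} for the minimum vertex cover problem. Recall that \cite{kuhn2016local} shows that any (even randomized) distributed algorithm in the \local model that computes a constant or polylogarithmic approximation of MVC requires $\Omega(\log \Delta / \log\log \Delta)$ and $\Omega(\sqrt{\log n / \log \log n})$ rounds. We will contrapose: if there were a weighted 2-spanner approximation algorithm beating these bounds, Lemma~\ref{reduction2} would produce an MVC algorithm violating them.

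First, suppose for contradiction that $A$ is an $\alpha$-approximation algorithm for weighted 2-spanner in the \local model with time complexity $T(n)$, for an $\alpha$ that is constant or polylogarithmic in $n$. By Lemma~\ref{reduction2}, simulating $A$ on $G_S$ yields an $\alpha$-approximation algorithm $A_{MVC}$ for MVC that runs in $3\,T(3n)$ rounds on a graph $G$ with $n$ vertices. The next step is to verify that the parameters $n$ and $\Delta$ on which the KMW bound is stated transfer faithfully through the reduction: the graph $G_S$ has $3n$ vertices by construction, so $\log(3n) = \Theta(\log n)$, which preserves the $\sqrt{\log n / \log\log n}$ bound; and a direct inspection of the construction of $G_S$ shows that each vertex $v_i$ has degree at most $2 \deg_G(v) + 2$, so $\Delta(G_S) = \Theta(\Delta(G))$, preserving the $\log \Delta / \log \log \Delta$ bound.

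Putting these together, $A_{MVC}$ is a constant or polylogarithmic approximation algorithm for MVC running in $3\,T(3n)$ rounds on $n$-vertex graphs of maximum degree $\Theta(\Delta)$. Invoking the KMW lower bound gives $3\,T(3n) = \Omega(\log \Delta / \log\log \Delta)$ and $3\,T(3n) = \Omega(\sqrt{\log n / \log \log n})$, and absorbing the constants and parameter rescaling into the $\Omega$-notation yields the claimed bounds on $T(n)$. There is no real obstacle here beyond the parameter bookkeeping just described, since all the heavy lifting is done by Lemma~\ref{reduction2} and by the KMW lower bounds; the one point worth stating explicitly in the final write-up is that the reduction is purely local, so that any hard instance $G$ from \cite{kuhn2016local} produces a hard instance $G_S$ of comparable $n$ and $\Delta$ in the \local model.
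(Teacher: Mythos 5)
Your proposal is correct and follows exactly the paper's route: apply Lemma~\ref{reduction2} to transfer the KMW lower bounds (Theorem~14 of \cite{kuhn2016local}) from MVC to weighted 2-spanner, noting that $G_S$ has $3n$ vertices and maximum degree $\Theta(\Delta(G))$, so the parameter dependence in the bounds is preserved up to constants. The one small addition you make beyond the paper's terse statement is the explicit verification that $\deg_{G_S}(v_i) \le 2\deg_G(v)+2$, which is a useful check but not a different argument.
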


Theorem \ref{local} follows from Theorem 14 in \cite{kuhn2016local} and from Lemma \ref{reduction2}. Note that the number of vertices and the maximum degree in $G_S$ are equal up to a constant factor to the number of vertices and maximum degree in $G$.
In addition, Theorem 13 in \cite{kuhn2016local}, allows us to show trade-offs between the time complexity of a distributed algorithm for weighted 2-spanner to the approximation ratio it gets.

\begin{theorem} \label{local2}
For every integer $k > 0$, there are graphs $G$, such that in $k$ communication rounds in the \local model, every distributed algorithm for the weighted 2-spanner problem on $G$ has approximation ratios of at least $\Omega\left( \frac{n^\frac{1-o(1)}{4k^2}}{k} \right) $ and $ \Omega\left( \frac{\Delta^\frac{1}{k+1}}{k} \right) $.
\end{theorem}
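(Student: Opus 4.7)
The plan is to obtain Theorem \ref{local2} by chaining together the local reduction from weighted 2-spanner to minimum vertex cover (MVC) established in Lemma \ref{reduction2} with the known round/approximation trade-offs for MVC due to Kuhn, Moscibroda, and Wattenhofer. The structure of this proof is essentially parallel to that of Theorem \ref{local}, the only change being that we plug in Theorem 13 of \cite{kuhn2016local} (which gives a trade-off for \emph{every} round budget $k$) in place of Theorem 14 there (which handles only the polylogarithmic regime).

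First, I would restate the consequence of Lemma \ref{reduction2} in the form we need: any $T(n)$-round $\alpha$-approximation algorithm $A$ for weighted 2-spanner in the \local model yields, via a black-box simulation in which every vertex $v$ of $G$ simulates the triangle $v_1,v_2,v_3 \in V_S$, an $\alpha$-approximation algorithm $A_{MVC}$ for MVC running in $3T(3n)$ rounds. The key local property to emphasize here is that the graph $G_S$ produced by the reduction satisfies $|V_S| = 3|V|$ and $\Delta(G_S) = \Theta(\Delta(G))$, because each vertex of $G_S$ is incident only to edges of its own triangle and to edges corresponding to edges of $G$ incident to its originating vertex.

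Next, I would invoke Theorem 13 of \cite{kuhn2016local}, which states that for every integer $k > 0$ there exist graphs on which any $k$-round distributed MVC algorithm has approximation ratio $\Omega\bigl(n^{(1-o(1))/(4k^2)}/k\bigr)$ and $\Omega\bigl(\Delta^{1/(k+1)}/k\bigr)$. Assume, toward a contradiction, that some $k$-round algorithm $A$ for weighted 2-spanner in the \local model achieves a better approximation ratio than the one claimed in the statement. Applying the reduction gives a $(3k)$-round MVC algorithm with the same ratio, which contradicts the MVC lower bound with round budget $3k$ on a graph with $n/3$ vertices of maximum degree $\Theta(\Delta)$.

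The only mildly annoying step is handling constant factors: the raw plug-in yields bounds of the form $\Omega\bigl(n^{(1-o(1))/(4(3k)^2)}/(3k)\bigr)$ and $\Omega\bigl(\Delta^{1/(3k+1)}/(3k)\bigr)$, rather than the exact expressions in the theorem. I expect this to be the main obstacle in writing the proof cleanly, and the remedy is the standard one: the multiplicative constants in $k$ (and the constant factor in the exponent coming from $(3k)^2$ versus $k^2$) are absorbed into the $\Omega(\cdot)$ notation and into the $(1-o(1))$ factor, which holds uniformly in $k$. With this bookkeeping the bounds match the stated form and the theorem follows.
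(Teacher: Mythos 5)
Your approach is the same as the paper's: combine Lemma \ref{reduction2} with Theorem~13 of the Kuhn--Moscibroda--Wattenhofer paper. However, there is a genuine gap in your handling of the round-count overhead, and your proposed ``remedy'' does not actually work.

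You invoke Lemma \ref{reduction2} as stated, which gives a $3T(3n)$-round MVC algorithm, and then apply the MVC lower bound with round budget $3k$. This yields approximation lower bounds of $\Omega\bigl(n^{(1-o(1))/(36k^2)}/k\bigr)$ and $\Omega\bigl(\Delta^{1/(3k+1)}/k\bigr)$. You claim that the discrepancy with the statement's exponents $(1-o(1))/(4k^2)$ and $1/(k+1)$ can be ``absorbed into the $\Omega(\cdot)$ notation and into the $(1-o(1))$ factor.'' This is false: $n^{(1-o(1))/(36k^2)}$ is the ninth root of $n^{(1-o(1))/(4k^2)}$, a polynomially smaller quantity, and no constant factor or $o(1)$ additive adjustment inside the exponent can bridge that. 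Likewise $\Delta^{1/(3k+1)}$ is polynomially smaller than $\Delta^{1/(k+1)}$. The constants $n \mapsto 3n$ and $\Delta \mapsto \Theta(\Delta)$ \emph{are} benign (they affect the base, not the exponent, and get absorbed), but the constant in the number of rounds is not, precisely because the exponents in the MVC bound depend quadratically and linearly on the round budget.

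The fix is already flagged in the footnote to the proof of Lemma \ref{reduction2}: the factor-of-three slowdown is only needed so the simulation works in the \congest model. In the \local model, the three messages corresponding to the three $G_S$-edges per $G$-edge can be bundled into a single unbounded message, so the simulation takes only $T(3n)$ rounds. Since Theorem \ref{local2} is a \local-model statement, you should use this stronger, round-preserving form of the reduction: a $k$-round weighted $2$-spanner algorithm gives a $k$-round MVC algorithm on $G$ (with $|V_S|=3|V|$ and $\Delta(G_S)=\Theta(\Delta(G))$). Plugging round budget $k$ into Theorem~13 of \cite{kuhn2016local} then gives exactly the exponents in the statement, with the remaining constant-factor adjustments in $n$ and $\Delta$ legitimately absorbed.
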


In the \congest model, solving MVC optimally takes $\Omega\left( \frac{n^2}{\log^2{n}}\right) $ rounds (see Theorem 2 in \cite{censor2017quadratic}), which carries over to exact spanners, as follows.

\begin{theorem}
Any distributed algorithm in the \congest model that solves the weighted 2-spanner problem optimally requires $\Omega\left( \frac{n^2}{\log^2{n}}\right) $ rounds.
\end{theorem}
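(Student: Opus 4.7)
The plan is to derive this theorem as an immediate corollary of Lemma \ref{reduction2} together with the known near-quadratic lower bound for exact MVC in the \congest model. Concretely, I would assume for contradiction that there is a distributed algorithm $A$ that solves weighted 2-spanner exactly in $T(n)$ rounds in the \congest model, and then plug $\alpha=1$ into Lemma \ref{reduction2} to obtain an exact algorithm $A_{MVC}$ for minimum vertex cover that runs in $3T(3n)$ rounds on an $n$-vertex graph. Since the simulation in Lemma \ref{reduction2} only sends messages over edges of the original graph $G$ and uses messages of the same size as those of $A$, this simulation indeed preserves the \congest restriction.

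Next, I would invoke Theorem 2 of \cite{censor2017quadratic}, which states that exactly solving MVC in the \congest model requires $\Omega(n^2/\log^2 n)$ rounds. Combining this with the simulation bound $3T(3n) = \Omega(n^2/\log^2 n)$ gives $T(3n) = \Omega(n^2/\log^2 n)$, and replacing $n$ by $n/3$ yields $T(n) = \Omega(n^2/\log^2 n)$, as required.

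There is no substantial obstacle here beyond verifying that the reduction from Claim \ref{reduction} and the simulation from Lemma \ref{reduction2} faithfully go through for the exact case, i.e., that an optimal 2-spanner in $G_S$ corresponds to an optimal vertex cover in $G$. This is precisely what Claim \ref{reduction} already establishes (an exact equality between the optimum values), so the exact case is covered with no loss. The only small thing to note is that the blow-up in the number of vertices is a constant factor (three), so the asymptotic lower bound carries over unchanged.
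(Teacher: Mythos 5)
Your proof is correct and follows exactly the paper's intended argument: apply Lemma \ref{reduction2} with $\alpha = 1$ to transfer the near-quadratic exact-MVC lower bound of \cite{censor2017quadratic} to the weighted 2-spanner problem, noting that the constant-factor vertex blow-up does not affect the asymptotic bound. Nothing further is needed.
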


All of these lower bounds hold also for randomized algorithms.

\paragraph*{Remarks:} Our reduction from MVC can be adapted to obtain additional bounds. First, by changing the weights of all edges having weight 2 to have weight 1, we obtain that an $\alpha$-approximation for the weighted 2-spanner problem gives a $2\alpha$-approximation for MVC. This implies the lower bounds of Theorem~\ref{local} and Theorem~\ref{local2} also for graphs with $0,1$ weights. This can be viewed as lower bounds for the 2-spanner augmentation problem, in which we are given an initial set of edges and need to augment it with the minimal number of edges that induces a 2-spanner.

Further, the same lower bounds hold for the \emph{directed} weighted case. We modify the construction such that the edges of the triangle for vertex $v$ are $(v_1,v_2),(v_1,v_3),(v_3,v_2)$. For an edge $\{v,u\} \in E$, $E_S$ includes $5$ directed edges: $(v_1,u_1),(u_1,v_1),(v_2,u_2),(u_2,v_2)$ and one of the edges $(v_1,u_2),(u_1,v_2)$. The weights of all the edges remain as in the undirected case.

\section{Distributed approximation for 2-spanner problems} \label{sec:alg}


Here we present our distributed approximation algorithm for the minimum 2-spanner problem. We need the following terminology and notation.

A \textit{$v$-star} is a non-empty subset of edges between $v$ and a subset of its neighbors.
The \textit{density} of a star $S$ with respect to a subset of edges $H$, denoted by $\rho(S,H)$, equals $\frac{|C_S|}{|S|}$, where $C_S$ is the set of edges of $H$ \emph{2-spanned} by the star $S$, where an edge $e=\{u,w\}$ is 2-spanned by the $v$-star $S$ if $S$ includes the edges $\{v,u\},\{v,w\}$. Note that $S$ \emph{covers} all the edges 2-spanned by $S$ and also the edges of $S$, but it 2-spans only non-star edges.
The \textit{densest} $v$-star with respect to $H$ is the $v$-star having maximal density with respect to $H$. The density of a vertex $v$ with respect to $H$, denoted by $\rho(v,H)$, is the density of the densest $v$-star. If $H$ is clear from the context, we refer to $\rho(S,H)$ and $\rho(v,H)$ as the density of $S$ and the density of $v$, and denote them by $\rho(S)$ and $\rho(v)$, respectively. The \textit{rounded} density of a star $S$ with respect to $H$, denoted by $\tilde{\rho}(S,H)$, is obtained by rounding $\rho(S,H)$ to the closest power of 2 that is greater than $\rho(S,H)$. Similarly, the \textit{rounded} density of a vertex $v$ with respect to $H$, denoted by $\tilde{\rho}(v,H)$, is obtained by rounding $\rho(v,H)$ to the closest power of 2 that is greater than $\rho(v,H)$.
The \textit{full} $v$-star is the star that includes all the edges between $v$ and its neighbors.
The \textit{2-neighborhood} of a vertex $v$ consists of all the vertices at distance at most $2$ from $v$.

In our algorithm, each vertex $v$ maintains a set $H_v$ that includes all the edges 2-spanned by the full $v$-star that are still not covered by the edges added to the spanner.
The algorithm proceeds in iterations, where in each iteration the following is computed:\\

\noindent\fbox{%
    \parbox{\textwidth}{%
    \vspace{-0.3cm}
\begin{enumerate}[rightmargin=0.2cm]
\item { Each vertex $v$ computes its rounded density $\rho_v = \tilde{\rho}(v, H_v)$, and sends it to its 2-neighborhood.}
\item \label{choose_star} { Each vertex $v$ such that $\rho_v \geq \rho_u$ for each $u$ in its 2-neighborhood and $\rho(v,H_v) \geq 1$ is a \textit{candidate}. Let $S_v$ be a $v$-star with density at least $\frac{\rho_v}{4}$, chosen according to Section \ref{choose} (a choice which is central for our analysis to carry through). Vertex $v$ informs its neighbors about $S_v$. Let $C_v$ be the edges of $H_v$ 2-spanned by $S_v$.}
\item { Each candidate $v$ chooses a random number $r_v \in \{1,...,n^4\}$ and sends it to its neighbors.\footnotemark }
\item { Each uncovered edge that is 2-spanned by at least one of the candidates, votes for the first candidate that 2-spans it according to the order of the values $r_v$. If there is more than one candidate with the same minimum value, it votes for the one with minimum ID.}
\item { Each star $S_v$ for which $v$ receives at least $\frac{|C_v|}{8}$ votes from edges it 2-spans is added to the spanner.}
\item { Each vertex $v$ updates the set $H_v$ in its 2-neighborhood by removing from it edges that are now covered. }
\item{If the maximal density in the 2-neighborhood of $v$ is at most $1$, $v$ adds to the spanner all the edges adjacent to it that are still not covered, and outputs the edges adjacent to it that were added to the spanner during the algorithm.}
\end{enumerate}
    \vspace{-0.4cm}
}}

\footnotetext{Knowing the exact value of $n$ is unnecessary, and the typical assumption that the vertices know a polynomial upper bound on $n$ suffices.}

~\\ 

At the end of the algorithm all the edges are covered by spanner edges, since we add to the spanner edges that are not 2-spanned during the algorithm.

Since all the candidates have maximal rounded density in their 2-neighborhood, it follows that
all the candidates that cover the same edge have the same rounded density, which is crucial in the analysis. In addition, rounding the densities guarantees that there are only $O(\log{\Delta})$ possible values for the maximal rounded density, which allows us to show an efficient time complexity.

Each iteration takes constant number of rounds in the \local model. For example, to calculate $\tilde{\rho}(v, H_v)$, each vertex $v$ learns all the edges between its neighbors that are still uncovered, by having each vertex $u$ send to its neighbors a list of its neighbors $w$ such that the edges $\{u,w\}$ are still not covered. We next show that the algorithm requires only polynomial local computations.

We can compute the densest $v$-star in polynomial time as in the sequential algorithm (see Lemma 2.1 in \cite{kortsarz1994generating}). This is the maximal density problem, that can be solved in polynomial time using flow techniques \cite{gallo1989fast}. 
This allows us to compute the rounded density of a vertex. We next explain how we choose the star $S_v$ in polynomial time. Other computations in the algorithm are clearly polynomial. 

\subsection{Choosing the star $S_v$} \label{choose}
In step \ref{choose_star} of each iteration, a candidate vertex $v$ chooses a $v$-star with density at least $\frac{\rho_v}{4}$. However, there may be multiple $v$-stars with such density, and choosing an arbitrary star between them \emph{does not} meet our claimed round complexity, and it is crucial to choose the stars in a certain way. In addition, we have to find such star using only polynomial local computations.
We next describe how to choose the star $S_v$.

Let $H_v^i$ be the subset $H_v$ at the beginning of iteration $i$. It holds that $H_v^{i+1} \subseteq H_v^{i}$ for all $i$. Let $\rho = \tilde{\rho}(v,H_v^i)$. The star $S_v^i$ that $v$ chooses in iteration $i$ is defined as follows.
If $i$ is the first iteration in which $v$ is a candidate with rounded density $\rho$, then $S_v^i$ is chosen as follows. First, $v$ computes the densest $v$-star, denote it by $S$. 
Now, if there is an edge $e$ such that $\rho(S \cup \{e\}, H_v^i) \geq \frac{\rho}{4}$, then $v$ adds such an edge to $S$. Otherwise, if there is a disjoint $v$-star $S'$ such that $\rho(S',H_v^i) \geq \frac{\rho}{4}$, then $v$ adds the edges of $S'$ to $S$. Now $v$ continues in the same manner until there is no edge or disjoint star it can add to $S$ without decreasing the density below $\frac{\rho}{4}$. The resulting star is $S_v^i$. 

If $v$ is already a candidate with rounded density $\rho$ in iteration $i-1$, then if $\rho(S_v^{i-1},H_v^i) \geq \frac{\rho}{4}$, we define $S_v^i = S_v^{i-1}$. Otherwise, if $S_v^{i-1}$ contains a star with density at least $\frac{\rho}{4}$ with respect to $H_v^i$, we define $S_v^i$ as follows. $v$ starts by computing the densest $v$-star $S$ that is contained in $S_v^i$, and then adds to it edges or disjoint stars as before, however, it only considers adding edges or disjoint stars from $S_v^{i-1}$. This guarantees that $S_v^i \subseteq S_v^{i-1}$.
If $S_v^{i-1}$ does not contain a star with density at least $\frac{\rho}{4}$ with respect to $H_v^i$, $v$ chooses an arbitrary $v$-star with rounded density $\rho$. (We later show that this never happens).

The computations are polynomial. $v$ adds edges to $S$ at most $n$ times. Each time it adds edges it does the following computation: it checks if there is an edge such that $\rho(S \cup \{e\}, H_v^i) \geq \frac{\rho}{4}$, and since there are at most $n$ optional edges, the computation is polynomial. It also checks if there is a disjoint star with density at least $\frac{\rho}{4}$. To compute this, it computes the densest star that is disjoint to $S$.

\subsection{Analysis} \label{sec:analysis}

In this section, we present the analysis of our distributed approximation algorithm for the minimum 2-spanner problem, 
and prove the following theorem.

\minimumSpanner*

Let $H$ be the set of edges of the spanner produced by the algorithm. When the algorithm ends, all the edges are covered, hence $H$ is a 2-spanner. We show that its size it at most $O(\log{\frac{m}{n}})|H^*|$, where $H^*$ is the set of edges of a minimum 2-spanner. Afterwards, we show that the time complexity of the algorithm is $O(\log{n} \log{\Delta})$ rounds, w.h.p.

\subsubsection{Approximation ratio} \label{sec:approx}

We start by showing that our algorithm guarantees an approximation ratio of $O(\log{\frac{m}{n}})$.
The analysis of the sequential algorithm of \cite{kortsarz1994generating} that obtains the same approximation ratio strongly depends on the facts that stars are added to the spanner one by one and that the star that is added at each step has maximal rounded density. These allow dividing the edges to several subsets according to the order in which they are covered in the algorithm, and bounding the number of edges in each subset.

Our analysis borrows ideas from the above analysis, but requires a more sophisticated accounting, since our algorithm adds multiple stars in each iteration, with varying densities. In addition to overcoming these uncertainties, a compelling aspect of our approach is that it easily extends to other variants of the problem, such as the client-server 2-spanner problem \cite{elkin1999client}.

To show the approximation ratio, we assign each edge $e \in E$ a value $cost(e)$ such that the sum of the costs of all edges is closely related both to $|H|$ and $|H^*|$, by satisfying $$|H| \leq 8 \sum_{e \in E} cost(e) \leq O\left( \log{\frac{m}{n}}\right) |H^*|,$$
which implies our claimed approximation ratio.

We write $H=H_1 \cup H_2$, where $H_1$ are edges added to the spanner during the algorithm, and $H_2$ are edges added to the spanner at the end of the algorithm, when the maximal density in the 2-neighborhood of a vertex is at most 1.

For an edge $e \in H_2$, we set $cost(e) = 1$.
For an edge $e \in H_1$, let $i$ be the iteration in which $e$ is first covered in the algorithm.
The edge $e$ may be covered by a candidate star $S_v$ that it votes for and is added to the spanner at iteration $i$.  In this case, we set $cost(e) = \frac{1}{\rho}$, where $\rho$ is the density of the star $S_v$ that $e$ chooses at iteration $i$. Another option is that $e$ is covered as a result of adding other stars to the spanner at iteration $i$: it may be covered either by a different star than the one it votes for, or by a path of length 2 that is created by edges added to the spanner at iteration $i$ together with edges added at previous iterations. In each of these cases, we set $cost(e) = 0$.
We first show the left inequality above.

\begin{lemma} \label{cost}
$|H| \leq 8 \cdot \sum_{e \in E} cost(e)$.
\end{lemma}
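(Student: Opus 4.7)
The plan is to bound $|H_1|$ and $|H_2|$ separately in terms of $\sum_{e \in E} cost(e)$ and then combine. The bound on $|H_2|$ is immediate since every edge of $H_2$ is assigned cost $1$, so $|H_2| = \sum_{e \in H_2} cost(e)$, whence $|H_2| \leq 8 \sum_{e \in H_2} cost(e)$ trivially.

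The main effort is in bounding $|H_1|$. First, I would note that $|H_1| \leq \sum_{i} \sum_{v \in V_i} |S_v^i|$, where $V_i$ denotes the set of candidates whose stars are added at iteration $i$ and $S_v^i$ is the star chosen by $v$ at that iteration, because $H_1$ is contained in the union of all added stars. Next, fix an added star $S_v^i$. By the algorithm's acceptance rule it receives at least $|C_v^i|/8$ votes, and I claim that each voter $e$ for $S_v^i$ satisfies $cost(e) = 1/\rho(S_v^i,H_v^i) = |S_v^i|/|C_v^i|$. Indeed, to vote an edge must be uncovered at the start of iteration $i$, and since $S_v^i$ is added to the spanner and 2-spans $e$, iteration $i$ is the iteration in which $e$ is first covered, and the star $e$ chose at that iteration is $S_v^i$. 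Hence
\[
\sum_{e \text{ votes for } S_v^i} cost(e) \;\geq\; \frac{|C_v^i|}{8}\cdot\frac{|S_v^i|}{|C_v^i|} \;=\; \frac{|S_v^i|}{8}.
\]

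The crucial non-overlap step is to show that summing these contributions over all added stars does not overcount any edge's cost. Here I would emphasize that once an edge $e$ becomes covered it never votes again, so if $e$ votes for an added star it does so in exactly one iteration over the whole algorithm, namely the one triggering its first covering. Hence the voter sets of distinct added stars are disjoint, and summing gives
\[
\sum_{i}\sum_{v \in V_i} \frac{|S_v^i|}{8} \;\leq\; \sum_{i}\sum_{v \in V_i}\sum_{e \text{ votes for } S_v^i} cost(e) \;\leq\; \sum_{e \in E} cost(e),
\]
which together with $|H_1| \leq \sum_i \sum_{v \in V_i}|S_v^i|$ yields $|H_1| \leq 8\sum_{e \in E} cost(e)$. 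Adding the trivial bound on $|H_2|$ gives $|H| = |H_1|+|H_2| \leq 8\sum_{e \in E} cost(e)$.

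The main subtle point is the non-overlap argument for voters across iterations; everything else is essentially an amortized-density calculation mimicking the greedy analysis. This single non-trivial observation is what converts the local voting guarantee into a global bound on $|H_1|$.
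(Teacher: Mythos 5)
Your proof follows essentially the same route as the paper's: split $H = H_1 \cup H_2$, handle $H_2$ trivially, bound each added star's size by eight times the total cost of its voters, and use the disjointness of voter sets across added stars. The identification of $cost(e)$ for voters of an added star, the density calculation $|S_v^i|/|C_v^i|$, and the observation that an edge can vote for at most one added star over the whole execution all match the paper.

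There is one small arithmetic slip in the final combination. You establish $|H_1| \leq 8\sum_{e\in E} cost(e)$ and separately $|H_2| = \sum_{e\in H_2} cost(e)$; adding these literally yields $|H| \leq 8\sum_{e\in E} cost(e) + \sum_{e\in H_2} cost(e)$, which is weaker than the claim (it would only give a factor of up to $16$ since $cost$ is nonnegative). The fix is already implicit in your non-overlap observation: every edge that votes for an added star becomes covered during the algorithm, hence lies in $E\setminus H_2$, so in fact $\sum_{i}\sum_{v\in V_i}\sum_{e \text{ votes for } S_v^i} cost(e) \leq \sum_{e\in E\setminus H_2} cost(e)$. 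With this sharper bound, $|H_1| + |H_2| \leq 8\sum_{e\in E\setminus H_2} cost(e) + \sum_{e\in H_2} cost(e) \leq 8\sum_{e\in E} cost(e)$, which is exactly how the paper closes the argument. You should make the restriction to $E\setminus H_2$ explicit before summing.
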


\begin{proof}
To prove that $|H| \leq 8\sum_{e \in E} cost(e)$, it is enough to show that $|H_1| \leq 8\sum_{e \in E \setminus H_2} cost(e)$ and $|H_2| \leq 8\sum_{e \in H_2} cost(e)$. The second inequality follows since $|H_2| = \sum_{e \in H_2} 1 = \sum_{e \in H_2} cost(e).$ We next prove the first inequality.

Let $Stars$ be the set of stars added to $H_1$ in the algorithm. It holds that $|H_1| \leq \sum_{S \in Stars} |S|$, since each edge of $H_1$ is included in at least one star. Let $S_v$ be a star added to $H_1$ at iteration $i$, having density $\rho$ at that iteration. Recall that we add $S_v$ to the spanner since it gets at least $\frac{|C_v|}{8}$ votes from the edges it 2-spans. Denote by $Votes(S_v)$ the set of edges that vote for $S_v$ at iteration $i$. For each $e \in Votes(S_v)$, we defined $cost(e) = \frac{1}{\rho}$, which gives, $$\sum_{e \in Votes(S_v)} cost(e) \geq \frac{1}{\rho} \cdot \frac{|C_v|}{8} =  \frac{|S_v|}{|C_v|} \cdot \frac{|C_v|}{8} = \frac{|S_v|}{8}.$$
Hence, for each $S \in Stars$, $|S| \leq 8 \cdot \sum_{e \in Votes(S)} cost(e).$
For each edge $e$, there is at most one star $S \in Stars$ such that $e \in Votes(S)$, since an edge votes for at most one star at the iteration in which it is covered. In addition, an edge $e \in Votes(S)$ is 2-spanned during the algorithm, which means that $e \not \in H_2$. Hence, we get $$|H_1| \leq \sum_{S \in Stars} |S| \leq 8 \cdot \sum_{S \in Stars} \sum_{e \in Votes(S)} cost(e) \leq 8 \cdot \sum_{e \in E \setminus H_2} cost(e).$$
This completes the proof of Lemma \ref{cost}.
\end{proof}

To bound $\sum_{e \in E} cost(e)$ from above, let $r = \frac{m}{n}$, and $f=\lceil \log{r} \rceil$.
We divide the edges of $E$ to $f+2$ subsets $\{E_j\}_{j=0}^{f+1}$ according to their costs, and show that for each $j$, the sum of costs of edges in $E_j$ is at most $O(|H^*|)$. Since there are $f+2 = O(\log{r})$ subsets, we conclude that $\sum_{e \in E} cost(e) \leq O(\log{r})|H^*|.$

Let $E_0 = \{e: cost(e) = 0\}$ and $E_{f+1} = H_2$. Note that all the edges not in $E_0$ and $E_{f+1}$ are edges that were 2-spanned in the algorithm by the candidate star they vote for. We divide these edges to $f$ subsets as follows. Let $E_1 = \{e \not \in H_2: 0 < cost(e) \leq \frac{2}{r} \}$, and for $2 \leq j \leq f$, let $E_j = \{e \not \in H_2: \frac{2^{j-1}}{r} < cost(e) \leq \frac{2^j}{r} \}.$
For each edge $e$, it holds that $cost(e) \leq 1$, since the density of stars added to $H_1$ during the algorithm is at least 1, and since we defined $cost(e) = 1$ for edges $e \in H_2$.
Hence, for each edge $e$, we have $cost(e) \leq \frac{2^f}{r}$, which gives $E = \cup_{j=0}^{f+1} E_f$.

\begin{lemma} \label{Hci}
For every $0 \leq j \leq f+1$, $\sum_{e \in E_j} cost(e) = O(|H^*|).$
\end{lemma}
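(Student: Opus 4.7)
The plan is to handle the three types of $E_j$ by separate arguments, with $E_0$ trivial, $E_{f+1}=H_2$ dispatched by a single global averaging step, and the main range $1 \leq j \leq f$ handled by an averaging-plus-monotonicity argument on the global maximum rounded density. For $E_0$, every $e$ satisfies $cost(e)=0$ by definition, so the sum is immediately $0$. For $E_{f+1} = H_2$, I would apply the standard averaging argument against $H^*$ at the moment the algorithm adds $H_2$: the termination rule guarantees that every vertex's local maximum rounded density is at most $1$, hence the global maximum actual density at that point is at most $1$. For each vertex $u$, letting $S_u^* \subseteq H^*$ be the $u$-star of $H^*$-edges and $C_u^* \subseteq R = H_2$ be the edges of $R$ that $S_u^*$ $2$-spans, the density bound gives $|C_u^*| \leq |S_u^*|$, while the fact that $H^*$ is a $2$-spanner of the original graph (and hence of $R$) gives $\sum_u |C_u^*| \geq |R|$. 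Combining, $|H_2| = |R| \leq \sum_u |S_u^*| = 2|H^*|$, so $\sum_{e \in E_{f+1}} cost(e) = |H_2| \leq 2|H^*|$.

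For the main case $1 \leq j \leq f$, the plan is to prove $|E_j| = O(r|H^*|/2^j)$, from which $\sum_{e \in E_j} cost(e) \leq |E_j| \cdot 2^j/r = O(|H^*|)$ follows. For $e \in E_j$ the covering star has actual density $\rho_e \in [r/2^j,\, r/2^{j-1})$; combining $\rho_e < \rho_v$ (since $\rho_e \leq \rho(v, H_v) < \rho_v$ by definition of the rounded density) with $\rho_e \geq \rho_v/4$ (from the algorithm's choice of $S_v$) pins the covering candidate's rounded density $\rho_v$ to the range $(r/2^j,\, r/2^{j-3}]$. Because each $H_v$ only shrinks across iterations and star density is monotone in $H_v$, the global maximum rounded density $\tilde{\rho}^*_i = \max_u \tilde{\rho}(u, H_u)$ is monotone non-increasing in $i$, and $\rho_v \leq \tilde{\rho}^*_i$ for every candidate at every iteration. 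Hence the coverage of any $e \in E_j$ forces $\tilde{\rho}^*_i > r/2^j$, and I would take $i^*$ to be the first iteration with $\tilde{\rho}^*_{i^*} \leq r/2^j$. The averaging claim applied at $i^*$ yields $|R_{i^*}| \leq 2\cdot (r/2^j) \cdot |H^*|$, and I plan to bound $|E_j|$ by summing the telescoping drops of $|R|$ across the $O(1)$ consecutive ``phases'' whose $\tilde{\rho}^*$ value lies in the narrow target range $(r/2^j,\, r/2^{j-3}]$.

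The main obstacle is the locality of the candidate rule: because $\rho_v$ is only a local maximum in $v$'s $2$-neighborhood, a candidate with small $\rho_v$ may coexist at the same iteration with a much larger $\tilde{\rho}^*_i$ driven by a distant region of the graph, so $E_j$-edges are not a priori confined to the three ``target'' phases. The one-sided inequality $\rho_v \leq \tilde{\rho}^*_i$ is what salvages the plan: it still forces every $E_j$-coverage to occur strictly before iteration $i^*$. The delicate step, which I expect to be the most careful part of the argument, is controlling the contribution of locally quiet candidates that appear in earlier, higher-$\tilde{\rho}^*$ iterations: I plan to use the fact that $\rho_v$ realizes only $O(1)$ power-of-$2$ levels for $E_j$-coverings and charge each such candidate's $|C_v|$ against the averaging bound $|R| \leq 2 \cdot 2^s |H^*|$ at the corresponding rounded density level $2^s$ in the range. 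Summing the resulting $O(|H^*|)$ bound over all $f+2 = O(\log(m/n))$ values of $j$ finally yields $\sum_e cost(e) = O(\log(m/n)) \cdot |H^*|$, which together with Lemma~\ref{cost} delivers the claimed approximation ratio of the theorem.
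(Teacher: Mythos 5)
Your split into $E_0$, the main band $1 \le j \le f$, and $E_{f+1}=H_2$ mirrors the paper's structure, and your treatments of $E_0$ and $E_{f+1}$ are essentially the paper's (for $H_2$ the paper avoids your ``global moment'' phrasing --- vertices terminate at different times --- by ordering, per star $S \in Stars_{f+1}$, the $H_2$-edges it $2$-spans by when they are added, and applying the density-$\le 1$ condition only at the instant the \emph{first} such edge is added; that makes the argument precise, but it is the same idea). For the main range, however, your approach is genuinely different from the paper's, and I believe it has a real gap.

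The paper's argument is entirely local and per-star: it fixes a star $S$ of an optimal spanner $H^*_j$ for $E_j$, orders the $E_j$-edges $2$-spanned by $S$ as $(e_1,\dots,e_\ell)$ by coverage time, and observes that at the iteration when $e_1$ is covered, the \emph{center $u$ of $S$ lies in the $2$-neighborhood of the covering candidate $v$}, so $\tilde\rho(u,H_u) \le \rho_v$ and hence the (local) density of $S$ at that moment, which is $\ge \ell/|S|$, bounds $\rho_v$ from below. This gives $cost(e_1) \le 4|S|/\ell$, hence $\ell \le 4|S| r/2^{j-1}$, and then $\sum_i cost(e_i) \le 8|S|$. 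Your plan instead tries to run the averaging against the \emph{global} maximum rounded density $\tilde\rho^*_i$. The obstacle you flag is exactly fatal here: an $E_j$-edge can be covered at an iteration $i \ll i^*$ where $\tilde\rho^*_i$ is still very large, so the coverage of $E_j$ is not confined to the three ``target phases,'' and the bound $|R_{i^*}| \le 2(r/2^j)|H^*|$ says nothing about edges that were removed from $R$ before iteration $i^*$. Your proposed patch --- ``charge each candidate's $|C_v|$ against the averaging bound $|R| \le 2\cdot 2^s|H^*|$ at the corresponding level $2^s$'' --- does not close this, because that averaging bound only holds at the first iteration $i_s$ with $\tilde\rho^*_{i_s} \le 2^s$, while your low-$\rho_v$ candidate and the edges it covers live at an earlier iteration where $|R|$ can be as large as $\Theta(\tilde\rho^*_i |H^*|) \gg 2^s|H^*|$ and are not elements of $R_{i_s}$. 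To salvage it, you would have to localize the averaging to each star $S^*_{u(e)}$ of $H^*$ that $2$-spans $e$, using exactly the observation that $u(e)$ is in $v$'s $2$-neighborhood so the density of $S^*_{u(e)}$ at the time of coverage is $< \rho_v$. But once you do that, the argument \emph{is} the paper's per-star charging; the global-$\tilde\rho^*$ monotonicity and the ``phases'' scaffolding become unnecessary.

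One further small point: the range $j=1$ must be handled separately (the paper does). In your framing, for $j=1$ the covering density $\rho_e$ lies in $[r/2,\infty)$, which is unbounded above, so the pin $\rho_v \in (r/2^j, r/2^{j-3}]$ fails for $j=1,2,3$ and the per-star cap on $\ell$ becomes vacuous. The paper sidesteps this by the density-free bound $\sum_{e\in E_1}cost(e) \le (2/r)\,|E| = 2n = O(|H^*|)$, and restricts the per-star argument to $2 \le j \le f$.
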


\begin{proof}
For $j=0$, the claim holds trivially.
For $j=1$, it holds that
$\sum_{e \in E_1} cost(e) \leq \frac{2}{r} \cdot |E| \leq 2n = O(|H^*|),$ where the last equality follows from the fact that any $2$-spanner for $G$ has at least $n-1$ edges, since $G$ is connected.

For $2 \leq j \leq f$, let $H^*_j$ be the set of edges of a minimum 2-spanner for $E_j$. For each vertex $v$, let $S^*_j(v)$ be the full $v$-star in $H^*_j$. We define $Stars_j = \{S^*_j(v)\}_{v \in V}.$
We next show that $\sum_{e \in E_j} cost(e) \leq 9 \sum_{S \in Stars_j} |S|$. To prove this, we write $\sum_{e \in E_j} cost(e) = \sum_{e \in E_j \cap Stars_j} cost(e) + \sum_{e \in E_j \setminus Stars_j}cost(e).$ Since $cost(e) \leq 1$, we get $\sum_{e \in E_j \cap Stars_j} cost(e) \leq \sum_{S \in Stars_j} |S|$. 

We now show that $\sum_{e \in E_j \setminus Stars_j}cost(e) \leq 8\sum_{S \in Stars_j} |S|$.
Consider a specific star $S \in Stars_j$, and let $(e_1,...,e_{\ell})$ be the edges of $E_j$ 2-spanned by $S$ according to the order in which they were 2-spanned in the algorithm, breaking ties arbitrarily. Note that all the edges in $E_j$ for $1 \leq j \leq f$ are 2-spanned in the algorithm as explained above.
The density of $S$ at the beginning of the iteration in which $e_1$ is 2-spanned is at least $\frac{\ell}{|S|}$, since $S$ may 2-span additional edges that are not in $E_j$. Since all the candidates that 2-span an edge have the same rounded density because they all have maximal rounded density in their 2-neighborhood, it holds that the density of the star $S_v$ that 2-spans $e_1$ is at least $\frac{\ell}{4|S|}$, as $v$ chooses a star with density at least $\frac{\rho_v}{4} \geq \frac{\ell}{4|S|}$. Hence, $cost(e_1) \leq \frac{4|S|}{\ell}$. This gives, $\ell \leq \frac{4|S|}{cost(e_1)} \leq \frac{r}{2^{j-1}} 4|S|,$ where the last inequality follows since $e_1 \in E_j$.
Note that for each edge $e \in E_j$, $cost(e) \leq \frac{2^j}{r}$. Therefore, $$\sum_{i=1}^{\ell} cost(e_i) \leq \frac{2^j}{r} \ell \leq \frac{2^j}{r} \cdot \frac{r}{2^{j-1}} 4|S| = 8|S|.$$

Let $C_S$ be the set of edges of $E_j$ 2-spanned by the star $S$. Since $H^*_j$ is a 2-spanner for $E_j$, every edge $e \in E_j \setminus Stars_j$ is 2-spanned by at least one star $S \in Stars_j$. Summing over all the stars in $Stars_j$ gives
$$\sum_{e \in E_j \setminus Stars_j} cost(e) \leq \sum_{S \in Stars_j} \sum_{e \in C_S} cost(e) \leq 8\sum_{S \in Stars_j} |S|.$$

Note that $\sum_{S \in Stars_j} |S| = 2|H^*_j|$, since each edge of $H^*_j$ is included in exactly two stars in $Stars_j$. In addition $|H^*_j| \leq |H^*|$ since $H^*$ covers all the edges of $E$, and in particular all the edges of $E_j$, and $H^*_j$ is the minimum 2-spanner of $E_j$.
This gives $\sum_{e \in E_j} cost(e) = O(|H^*|)$, which completes the proof for $2 \leq j \leq f$.

For $j=f+1$, we define $H^*_j$ and $Stars_j$ as before. Let $S \in Stars_j$, and let $(e_1,...,e_{\ell})$ be the edges of $E_j=H_2$ that are 2-spanned by $S$ according to the order in which they were added to $H_2$ in the algorithm, breaking ties arbitrarily.
It must hold that $\ell \leq |S|$, as otherwise the density of $S$ is greater than one at the iteration in which $e_1$ is added to $H_2$, which contradicts the algorithm. This gives $\sum_{i=1}^{\ell} cost(e_i) \leq \ell \leq |S|.$  Following the same arguments for the case $2 \leq j \leq f$, we get $\sum_{e \in E_j} cost(e) = O(|H^*|)$, which completes the proof. 
\end{proof}

Lemmas \ref{cost} and \ref{Hci} give
$|H| \leq 8 \sum_{e \in E} cost(e) \leq O(\log{r}) |H^*|$, which proves the following claimed approximation ratio.

\begin{lemma} \label{approx}
The approximation ratio of the algorithm is $O(\log{\frac{m}{n}})$.
\end{lemma}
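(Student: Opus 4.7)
The plan is to assemble the final approximation bound directly from the two preceding lemmas, which together do essentially all the work. First, I would invoke Lemma \ref{cost} to reduce the problem of bounding $|H|$ to the problem of bounding the total cost $\sum_{e \in E} cost(e)$: this gives $|H| \leq 8 \sum_{e \in E} cost(e)$.

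Next, I would exploit the partition $E = \bigcup_{j=0}^{f+1} E_j$ introduced just before Lemma \ref{Hci}, where $r = m/n$ and $f = \lceil \log r \rceil$. The point is that every edge $e \in E \setminus (E_0 \cup H_2)$ was 2-spanned during the algorithm by a candidate star it voted for, so $cost(e) \in (0,1]$, and the geometric bucketing into intervals $(2^{j-1}/r, 2^j/r]$ exhausts these possible values with exactly $f$ buckets. Together with the two special classes $E_0$ (zero-cost edges) and $E_{f+1} = H_2$, this gives a partition of $E$ into $f+2 = O(\log(m/n))$ classes.

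Applying Lemma \ref{Hci} to each $E_j$ yields $\sum_{e \in E_j} cost(e) = O(|H^*|)$, and summing over the $O(\log(m/n))$ classes gives
\[
\sum_{e \in E} cost(e) = \sum_{j=0}^{f+1} \sum_{e \in E_j} cost(e) \leq O\!\left(\log \tfrac{m}{n}\right) \cdot |H^*|.
\]
Chaining with Lemma \ref{cost} produces $|H| \leq 8 \sum_{e \in E} cost(e) \leq O(\log(m/n)) \cdot |H^*|$, which is exactly the claimed approximation ratio. Since $H$ is a valid 2-spanner (every edge is covered by the end of the algorithm, as noted after the algorithm description), this completes the proof.

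There is no real obstacle here: all nontrivial work has been absorbed into Lemmas \ref{cost} and \ref{Hci}. The only subtle point worth flagging explicitly is why the number of buckets is $O(\log(m/n))$ rather than $O(\log n)$, namely that the minimum density at which stars are still added during the main loop is at least $1$, so nonzero costs lie in $[1/\Delta, 1]$ but more tightly are controlled by $cost(e) > 2^{j-1}/r$ requiring $j \leq \lceil \log r \rceil + 1$; this is what keeps the logarithmic factor at $\log(m/n)$ and matches the sequential bound of Kortsarz and Peleg~\cite{kortsarz1994generating}.
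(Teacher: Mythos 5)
Your proof is correct and follows exactly the paper's argument: the paper likewise derives Lemma \ref{approx} by chaining Lemma \ref{cost} with the per-bucket bound of Lemma \ref{Hci}, summing over the $f+2 = O(\log(m/n))$ classes $E_0, \dots, E_{f+1}$. The only difference is that you spell out the bucketing count a bit more explicitly, which the paper leaves implicit.
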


\subsubsection{Time complexity} \label{sec:time}

We now show that our algorithm completes in $ O(\log{n}\log{\Delta})$ rounds, w.h.p.
In \cite{jia2002efficient, rajagopalan1998primal}, a potential function argument is given for analyzing the set cover and minimum dominating set problems that are addressed. We analyze our algorithm along a similar argument, but our algorithm necessitates a more intricate analysis, mainly due to the fact that each vertex may be the center of multiple stars that are added during the algorithm, rather than being chosen only once for a dominating set. The latter may contain at most $n$ vertices, while for the spanner constructed by our algorithm there are initially $n2^{\Delta}$ possible stars which may constitute it. Nevertheless, we show how to get a time complexity of $O(\log{n}\log{\Delta})$ rounds for our minimum 2-spanner algorithm, which matches the time complexity of the above set cover and dominating set algorithms.

The crucial component in proving our small time complexity is showing that as long as the rounded density of $v$ does not change between iterations, $v$ always chooses a star $S_v$ that is equal to or contained in the star that it chooses in the previous iteration.
As explained in Section \ref{choose}, if the rounded density of $v$ is the same in iterations $i$ and $i+1$, $v$ tries to choose a star $S_v^{i+1}$ which is contained in $S_v^i$. We show that this is always the case.

The following will be useful in our analysis.

\begin{observation} \label{obs2}
Let $x_1,x_2,...,x_n$ be non-negative numbers, and let $y_1,y_2,...,y_n$ be positive numbers, then $$\min_i{\left \{ \frac{x_i}{y_i} \right \} } \leq \frac{\sum_{i=1}^n x_i}{\sum_{i=1}^n y_i} \leq \max_i{\left \{ \frac{x_i}{y_i} \right \} }.$$
In addition, the inequalities become equalities only if for all $j$, $\frac{x_j}{y_j} = \min_i{\left \{ \frac{x_i}{y_i} \right \} } =\max_i{\left \{ \frac{x_i}{y_i} \right \} }$.
\end{observation}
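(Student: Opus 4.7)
The plan is to prove the mediant inequality by the standard term-by-term bounding argument. Let $m = \min_i \{x_i/y_i\}$ and $M = \max_i \{x_i/y_i\}$. For each index $i$, the definitions give $m \leq x_i/y_i \leq M$, and because $y_i > 0$ one may multiply through without flipping signs to obtain $m \cdot y_i \leq x_i \leq M \cdot y_i$. This is the only place where positivity of the $y_i$ is needed.

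Next I would sum these inequalities over $i = 1, \ldots, n$, yielding $m \sum_{i} y_i \leq \sum_{i} x_i \leq M \sum_i y_i$. Since $\sum_i y_i > 0$ (as each $y_i > 0$), I can divide through by $\sum_i y_i$ to obtain the claimed two-sided bound on the pooled ratio. This handles the inequality part of the observation.

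For the equality characterization, I would rewrite the lower bound $m \sum_i y_i \leq \sum_i x_i$ as $\sum_i (x_i - m y_i) \geq 0$. Each summand satisfies $x_i - m y_i \geq 0$ by construction, so the sum vanishes if and only if every summand vanishes, that is, $x_i = m y_i$ for all $i$, which is exactly the condition $x_i/y_i = m$ for all $i$. The analogous argument applied to $\sum_i (M y_i - x_i) \geq 0$ handles the upper bound: equality holds only if $x_i/y_i = M$ for every $i$. Combining the two equality cases gives precisely the stated condition, namely $x_j/y_j = \min_i \{x_i/y_i\} = \max_i \{x_i/y_i\}$ for all $j$.

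Since the argument is entirely elementary, I do not anticipate any real obstacle; the only subtlety is to note that positivity of all $y_i$ (not merely non-negativity) is what allows both the multiplication step and the division by $\sum_i y_i$, and it is also what makes each summand $x_i - m y_i$ and $M y_i - x_i$ a non-negative quantity whose vanishing pins down $x_i/y_i$ unambiguously.
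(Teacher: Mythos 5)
Your proof is correct and follows essentially the same route as the paper, which proves the observation via the one-line hint of writing $\sum_{i=1}^n x_i = \sum_{i=1}^n \frac{x_i}{y_i}\cdot y_i$ (i.e., viewing the pooled ratio as a weighted average of the $x_i/y_i$ with weights $y_i$); your term-by-term bounding $m\,y_i \leq x_i \leq M\,y_i$ followed by summation is just an explicit unpacking of that same idea, and your equality analysis fills in the details the paper leaves implicit.
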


Observation \ref{obs2} follows from writing $\sum_{i=1}^n x_i = \sum_{i=1}^n \frac{x_i}{y_i} \cdot y_i.$
We now prove the following.

\begin{claim} \label{iteration_p}
Let $v$ be a candidate star in iteration $i$. If $\tilde{\rho}(v,H_v^{i})=\tilde{\rho}(v,H_v^{i+1}) = \rho$,
then $v$ chooses a star contained in $S_v^i$ in iteration $i+1$.
\end{claim}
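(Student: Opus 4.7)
The plan is to reduce the claim to showing that $S_v^i$ contains a substar of density at least $\rho/4$ with respect to $H_v^{i+1}$. Once this is established, the construction in Section \ref{choose} explicitly defines $S_v^{i+1}$ as such a substar of $S_v^i$ (either $S_v^i$ itself in the first case of the construction, or a substar in the second), so the third ``bad'' case never triggers and the claim follows.

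To exhibit such a substar, I start from a densest $v$-star $S^*$ with respect to $H_v^{i+1}$, whose density is at least $\rho/2$ by the definition of the rounded density together with the hypothesis $\tilde{\rho}(v,H_v^{i+1})=\rho$, and decompose $S^* = A \sqcup B$ with $A := S^* \cap S_v^i$ and $B := S^* \setminus S_v^i$. The goal is to argue that $A$ itself is the desired substar of $S_v^i$. The key step invokes the two termination conditions of the iteration-$i$ construction of $S_v^i$ from Section \ref{choose}. Since $B$ is a $v$-star disjoint from $S_v^i$, if $B$ is nonempty then termination (via the disjoint-star rule) forces $\rho(B,H_v^i) < \rho/4$, hence $\rho(B,H_v^{i+1}) < \rho/4$ because $H_v^{i+1} \subseteq H_v^i$. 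Moreover, for every $b$ with $\{v,b\} \in B$, the single-edge rule forces $\rho(S_v^i \cup \{\{v,b\}\}, H_v^i) < \rho/4$, which together with the construction guarantee $\rho(S_v^i,H_v^i) \geq \rho/4$ yields the crucial per-vertex bound $d_{S_v^i}(b,H_v^i) < \rho/4$, where $d_{S_v^i}(b,H_v^i) := |\{u : \{v,u\} \in S_v^i,\; \{u,b\} \in H_v^i\}|$.

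To finish, I partition the edges 2-spanned by $S^*$ with respect to $H_v^{i+1}$ into three disjoint sets: $C_A$ (both star-edges in $A$), $C_B$ (both in $B$), and $C_{\mathrm{mix}}$ (one in each). The disjoint-star bound gives $|C_B| < (\rho/4)|B|$; for the mixed part, writing $|C_{\mathrm{mix}}| = \sum_{\{v,b\} \in B} d_A(b,H_v^{i+1})$ and using $A \subseteq S_v^i$ together with $H_v^{i+1} \subseteq H_v^i$ gives $|C_{\mathrm{mix}}| < (\rho/4)|B|$ by the per-vertex bound. Subtracting both from the density lower bound $|C_{S^*}| \geq (\rho/2)(|A|+|B|)$ then yields $|C_A| > (\rho/2)|A|$, so $\rho(A,H_v^{i+1}) > \rho/2 \geq \rho/4$. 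The empty case $A = \emptyset$ is ruled out, since it would imply $|C_{S^*}| = |C_B| < (\rho/4)|B| = (\rho/4)|S^*|$, contradicting $|C_{S^*}| \geq (\rho/2)|S^*|$.

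The main obstacle I anticipate is precisely that the edges 2-spanned by $S^*$ do not split cleanly across $A$ and $B$: the ``mixed'' edges require one star-edge from each side and are unattributable to either, so neither Observation \ref{obs2} nor the disjoint-star bound on $B$ is sufficient by itself. The essential insight is that the single-edge termination condition (which ostensibly only controls one-edge extensions of $S_v^i$) is exactly what gives a per-vertex bound on the mixed contribution, tying the two separate termination conditions of Section \ref{choose} together to close the argument.
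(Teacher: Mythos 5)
Your overall plan---exhibit a substar of $S_v^i$ with density at least $\rho/4$ with respect to $H_v^{i+1}$ so that the fallback case of Section~\ref{choose} never fires---is the right goal, and the three-way split of the edges 2-spanned by $S^*$ into $C_A$, $C_B$, $C_{\mathrm{mix}}$, together with the per-vertex bound derived from the single-edge rule, is a clean piece of bookkeeping. Your arithmetic is correct: given the two bounds $|C_B| < (\rho/4)|B|$ and $|C_{\mathrm{mix}}| < (\rho/4)|B|$, the density lower bound $|C_{S^*}| \ge (\rho/2)|S^*|$ indeed forces $\rho(A, H_v^{i+1}) > \rho/2$.

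The gap is in the applicability of the two termination conditions you invoke. You write ``the iteration-$i$ construction of $S_v^i$,'' but Section~\ref{choose} has two modes. Only in the \emph{first} case---the first iteration at which $v$ attains rounded density $\rho$---does the greedy construction consider \emph{every} edge and \emph{every} disjoint star, so that termination gives $\rho(B, H_v^i) < \rho/4$ and $\rho(S_v^i \cup \{e\}, H_v^i) < \rho/4$ for arbitrary disjoint $B$ and arbitrary $e \notin S_v^i$. In the \emph{second} case (all later iterations at this density), the construction ``only considers adding edges or disjoint stars from $S_v^{i-1}$,'' so at termination you only know these inequalities for $B \subseteq S_v^{i-1}$ and $e \in S_v^{i-1}$. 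Your $B = S^* \setminus S_v^i$, where $S^*$ is a densest star with respect to $H_v^{i+1}$, need not lie inside $S_v^{i-1}$, and you never establish that it does, so the disjoint-star bound $|C_B| < (\rho/4)|B|$ is unjustified for $i$ beyond the first such iteration. (The per-vertex bound can in fact be pushed through via a chain-of-terminations argument, since an edge $\{v,b\}$ that was discarded at some earlier step already satisfies $d_{S_v^{j}}(b, H_v^{j}) < \rho/4$ for the step $j$ at which it was last considered, and that bound is monotone; but the disjoint-star bound does not propagate this way---splitting $B$ across the nested stars $S_v^{i_0} \supseteq S_v^{i_0+1} \supseteq \cdots$ leaves cross-edges between the pieces that neither termination condition controls, and a naive count gives only $|C_B| < (\rho/2)|B|$, which is not enough to make your subtraction close.)

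The paper sidesteps exactly this difficulty by proving the claim for all $i$ simultaneously via a minimal counterexample: it fixes a \emph{single} densest star $S^*$ with respect to $H_v^{i'}$ at the first offending iteration $i'$, and proves by induction on $j$ the \emph{containment} $S^* \subseteq S_j$ (not just that $S^* \cap S_j$ is dense). The induction hypothesis $S^* \subseteq S_{j-1}$ is precisely what guarantees $S^* \setminus S_j \subseteq S_{j-1}$, so that the second-case termination conditions at step $j$ legitimately apply to it. Your argument proves a weaker conclusion (density of $A$, not containment of $S^*$) from stronger hypotheses (first-case termination on all of $B$), which is why it cannot be iterated. To repair it along your lines you would have to strengthen the inductive conclusion to $S^* \subseteq S_v^i$, at which point you essentially recover the paper's proof.
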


\begin{proof}
Assume to the contrary that there is an iteration $i$ such that $\tilde{\rho}(v,H_v^{i})=\tilde{\rho}(v,H_v^{i+1}) = \rho$, and there is no star contained in $S_v^i$ with density at least $\frac{\rho}{4}$ with respect to $H_v^{i+1}$.
Let $i_0$ be the first iteration in which $\tilde{\rho}(v,H_v^{i_0})=\rho$, and let $i'$ be the first iteration after $i_0$ where $\tilde{\rho}(v,H_v^{i'})=\rho$ and there is no star contained in $S_v^{i'-1}$ with density at least $\frac{\rho}{4}$ with respect to $H_v^{i'}$. Let $S^*$ be the densest $v$-star with respect to $H_v^{i'}$. Then $\rho(S^*, H_v^{i'}) \geq \frac{\rho}{2}$ since $\tilde{\rho}(v,H_v^{i'})=\rho$.
Let $S_0$ be the full $v$-star, and let $(S_1,S_2,...,S_k)$ be the sequence of stars chosen by $v$ between iteration $i_0$ and iteration $i'-1$ in the order in which they were chosen. For all $0 \leq j \leq k$, it holds that $S_j \subseteq S_{j-1}$, since $i'$ is the first iteration in which this does not hold. 

We next show by induction that $S^* \subseteq S_j$ for all $0 \leq j \leq k$. In particular this will give $S^* \subseteq S_k$. Hence, at iteration $i'$ there is a star contained in $S_v^{i'-1} = S_k$ with density at least $\frac{\rho}{4}$, in contradiction to the definition of $i'$.

For $j=0$, the claim holds trivially since $S_0$ is the full $v$-star. 

Assume that $S^* \subseteq S_{j-1}$, and assume to the contrary that $S^* \nsubseteq S_j$. Note that both $S_j$ and $S^*$ are contained in $S_{j-1}$ by the induction hypothesis.
Let $j'$ be the iteration in which $S_j$ is chosen. Since $S^* \nsubseteq S_j$, we can write $S^* = S_1 \cup S_2$ where $S_1 = S^* \cap S_j$ and $S_2 = S^* \setminus S_j$. It holds that $\rho(S^*,H_v^{j'}) \geq \rho(S^*,H_v^{i'}) \geq \frac{\rho}{2}$. We can write $\rho(S^*,H_v^{i'}) = \frac{|C_1|+|C_2|+|C_{12}|}{|S_1|+|S_2|}$ where $C_1$ are edges of $H_v^{i'}$ 2-spanned by $S_1$, $C_2$ are edges 2-spanned by $S_2$, and $C_{12}$ are edges 2-spanned by $S^*$ with one endpoint in $S_1$ and one endpoint in $S_2$. Since $S^*$ is the densest star with respect to $H_v^{i'}$ it follows that $\frac{|C_2|+|C_{12}|}{|S_2|} \geq \rho(S^*,H_v^{i'}) \geq \frac{\rho}{2}$, as otherwise by Observation \ref{obs2}, $\frac{|C_1|}{|S_1|}>\rho(S^*,H_v^{i'})$, which shows that $S_1$ is a denser star than $S^*$.
This shows that at least one of $\frac{|C_2|}{|S_2|}$ and $\frac{|C_{12}|}{|S_2|}$ is at least $\frac{\rho}{4}$. 

In the first case, $\rho(S_2,H_v^{j'}) \geq \rho(S_2,H_v^{i'}) = \frac{|C_2|}{|S_2|} \geq \frac{\rho}{4}$, which shows that $S_2$ is a disjoint star to $S_j$ with density at least $\frac{\rho}{4}$ that is contained in $S_{j-1}$. In the second case, $\frac{|C_{12}|}{|S_2|} \geq \frac{\rho}{4}$. For an edge $e=\{v,u\} \in S_2$, denote by $C^e_{12}$ all the edges of $C_{12}$ with endpoint $u$. It follows that there is an edge $e \in S_2$ such that $|C^e_{12}| \geq \frac{\rho}{4}$. By Observation \ref{obs2} and since the density of $S_j$ is at least $\frac{\rho}{4}$ we get $\rho(S_j \cup \{e\}) \geq \frac{|C_j|+|C^e_{12}|}{|S_j|+1} \geq \frac{\rho}{4}$, where $C_j$ are the edges 2-spanned by $S_j$. Either way we get a contradiction to the definition of $S_j$.
This completes the proof.
\end{proof}

The rest of the analysis is based on a potential function argument which is described in \cite{jia2002efficient, rajagopalan1998primal} for the set cover and minimum dominating set problems. Let $\rho=\max_{v \in V}{\rho_v}$ at the beginning of iteration $i$. We define the potential function $\phi = \sum_{v:\rho_v=\rho}|C_v|$, where $C_v$ is the set of edges of $H_v$ that are 2-spanned by the star $S_v$ which $v$ chooses at iteration $i$.
Note that the potential function may increase between iterations if the value of $\rho$ changes. However, since we round the values $\rho_v$ to powers of two, there may be only $O(\log{\Delta})$ different values for $\rho$. The obstacle is that $\phi$ may increase between iterations even if the value of $\rho$ does not change, because a vertex $v$ might change the stars $S_v$ in different iterations.
However, by Claim \ref{iteration_p}, as long as the rounded density of the vertex remains the same among iterations, it always chooses a star that is contained in the star that it chooses in the previous iteration. Hence, the size of the set of edges $C_v$ can only decrease between the end of the last iteration to the beginning of the next one. It follows that as long as $\rho$ does not change, the value of $\phi$ can only decrease between iterations.
Our goal is to show that if the value of $\rho$ does not change between iterations, the potential function $\phi$ decreases by a multiplicative factor between iterations in expectation. Having this, we get a time complexity of $O(\log{n}\log{\Delta})$ rounds w.h.p.

The following lemma shows that if the value of $\rho$ does not change between iterations, the potential function $\phi$ decreases by a multiplicative factor between iterations in expectation. The proof follows the lines of the proofs in \cite{jia2002efficient, rajagopalan1998primal}, and is included here for completeness.

We say that an iteration is \emph{legal} if the random numbers $r_v$ chosen by the candidates in this iteration are different.

\begin{lemma} \label{dec}
If $\phi$ and $\phi'$ are the potentials at the beginning and end of a legal iteration, then $E[\phi'] \leq c \cdot \phi$ for some positive constant $c<1$.
\end{lemma}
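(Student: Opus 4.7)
My plan is to show $E[\phi'] \le c\,\phi$ for some absolute constant $c<1$, via a random-order / reverse-Markov argument in the style of \cite{rajagopalan1998primal, jia2002efficient}, adapted to our weighted potential $\phi = \sum_{v \in V^*} |C_v|$, where $V^* := \{v : \rho_v = \rho\}$. A useful preliminary observation is that, since any two candidates that 2-span a common edge $e$ are at distance at most $2$ from each other and each has maximum rounded density in its own 2-neighborhood, they must share the same value of $\rho_v$; hence for every $e \in \mathcal{E} := \bigcup_{v \in V^*} C_v$ the set $V_e := \{v \in V^* : e \in C_v\}$ is entirely contained in $V^*$, and the voting on edges of $\mathcal{E}$ happens among $V^*$-candidates alone.

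The first step is a bound on $\phi - \phi'$ in terms of newly covered edges. Using Claim~\ref{iteration_p}, I would argue that (i)~if $v$ is lucky then all of $C_v$ is covered, so every star contained in $S_v$ has density $0$ w.r.t.\ $H_v^{i+1}$; Claim~\ref{iteration_p} therefore forces $\rho_v^{i+1} < \rho$, ejecting $v$ from the next iteration's $V^*$; and (ii)~if $v$ is unlucky but still has $\rho_v^{i+1} = \rho$, Claim~\ref{iteration_p} gives $S_v^{i+1} \subseteq S_v^i$, hence $|C_v'| \le |\{e \in C_v : X_e = 0\}|$, where $X_e$ indicates that edge $e$ becomes covered in this iteration. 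Writing $d(e) := |V_e|$ and summing over $v \in V^*$ yields
\[
\phi - \phi' \;\ge\; \sum_{e \in \mathcal{E}} d(e)\,X_e.
\]

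The second step analyses the voting. In a legal iteration the tags $r_v$ are distinct, so by symmetry each $e \in C_v$ votes for $v$ with probability exactly $1/d(e)$, giving, for the vote count $Y_v$, $E[Y_v] = \sum_{e \in C_v} 1/d(e)$ and $\sum_v E[Y_v] = |\mathcal{E}|$. A reverse Markov inequality yields $E[Y_v\,\mathbf{1}[Y_v \ge |C_v|/8]] \ge E[Y_v] - |C_v|/8$, and summing over $v$, together with the fact that every edge voting for a lucky candidate becomes covered, gives $E[|\{e : X_e = 1\}|] \ge |\mathcal{E}| - \phi/8$. To convert this to a lower bound on the weighted sum $\sum_e d(e) X_e$, I would introduce the weighted votes $\tilde{Y}_v := \sum_{e \in C_v \,:\, v = \arg\min_{w \in V_e} r_w} d(e)$, which satisfy $E[\tilde{Y}_v] = |C_v|$ and $\sum_v \tilde{Y}_v = \phi$ deterministically; together with the fact that the candidate $v^\dagger$ with the smallest $r$-value in $V^*$ is automatically lucky (it gets every vote in $C_{v^\dagger}$), a short calculation yields $E\left[\sum_e d(e) X_e\right] \ge c_1\,\phi$ for some absolute constant $c_1 > 0$, and thus $E[\phi'] \le (1-c_1)\,\phi$.

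The main obstacle is this final conversion: bridging the unweighted Rajagopalan--Vazirani-style bound to a weighted one for $\phi$. When a single edge is 2-spanned by many candidates (so $|\mathcal{E}| \ll \phi$), few distinct edges are covered in expectation but each one removes a lot of weight $d(e)$ from $\phi$; the two regimes need to be blended by exploiting both the deterministic identity $\sum_v \tilde{Y}_v = \phi$ and the automatic luckiness of $v^\dagger$, rather than applying reverse Markov naively to the unweighted vote count.
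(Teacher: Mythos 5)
Your setup matches the paper's: you correctly observe that all candidates 2-spanning a common edge share the same rounded density $\rho$ (so the voting happens among vertices in $V^*$), and your inequality $\phi-\phi' \geq \sum_{e}d(e)X_e$ (with $d(e)$ equal to the paper's $s(e)$) follows from Claim~\ref{iteration_p} exactly as the paper uses it. The gap is in the second step, and you yourself flag it: the ``short calculation'' blending the unweighted reverse-Markov bound with the automatic luckiness of $v^\dagger$ does not exist in general. Concretely, the automatic-luckiness term accounts for only a single candidate, giving $E[\tilde Y_{v^\dagger}\mathbf{1}[v^\dagger\text{ lucky}]] = \tfrac{1}{|V^*|}\sum_e d(e)^2$; this can be $\Theta(\phi\cdot D/|V^*|)$ when all $d(e)\approx D$, which vanishes relative to $\phi$ whenever $|V^*|\gg D$, precisely the regime where reverse Markov is vacuous (since $|\mathcal{E}|-\phi/8<0$ once $D>8$). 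A cyclic family in which each of $N$ candidates spans $D$ consecutive edges realizes both failures simultaneously, so the two ingredients you list cannot be combined into a bound $E[\sum_e d(e)X_e]\geq c_1\phi$.

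The paper closes the gap not by a global accounting over $\sum_v Y_v$ but by a \emph{per-pair, stratified} argument. For each candidate $v$ it sorts $C_v$ by $s(e)$ and declares $(S_v,e)$ \emph{good} when $e$ lies in the top half $T(v)$ of that order. The crucial point (Claim~\ref{greater}) is that conditional on $e$ voting for $v$, any edge $e'\in C_v$ with $s(e')\leq s(e)$ also votes for $v$ with probability $\geq 1/2$; since every $e'\in B(v)$ (the bottom half) has $s(e')\leq s(e)$ when $e\in T(v)$, Markov on the non-votes then gives $\Pr[v\text{ lucky}\mid e\text{ votes }v]\geq 1/3$ for good pairs. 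Because $\Pr[e\text{ votes }v]=1/s(e)$ exactly, each good pair contributes $s(e)\cdot\tfrac{1}{s(e)}\cdot\tfrac13 = \tfrac13$ to $E[\phi-\phi']$, and the weights $s(e)=d(e)$ cancel identically rather than having to be balanced across regimes. Since at least half of the $\phi$ pairs are good, $E[\phi-\phi']\geq\phi/6$. This local, sort-by-$s(e)$ decomposition is the missing idea in your proposal: it makes the conditional luckiness probability uniformly constant for a constant fraction of the potential mass, which is exactly what your global reverse-Markov / $v^\dagger$ combination fails to guarantee.
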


In order to prove Lemma \ref{dec} we need the following definitions. Let $s(e)$ be the number of candidates that 2-span the edge $e$. For a candidate $v$, we sort the edges in $C_v$ according to $s(e)$ in non-increasing order. Let $T(v)$ and $B(v)$ be the sets of the first $\lceil |C_v|/2 \rceil$ edges, and the last $\lceil |C_v|/2 \rceil$ edges in the sorted order, respectively. Indeed, if $|C_v|$ is odd, the sets $T(v)$ and $B(v)$ share an edge.

For a pair $(S_v,e)$, where $S_v$ is a candidate star that 2-spans $e$, we say that $(S_v,e)$ is \emph{good} if $e \in T(v)$.
We next show that if $e \in T(v)$ chooses $S_v$ in a legal iteration, then the star $S_v$ is added to the spanner with constant probability.

\begin{claim} \label{greater}
Let $i$ be a legal iteration. If $e,e'$ are both 2-spanned by $S$ in iteration $i$, and $s(e) \geq s(e')$, then $Pr[e'\ chooses\ S| e\ chooses\ S] \geq \frac{1}{2}$.
\end{claim}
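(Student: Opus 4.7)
The plan is to reduce the claim to a clean calculation about the uniformly random permutation induced by the $r_v$ values. Let $\mathcal{C}(e)$ and $\mathcal{C}(e')$ denote the sets of candidate stars 2-spanning $e$ and $e'$ respectively, so that $|\mathcal{C}(e)| = s(e)$ and $|\mathcal{C}(e')| = s(e')$, and observe that by hypothesis $S \in \mathcal{C}(e) \cap \mathcal{C}(e')$. Since the iteration is legal, the values $\{r_v\}$ are pairwise distinct, so for every subset $T$ of candidates the probability that a fixed element $v \in T$ attains $\min_{u \in T} r_u$ is exactly $1/|T|$ (and tie-breaking by ID never triggers).

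With this setup the event ``$e$ chooses $S$'' is equivalent to ``$S$ attains the minimum $r$-value within $\mathcal{C}(e)$'', and likewise for $e'$. The key observation I plan to use is the monotonicity: if $S$ attains the minimum $r$-value in the larger set $\mathcal{C}(e) \cup \mathcal{C}(e')$, then \emph{a fortiori} it attains the minimum in each of the subsets $\mathcal{C}(e)$ and $\mathcal{C}(e')$. Denoting $A = \{S \text{ is min in } \mathcal{C}(e) \cup \mathcal{C}(e')\}$ and $B = \{S \text{ is min in } \mathcal{C}(e)\}$, this gives $A \subseteq B$, hence
\[
\Pr[e' \text{ chooses } S \mid e \text{ chooses } S]
\;\geq\; \Pr[A \mid B]
\;=\; \frac{\Pr[A]}{\Pr[B]}
\;=\; \frac{1/|\mathcal{C}(e) \cup \mathcal{C}(e')|}{1/s(e)}
\;=\; \frac{s(e)}{|\mathcal{C}(e) \cup \mathcal{C}(e')|}.
\]

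Finally I will bound the denominator via $|\mathcal{C}(e) \cup \mathcal{C}(e')| \leq s(e) + s(e')$, and then plug in the hypothesis $s(e) \geq s(e')$ to obtain $|\mathcal{C}(e) \cup \mathcal{C}(e')| \leq 2 s(e)$, yielding the desired lower bound $1/2$. There is no real obstacle here: the entire argument is a one-line application of Bayes' rule after recognizing the containment $A \subseteq B$; the only care needed is to justify that legality of the iteration reduces the problem to the uniform-ordering calculation and to note that $S$ indeed lies in both $\mathcal{C}(e)$ and $\mathcal{C}(e')$ so the events are well-defined and nontrivial.
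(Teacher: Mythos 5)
Your proof is correct and follows essentially the same approach as the paper's: both reduce the conditional probability to the ratio of "$S$ is the minimum in $\mathcal{C}(e)\cup\mathcal{C}(e')$" over "$S$ is the minimum in $\mathcal{C}(e)$", then bound the union by $2s(e)$ using the hypothesis $s(e)\geq s(e')$. The paper just parametrizes via $N_e, N_{e'}, N_b$ (the exclusive and shared candidate counts) rather than bounding $|\mathcal{C}(e)\cup\mathcal{C}(e')|\leq s(e)+s(e')$ outright, but the computation is the same and gives the same final bound.
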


\begin{proof}
Let $N_e,N_{e'},N_b$ be the number of candidates that 2-span $e$ but not $e'$, $e'$ but not $e$, and both $e$ and $e'$, respectively.
$$Pr[e'\ chooses\ S| e\ chooses\ S] = \frac{Pr[e\ and\ e'\ choose\ S]}{Pr[e\ chooses\ S]} = \frac{\frac{1}{N_e + N_{e'} + N_b}}{\frac{1}{N_e + N_b}} = \frac{|N_e|+|N_b|}{|N_e|+|N_{e'}|+|N_b|}.$$
It holds that $N_e \geq N_{e'}$ since $s(e) \geq s(e')$. This gives,
$$Pr[e'\ chooses\ S| e\ chooses\ S] = \frac{|N_e|+|N_b|}{|N_e|+|N_{e'}|+|N_b|} \geq \frac{|N_e|+|N_b|}{2|N_e|+|N_b|} \geq \frac{1}{2}.$$
\end{proof}

\begin{claim}
If $(S_v,e)$ is a good pair in a legal iteration $i$, then $Pr[S_v\ is\ chosen|e\ chooses\ S_v] \geq \frac{1}{3}$.
\end{claim}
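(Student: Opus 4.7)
The plan is to lower-bound the conditional probability that $S_v$ receives at least $|C_v|/8$ votes, which is the threshold for being added to the spanner, by counting only votes from edges in $B(v)$. Conditional on $e$ choosing $S_v$, let $Y$ denote the number of edges in $B(v)$ that vote for $S_v$ in iteration $i$. Clearly $S_v$ being chosen is implied by the event $\{Y \geq |C_v|/8\}$, so it suffices to show $\Pr[Y \geq |C_v|/8 \mid e \text{ chooses } S_v] \geq 1/3$.

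First I would lower-bound $E[Y \mid e \text{ chooses } S_v]$ by directly invoking the preceding claim. Since $(S_v, e)$ is good, $e \in T(v)$, so by the definition of $T(v)$ and $B(v)$ via the non-increasing sort by $s(\cdot)$, every $e' \in B(v)$ satisfies $s(e') \leq s(e)$. Applying the preceding claim to each such pair gives $\Pr[e' \text{ chooses } S_v \mid e \text{ chooses } S_v] \geq 1/2$. Linearity of expectation then yields
$$E[Y \mid e \text{ chooses } S_v] \;\geq\; |B(v)|/2 \;=\; \lceil |C_v|/2 \rceil / 2 \;\geq\; |C_v|/4.$$

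Next I would convert this expectation bound into a probability bound via a reverse Markov inequality. Observe that $Y$ is bounded above by $|B(v)| = \lceil |C_v|/2 \rceil$. For a nonnegative random variable supported on $[0, M]$ with expectation $\mu$ and any threshold $t < \mu$, the estimate $\mu \leq t \cdot \Pr[Y < t] + M \cdot \Pr[Y \geq t]$ rearranges to $\Pr[Y \geq t] \geq (\mu - t)/(M - t)$. Plugging in $\mu = |C_v|/4$, $t = |C_v|/8$, and $M = |C_v|/2$, the ratio is $(|C_v|/8)/(3|C_v|/8) = 1/3$, giving the desired bound.

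The main obstacle is that the constants are tight with essentially no slack: $E[Y]$ is exactly twice the threshold, and $M$ cannot be improved below $|C_v|/2$, so the ratio is exactly $1/3$. I also need to verify the edge cases where $|C_v|$ is small or odd, because then $\lceil |C_v|/2 \rceil$ overshoots $|C_v|/2$ and $|C_v|/8$ need not be an integer. These cases are benign: for $|C_v| \leq 7$ the threshold $|C_v|/8$ is at most $1$, so the single guaranteed vote of $e$ for $S_v$ already pushes $S_v$ over the threshold with conditional probability $1$, and for larger $|C_v|$ the rounding of $\lceil |C_v|/2 \rceil$ only helps. Hence the $1/3$ bound holds in all cases.
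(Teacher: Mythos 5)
Your proof is correct and follows essentially the same route as the paper: both lower-bound the number of votes from $B(v)$ conditional on $e$ choosing $S_v$, use the preceding claim to get expectation at least $|B(v)|/2$, and then apply a Markov-type tail bound (your reverse Markov on $Y$ with upper bound $|B(v)|$ is identical to the paper's Markov inequality on $X' = |B(v)| - X$) before converting $|B(v)|/4 \geq |C_v|/8$. One small caution: when you write ``plugging in $M = |C_v|/2$'' you should really keep $M = |B(v)| = \lceil |C_v|/2 \rceil$ and $\mu \geq |B(v)|/2$ consistently throughout; then the required inequality $(\mu - t)/(M-t) \geq 1/3$ reduces cleanly to $|B(v)| \geq |C_v|/2$, which avoids the need for the separate odd/small-$|C_v|$ casework you sketch at the end (though your casework is also valid as stated).
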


\begin{proof}
Assume that $e$ chooses $S_v$. Denote by $X$ the number of edges in $B(v)$ that choose $S_v$, and let $X'=|B(v)|-X$. Note that $e \in T(v)$ since $(S_v,e)$ is good, therefore $s(e) \geq s(e')$ for any edge $e' \in B(v)$.
By Claim \ref{greater}, any edge $e' \in B(v)$ chooses $S_v$ with probability at least $\frac{1}{2}$. Hence, $E[X] \geq \frac{|B_v|}{2}$. Equivalently, $E[X'] \leq \frac{|B_v|}{2}$. Using Markov's inequality we get
$$Pr[X < \frac{|B_v|}{4}]=Pr[X' > \frac{3}{4}|B_v|] \leq Pr[X' \geq \frac{3}{2}E[X']] \leq \frac{2}{3}.$$
Hence, we get $Pr[X \geq \frac{|B_v|}{4}] \geq \frac{1}{3}$. Since $|B_v| \geq \frac{|C_v|}{2}$, it holds that $X \geq \frac{|C_v|}{8}$ with probability at least $\frac{1}{3}$. In this case, at least $\frac{|C_v|}{8}$ edges choose $S_v$, and it is added to the spanner. This completes the proof.
\end{proof}

We can now bound the value of the potential function, by proving Lemma \ref{dec}.

\begin{proof} [Proof of Lemma \ref{dec}.]
Let $\phi$ and $\phi'$ be the values of the potential function at the beginning and end of a legal iteration $i$.
It holds that $\phi = \sum_{v:\rho_v=\rho}|C_v|= \sum_{(S_v,e)} 1 = \sum_{e} s(e)$, where we sum over all the edges 2-spanned by candidates having rounded density $\rho$, and over all the pairs $(S_v,e)$ where $v$ is a candidate having rounded density $\rho$ that 2-spans $e$. Note that the rounded density of all the candidates that 2-span an edge $e$ is the same, since they have maximal rounded density in their 2-neighborhood. If the edge $e$ chooses the star $S_v$, and the star $S_v$ is added to the spanner, $\phi$ decreases by $s(e)$. We ascribe this decrease to the pair $(S_v,e)$. Since $e$ chooses only one candidate, any decrease in $\phi$ is ascribed only to one pair. Hence, we get
\begin{flalign*}
E[\phi - \phi'] &\geq \sum_{(S_v,e)} Pr[e\ chooses\ S_v,S_v\ is\ chosen]\cdot s(e) \\
&\geq \sum_{(S_v,e)\ is\ good} Pr[e\ chooses\ S_v]\cdot Pr[S_v\ is\ chosen|e\ chooses\ S_v]\cdot s(e) \\
&\geq \sum_{(S_v,e)\ is\ good} \frac{1}{s(e)} \cdot \frac{1}{3} \cdot s(e) = \frac{1}{3} \sum_{(S_v,e)\ is\ good} 1.
\end{flalign*}
Since at least half of the pairs are good, we get $E[\phi - \phi'] \geq \frac{1}{6} \phi$, or equivalently $E[\phi'] \leq \frac{5}{6} \phi$, which completes the proof.
\end{proof}

In conclusion, we get the following.

\begin{lemma} \label{time}
The time complexity of the algorithm is $O(\log{n} \log{\Delta})$ rounds w.h.p.
\end{lemma}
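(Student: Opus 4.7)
The plan is to decompose the execution into $O(\log \Delta)$ \emph{phases}, each lasting $O(\log n)$ iterations w.h.p. A phase is the maximal run of consecutive iterations during which the global maximum rounded density $\rho^\star := \max_v \rho_v$ stays the same. Since the unrounded density of any $v$-star $S$ satisfies $\rho(S) = |C_S|/|S| \leq \binom{|S|}{2}/|S| \leq \Delta/2$ (because an edge 2-spanned by $S$ has both endpoints among the $|S|$ neighbors of $v$ chosen by $S$), the value $\rho^\star$ ranges over powers of two in $[1,\Delta/2]$, giving at most $O(\log\Delta)$ phases over the whole run.

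Fix a phase during which $\rho^\star = \rho$, and consider the potential $\phi = \sum_{v:\rho_v = \rho}|C_v|$ introduced in the excerpt. At the start of the phase $\phi \leq n\binom{\Delta}{2} = \mathrm{poly}(n)$. The key observation is that $\phi$ is monotone non-increasing between successive iterations inside the phase: a vertex can only leave the sum (by having its rounded density drop, which only decreases $\phi$), and while $\rho_v$ stays at $\rho$, Claim~\ref{iteration_p} guarantees that the newly chosen star satisfies $S_v^{i+1}\subseteq S_v^i$ so $|C_v|$ cannot grow. Combining this monotonicity with Lemma~\ref{dec}, every \emph{legal} iteration inside the phase satisfies $\mathbb{E}[\phi' \mid \phi] \leq c\,\phi$ for a fixed constant $c<1$, so iterating expectations gives $\mathbb{E}[\phi_T] \leq c^T\phi_0$ after $T$ legal iterations. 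Taking $T = C\log n$ for a sufficiently large constant $C$ forces $\mathbb{E}[\phi_T] \leq n^{-\Theta(1)}$, and since $\phi$ is a non-negative integer Markov's inequality yields $\phi_T = 0$ w.h.p. But $\phi = 0$ empties the set $\{v : \rho_v = \rho\}$, since every vertex attaining the global maximum rounded density is a candidate with $|C_v|\geq 1$; hence $\rho^\star$ has strictly decreased and the phase has ended.

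It remains to justify that iterations are legal throughout. In any single iteration the probability that two candidates draw the same number from $\{1,\dots,n^4\}$ is at most $\binom{n}{2}/n^4 \leq 1/n^2$, so by a union bound over the $\mathrm{poly}(n)$ iterations needed, the entire execution consists of legal iterations w.h.p. Combined with the above, the total round count is $O(\log\Delta)\cdot O(\log n) = O(\log n \log\Delta)$ w.h.p., as claimed. The main obstacle the plan navigates is exactly what Claim~\ref{iteration_p} was designed for: without the subtle star-selection rule of Section~\ref{choose}, $\phi$ could increase from one iteration to the next within a single phase, which would break the clean geometric-decay bookkeeping and destroy the bound on the length of a phase.
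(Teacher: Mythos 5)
Your proof is correct and follows essentially the same route as the paper: decompose into $O(\log\Delta)$ phases of constant $\rho^\star$, use Claim~\ref{iteration_p} to keep $\phi$ non-increasing within a phase, invoke Lemma~\ref{dec} for geometric decay across legal iterations, and conclude via expectation, legality w.h.p., and the observation that $\phi$ is integer-valued. The only cosmetic difference is the last step — you go through Markov's inequality on $\mathbb{E}[\phi_T]$ while the paper invokes a Chernoff bound on the number of iterations — but both are standard ways of boosting the expected bound to a w.h.p.\ bound and yield the same conclusion.
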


\begin{proof}
It holds that the maximum density of a star of size $k$ is at most $O(k^2)$ and the algorithm terminates when the maximum density is at most 1. Since densities are rounded to powers of 2, 
$\rho=\max_{v \in V}{\rho_v}$ may obtain at most $O(\log{\Delta})$ values, where $\Delta$ is the maximum degree. In addition, by Lemma \ref{dec}, if $\rho$ has the same value at iterations $j$ and $j+1$, and $j$ is a legal iteration, then the value of $\phi$ decreases between these iterations by a factor of at least $1/c$ in expectation. Since the random numbers $r_v$ are chosen from $\{1,...,n^4\}$, they are different w.h.p, giving that if $\rho$ has the same value in any two consecutive iterations then the value of $\phi$ decreases between these iterations by a constant factor in expectation.
Since $\phi \leq n^3$, after $O(\log{(n^3)})=O(\log{n})$ iterations in expectation, the value of $\rho$ must decrease. This shows that the time complexity is $O(\log{n} \log{\Delta})$ rounds in expectation. A Chernoff bound then gives that this also holds w.h.p.
\end{proof}

Lemma \ref{approx} and Lemma \ref{time} complete the proof of Theorem \ref{minimumSpanner}.

\subsection{Additional 2-spanner approximations} \label{sec:additional}

Here we show that the algorithm extends easily to the following variants: the directed 2-spanner problem, the weighted 2-spanner problem and the client-server 2-spanner problem. We describe the differences in the algorithm and analysis in each of these cases.

\subsubsection{Directed 2-spanner approximation}

In the directed case we consider $\textit{directed}$ stars. A $v$-star \emph{2-spans} a directed edge $(u,w)$ if it includes the directed edges $(u,v),(v,w)$. A $v$-star may include both ingoing and outgoing edges of $v$. The definition of densities follows the definition in the undirected case. 

In order to give an algorithm that requires only polynomial local computations for the directed variant, we show how to approximate the rounded density and the densest star in the directed case. The rest of the analysis follows the undirected case, and gives the following.

\begin{theorem}
There is a distributed algorithm for the directed 2-spanner problem in the \local model that guarantees an approximation ratio of $O(\log{\frac{m}{n}})$, and takes $O(\log{n} \log{\Delta})$ rounds w.h.p.
\end{theorem}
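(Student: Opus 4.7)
The plan is to adapt the algorithm of Section~\ref{sec:alg} to directed graphs by redefining stars and $2$-spanning in the directed sense, and then to check that the approximation and time analysis go through essentially verbatim. A directed $v$-star $S$ is a non-empty subset of directed edges incident to $v$ (of either orientation), and $S$ is said to $2$-span a directed edge $(u,w)$ when both $(u,v)$ and $(v,w)$ belong to $S$. Density, rounded density, candidacy, and the voting scheme are defined exactly as in the undirected algorithm. The algorithm proper (steps 1--7 of Section~\ref{sec:alg}) is then run unchanged.

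The first technical ingredient to supply is a polynomial-time procedure for the densest directed $v$-star. The problem reduces to a densest-subgraph computation on the bipartite graph $B_v$ whose left side is the set of in-neighbors of $v$, whose right side is the set of out-neighbors, and where $\{u,w\}$ is an edge iff $(u,w)$ is a currently uncovered directed edge of $G$; then choosing a directed $v$-star of density $\rho$ is equivalent to choosing a vertex set $I\cup O\subseteq L(B_v)\cup R(B_v)$ with $|E_{B_v}(I,O)|/(|I|+|O|)=\rho$. This is the classical densest-subgraph objective, solvable exactly in polynomial time via Goldberg's max-flow reduction (and, if an exact solution is inconvenient, a standard $2$-approximation suffices at the cost of only a constant factor in all subsequent bounds). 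From this, the rounded density $\tilde{\rho}(v,H_v)$ is obtained by rounding up to the nearest power of two. The construction of $S_v$ in Section~\ref{choose} is then replayed: start from a densest directed $v$-star, and greedily augment either with a single directed edge (ingoing or outgoing) or with a disjoint directed sub-star, as long as the density stays $\ge \rho_v/4$. Each such step requires only polynomially many density evaluations.

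Next I would verify the approximation analysis of Section~\ref{sec:approx}. The cost-assignment is defined on directed edges exactly as before, and Lemma~\ref{cost} goes through unchanged because its proof only uses the voting rule ``a surviving star gets at least $|C_v|/8$ votes,'' which is independent of orientation. For Lemma~\ref{Hci} I would define, for each subset $E_j$, an optimal directed $2$-spanner $H^*_j$ and, for each vertex $v$, the full directed $v$-star $S^*_j(v)\subseteq H^*_j$. The key identity $\sum_{S\in Stars_j}|S|=2|H^*_j|$ still holds, since each directed edge $(u,w)\in H^*_j$ lies in exactly two stars of $Stars_j$ (as an outgoing edge of $u$ and an ingoing edge of $w$). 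All the density manipulations in the argument rely on Observation~\ref{obs2} and on candidates of equal rounded density in a common $2$-neighborhood, both of which are unaffected by direction; the same counting then yields $|H|=O(\log(m/n))|H^*|$.

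Finally, I would lift the time-complexity analysis of Section~\ref{sec:time}. Claim~\ref{iteration_p} uses only Observation~\ref{obs2} and the greedy augmentation rule in Section~\ref{choose}, so it transfers directly to directed stars: as long as $\tilde{\rho}(v,H_v)$ is unchanged between iterations, $v$ keeps picking a star contained in its previous choice. The potential function $\phi=\sum_{v:\rho_v=\rho}|C_v|$ is defined identically, and Lemma~\ref{dec} and its supporting claims about ``good'' pairs depend only on the probabilistic voting analysis, not on undirectedness. Hence $\phi$ shrinks by a constant factor per legal iteration in expectation whenever $\rho$ does not drop, and the $O(\log n\log\Delta)$ bound follows from the same Chernoff argument.

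The main obstacle is the first step: ensuring that densest (or nearly-densest) directed $v$-stars, and the subsequent greedy augmentation, can genuinely be performed in polynomial time. If the exact max-flow route turns out to be fragile under the constraint that we keep $S_v^{i+1}\subseteq S_v^i$ (required by Claim~\ref{iteration_p}), the fallback is to use a constant-factor approximation for the contained-densest sub-star and absorb the resulting constant into the $\rho_v/4$ threshold; the whole analysis is insensitive to such constant-factor slackness in densities.
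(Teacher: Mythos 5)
Your proposal is correct and, except for one ingredient, follows the same route as the paper: you redefine stars, density, and 2-spanning for directed edges, reuse the voting/augmentation scheme verbatim, and check that Lemma~\ref{cost}, Lemma~\ref{Hci} (via the observation that each directed edge lies in exactly two full stars), Claim~\ref{iteration_p}, and Lemma~\ref{dec} all transfer, which they do.

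Where you genuinely diverge is the one step the paper singles out as needing new work for the directed case: computing a dense directed $v$-star with polynomial local computation. You reduce the densest directed $v$-star exactly to a densest-subgraph instance on the bipartite graph $B_v$ whose sides are the in-edges and out-edges of $v$, with a cross-edge for each uncovered $(u,w)$ 2-spannable through $v$; this is solvable exactly by Goldberg's max-flow method, and the restricted variant needed for the containment $S_v^{i+1}\subseteq S_v^{i}$ in Claim~\ref{iteration_p} is obtained by simply deleting from $B_v$ the vertices outside $S_v^{i}$. That is clean and correct. The paper instead never computes the directed densest star exactly: it deletes the directed edges $(u,w)$ that cannot be 2-spanned through $v$, forgets orientations, computes an \emph{undirected} densest star as in Section~\ref{sec:alg}, re-directs it, and proves in Claims~\ref{dir_density} and~\ref{dir_density2} that this yields a $2$-approximation of the directed density. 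That 2-approximation forces bookkeeping your route avoids: the approximate rounded density could otherwise increase between iterations (the paper patches this by taking a running minimum), and the augmentation thresholds in Section~\ref{choose} must be relaxed from $\rho_v/4$ to $\rho_v/8$. Your exact route eliminates both adjustments; the paper's route avoids introducing and justifying a new optimization subroutine. Both give the same asymptotic approximation ratio and round complexity, and your closing fallback (tolerate a constant-factor slack in density estimates) is in effect a description of what the paper actually does.
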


To compute an approximation for the densest (directed) $v$-star, we look at all the edges between neighbors of $v$, and remove all the directed edges $(u,w)$ that cannot be 2-spanned by a $v$-star (such an edge can be 2-spanned by a $v$-star, only if the two directed edges $(u,v),(v,w)$ exist in the graph). Now we ignore the directions of edges and compute the densest $v$-star as in the undirected case. Let $S_v$ be the star computed.
Let $\rho_U=\rho_U(S_v)$ be the undirected density of $v$ (when ignoring edges that cannot be 2-spanned by a directed path), and let $\rho_D$ be the directed density of $v$. We will show that $\frac{\rho_U}{2} \leq \rho_D \leq \rho_U$, and that $\frac{\rho_U}{2} \leq \rho_D(S_v)$, which shows that $S_v$ gives a 2-approximation for the densest directed $v$-star. When computing $\rho_D(S_v)$ we replace each undirected edge $\{v,u\}$ in $S_v$ by the two directed edges $(v,u),(u,v)$ if both of them exist in the graph, or by the one that exists otherwise.

\begin{claim} \label{dir_density}
$\frac{\rho_U}{2} \leq \rho_D(S_v)$.
\end{claim}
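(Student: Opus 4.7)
The plan is to bound the denominator and numerator of $\rho_D(S_v)$ separately, using how $S_v$ is obtained from the filtered neighborhood graph.

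First I would set up notation precisely. Let $S_v^D$ denote the directed $v$-star obtained from the undirected $S_v$ by the replacement rule: for each undirected edge $\{v,u\} \in S_v$, include $(v,u)$ in $S_v^D$ if $(v,u) \in E$, and include $(u,v)$ in $S_v^D$ if $(u,v) \in E$. Let $C_U$ be the set of undirected edges 2-spanned by $S_v$ in the filtered graph, so that $\rho_U = |C_U|/|S_v|$. Let $C_D(S_v)$ be the set of directed edges 2-spanned by $S_v^D$, so that $\rho_D(S_v) = |C_D(S_v)|/|S_v^D|$.

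Next, the denominator bound is immediate: each undirected edge $\{v,u\}$ in $S_v$ contributes at most two directed edges to $S_v^D$, hence $|S_v^D| \leq 2|S_v|$.

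For the numerator, I would argue that every undirected edge $\{u,w\} \in C_U$ gives rise to at least one directed edge in $C_D(S_v)$. Indeed, the filtering step only retains an undirected edge $\{u,w\}$ between neighbors of $v$ if at least one of the directed edges $(u,w)$, $(w,u)$ is 2-spannable through $v$, i.e., if either $(u,v),(v,w) \in E$ or $(w,v),(v,u) \in E$. Since $u, w \in S_v$, in the first case $(u,v)$ and $(v,w)$ both lie in $S_v^D$ by construction, so $(u,w)$ is 2-spanned by $S_v^D$; symmetrically in the second case $(w,u)$ is 2-spanned. Either way $\{u,w\}$ contributes a directed edge to $C_D(S_v)$, and these contributions are distinct for distinct undirected edges. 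Therefore $|C_D(S_v)| \geq |C_U|$.

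Combining the two bounds yields
\[
\rho_D(S_v) \;=\; \frac{|C_D(S_v)|}{|S_v^D|} \;\geq\; \frac{|C_U|}{2|S_v|} \;=\; \frac{\rho_U}{2},
\]
as claimed. The only subtle point — and the place where I would be most careful — is to verify that the ``filtering + forget directions'' construction correctly ensures the implication ``$\{u,w\}\in C_U \Rightarrow$ some oriented version of it is 2-spanned by $S_v^D$''. Beyond this, the argument is a straightforward double counting.
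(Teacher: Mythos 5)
Your proof is correct and follows essentially the same route as the paper's: bound $|S_v^D|$ above by $2|S_v|$ and argue that every undirected edge in $C_U$ yields at least one directed edge 2-spanned by $S_v^D$, giving $\rho_D(S_v) \geq |C_U|/(2|S_v|) = \rho_U/2$. You spell out the numerator step (why the filtering guarantees the oriented version is 2-spannable) in more detail than the paper, which simply asserts it, but the structure of the argument is identical.
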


\begin{proof}
Let $C_v$ be the edges 2-spanned by $S_v$ in the undirected case. Since $S_v$ is the densest undirected $v$-star, $\rho_U = \rho_U(S_v) = \frac{|C_v|}{|S_v|}$. The directed star $S_v$ 2-spans all the edges in $C_v$ (some of them may be counted twice in the directed case, which only increases the density), and it contains at most twice edges because we replaced each undirected edge by at most two edges, which gives $\rho_D(S_v) \geq \frac{|C_v|}{2|S_v|}=\frac{\rho_U}{2}$. 
\end{proof}

\begin{claim} \label{dir_density2}
$\frac{\rho_U}{2} \leq \rho_D \leq \rho_U$.
\end{claim}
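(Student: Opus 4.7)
The plan is to handle the two inequalities separately. The lower bound $\rho_U/2 \leq \rho_D$ will be immediate from Claim \ref{dir_density}: since $\rho_D$ is by definition the maximum directed density over all directed $v$-stars, we obtain $\rho_D \geq \rho_D(S_v) \geq \rho_U/2$.

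For the upper bound $\rho_D \leq \rho_U$, I will take an optimal directed $v$-star $S_D^*$ achieving density $\rho_D = |C_D(S_D^*)|/|S_D^*|$ and exhibit an undirected $v$-star whose density is at least $\rho_D$. The natural candidate is the undirected $v$-star $S_U$ on $N(S_D^*)$, the set of neighbors touched by $S_D^*$. I will partition $N(S_D^*) = A \sqcup B \sqcup AB$ according to whether $S_D^*$ contains only $(u,v)$, only $(v,u)$, or both directions at the neighbor $u$; then $|S_U| = |A|+|B|+|AB|$ while $|S_D^*| = |A|+|B|+2|AB|$, so $|S_U| \leq |S_D^*|$. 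Each directed span $(u,w) \in C_D(S_D^*)$ induces the undirected edge $\{u,w\}$ in the underlying graph, since $(u,w) \in G'$ means $\{u,w\}$ lies in the undirected graph and both endpoints lie in $N(S_D^*)$, so $\{u,w\}$ is 2-spanned by $S_U$. This gives a natural map from $C_D(S_D^*)$ to $C_U(S_U)$.

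The main technical obstacle lies in the $AB$-vertices: the map from directed spans to undirected spans is potentially two-to-one, collapsing exactly those pairs $\{u,w\} \subseteq AB$ for which both $(u,w)$ and $(w,u)$ are 2-spanned by $S_D^*$. Under the multi-graph interpretation of the ``ignore directions'' step (two parallel undirected edges whenever both directions appear in $G'$), the map is a bijection and one immediately gets $|C_U(S_U)| = |C_D(S_D^*)|$ and hence $\rho_U \geq \rho_U(S_U) \geq \rho_D$. Under a simple-graph interpretation the plan is to show that the saving of $|AB|$ star edges in the denominator exactly compensates for the double-counting in the numerator, by appealing to the optimality of $S_D^*$: for each $u \in AB$, removing either $(u,v)$ or $(v,u)$ from $S_D^*$ cannot strictly improve the density, which forces the number of spans through that directed star edge to be at least $\rho_D$ and thereby bounds the number of doubly-spanned $AB$-pairs by $|AB|\cdot \rho_D$, closing the argument.
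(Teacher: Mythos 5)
Your lower bound $\rho_U/2 \leq \rho_D$ matches the paper and is fine, and the upper-bound setup is also correct: passing to the simple-graph $v$-star $S_U$ on $N(S_D^*)$, writing $|S_U| = |S_D^*| - |AB|$ and $|C_U(S_U)| \geq |C_D| - d$, and thereby reducing $\rho_D \leq \rho_U$ to showing $d \leq \rho_D |AB|$ is exactly the paper's calculation in its own notation ($|C_2| = 2d$, $|S_2| = 2|AB|$, so the claim is $|C_2|/|S_2| \leq \rho_D$, equivalently $|C_1|/|S_1| \geq \rho_D$ by Observation~\ref{obs2}).

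The gap is in how you try to close this last step. Single-edge optimality of $S_D^*$ does give $c_e \geq \rho_D$ for every directed star edge $e$, but this is a \emph{lower} bound on the number of spans lost by removing $e$, and no step in your chain converts it into an \emph{upper} bound on $d$; the word ``thereby'' is carrying an implication that is not there. The needed inequality $d \leq \rho_D |AB|$ is tight (for example when $A = B = \emptyset$ and every ordered pair in $AB$ is 2-spanned, so $d = \binom{|AB|}{2}$ and $\rho_D = (|AB|-1)/2$), whereas in that same instance $c_e = 2\rho_D$, so the per-edge bound has slack that your argument never exploits, confirming the implication cannot go through as written.

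The paper closes this step with a different use of optimality. It observes that the ``symmetric part'' $S_2$ of $S_D^*$ --- the directed edges $(u,v)$ and $(v,u)$ for $u \in AB$, so $|S_2| = 2|AB|$ --- is itself a directed $v$-star, and that $S_2$ already 2-spans \emph{all} of $C_2$: if both $(u,w)$ and $(w,u)$ are spanned then both $u$ and $w$ lie in $AB$, hence the four edges $(u,v),(v,w),(w,v),(v,u)$ all belong to $S_2$. Therefore $\rho_D(S_2) \geq |C_2|/|S_2|$, while optimality of $S_D^*$ forces $\rho_D(S_2) \leq \rho_D$; combining gives $|C_2|/|S_2| \leq \rho_D$, i.e.\ $d \leq \rho_D |AB|$. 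Replacing your single-edge-removal step with this sub-star observation would complete your proof; as written, it does not close.
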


\begin{proof}
By Claim \ref{dir_density}, $\frac{\rho_U}{2} \leq \rho_D(S_v) \leq \rho_D$. We next show $\rho_D \leq \rho_U$. Let $S_D$ be the densest directed $v$-star, and let $C_D$ be the directed edges 2-spanned by $S_D$. We write $C_D = C_1 \cup C_2$ where an edge $(u,w) \in C_D$ is in $C_2$ if and only if the edge $(w,u)$ is also in $C_D$. We write $S_D = S_1 \cup S_2$ where an edge $(v,u) \in S_D$ is in $S_2$ if and only if the edge $(u,v)$ is also in $S_D$. Now we look at the undirected density of $S_D$ (if we have two directed edges $(w,u),(u,w)$ they are replaced by one undirected edge). $$\rho_U(S_D) = \frac{|C_1|+\frac{|C_2|}{2}}{|S_1| + \frac{|S_2|}{2}} = \frac{\frac{|C_1|}{2}+ \frac{|C_1|+|C_2|}{2}}{\frac{|S_1|}{2}+\frac{|S_1|+|S_2|}{2}} \geq \min{\left\lbrace  \frac{|C_1|}{|S_1|},\frac{|C_1|+|C_2|}{|S_1|+|S_2|}\right\rbrace }=\min{\left\lbrace \frac{|C_1|}{|S_1|},\rho_D\right\rbrace },$$
where the second inequality follows from Observation \ref{obs2}, and the last equality follows since $S_D$ is the densest directed $v$-star, and its directed density equals $\frac{|C_1|+|C_2|}{|S_1|+|S_2|}$. 

We next show that $\frac{|C_1|}{|S_1|} \geq \rho_D$. Since $\rho_D = \frac{|C_1|+|C_2|}{|S_1|+|S_2|}$, if $\frac{|C_1|}{|S_1|} < \rho_D$, we get by Observation \ref{obs2} that $\frac{|C_2|}{|S_2|}> \rho_D$. Note that the directed star $S_2$ 2-spans all the directed edges in $C_2$ (and it may 2-span additional edges), because all the edges in $C_2$ appear in both directions, which means that the paths that 2-span them also appear in both directions in $S_D$, which means that all the edges of these paths are in $S_2$. This shows that $S_2$ is a directed star with density greater than $\rho_D$, in contradiction to the definition of $\rho_D$. In conclusion $\frac{|C_1|}{|S_1|} \geq \rho_D$, which shows that $\rho_U \geq \rho_U(S_D) \geq \min{\left\lbrace \frac{|C_1|}{|S_1|},\rho_D\right\rbrace } = \rho_D.$
\end{proof}

Claims \ref{dir_density} and \ref{dir_density2} show that we can approximate the directed density and the densest directed star using polynomial local computations. Having this, we can adapt the algorithm to the directed 2-spanner problem. We approximate the directed density of $v$ with $\rho_D(S_v)$, which gives a 2-approximation. Then, we round the value of it to the closest power of two that is greater than $\rho_D(S_v)$, and denote the rounded value by $\rho_v$, this is a 2-approximation to the rounded density of $v$.\footnote{Since we compute an approximation to the density, the value of $\rho_v$ may increase between iterations. To avoid such cases, we always define it to be the minimum between the value in the last iteration and the value computed in the current iteration. This is always a 2-approximation for the rounded density since the density can only decrease between iterations.} While the value of $\rho_v$ remains the same we choose stars similarly to the undirected case, the only difference is that when we look for a dense disjoint star, we do not necessarily find the densest directed disjoint star but a 2-approximation for it. For the analysis to work, we need to look for disjoint stars with density at least $\frac{\rho_v}{8}$ and not $\frac{\rho_v}{4}$ as in the undirected case, and we add edges or disjoint stars to the star $S$ we choose as long as the density of $S$ is at least $\frac{\rho_v}{8}$. The rest of the analysis is similar, the constants change sightly since we choose stars that are less dense, and work with an approximation to the density. 

\subsubsection{Weighted 2-spanner approximation}

In the weighted case, the cost of a spanner is $w(H)$, rather than $|H|$ as in the unweighted case.
This requires several changes in the algorithm and the analysis.
Let $W$ be the ratio between the maximum and minimum positive weights of an edge. We show the following.

\begin{theorem} \label{weightedAlg}
There is a distributed algorithm for the weighted 2-spanner problem in the \local model that guarantees an approximation ratio of $O(\log{\Delta})$, and takes $O(\log{n} \log{(\Delta W)})$ rounds w.h.p.
\end{theorem}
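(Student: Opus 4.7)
The plan is to adapt the algorithm of Section~\ref{sec:alg} to weighted graphs by modifying only the definition of density and the termination condition, while preserving the overall structure of candidate selection and voting. I would redefine the density of a $v$-star $S$ with respect to $H$ as $\rho(S, H) = |C_S \cap H|/w(S)$, so that density still measures ``new edges covered per unit weight spent.'' The densest $v$-star remains computable in polynomial time via parametric max-flow, since the weighted densest-subgraph problem admits a polynomial exact algorithm. The choose-star procedure of Section~\ref{choose} generalizes verbatim: starting from the (weighted) densest star, the candidate $v$ iteratively adds edges or disjoint dense stars while keeping the density at least $\rho_v/4$. The voting procedure is unchanged but with rounded densities taken over the new $\rho$. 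When the maximum rounded density in the $2$-neighborhood drops below an appropriate threshold, any remaining uncovered edges are added to the spanner directly.

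For the approximation ratio, I would assign $cost(e) = 1/\rho$ to an edge $e$ first covered by a greedy star of density $\rho$, and $cost(e) = w(e)$ to edges added at termination. The voting lemma carries over in the weighted setting: if $S_v$ is added in an iteration with density $\rho$, then since it receives at least $|C_v|/8$ votes from the edges it newly covers, $\sum_{e \in Votes(S_v)} cost(e) \geq (|C_v|/8)(1/\rho) = w(S_v)/8$, yielding $w(H_1) \leq 8\sum_{e \notin H_2} cost(e)$ and $w(H_2) \leq \sum_{e \in H_2} cost(e)$ trivially.

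The key change is in bounding $\sum_e cost(e)$ by $O(\log\Delta)\, w(H^*)$. Unlike the unweighted case, where a bucketing argument exploits $|H^*| \geq n-1$, I would use a harmonic-sum argument, since $w(H^*)$ has no analogous lower bound. Let $Stars^* = \{S^*_v\}_{v\in V}$ be the full $v$-stars of $H^*$. For each $S^* \in Stars^*$ of weight $w^*$, let $e_1,\ldots,e_\ell$ be the edges covered by $S^*$, ordered by their time of first coverage in the algorithm, with $\ell \leq \binom{\Delta}{2}+\Delta = O(\Delta^2)$. At the iteration when $e_i$ is covered, at least $\ell-i+1$ edges covered by $S^*$ remain uncovered, so $\rho(S^*) \geq (\ell-i+1)/w^*$. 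Because the center $v^*$ of $S^*$ lies in the $2$-neighborhood of the greedy covering star's center, the maximum rounded density there is at least $\rho(S^*)$, so the greedy star chosen has density at least $(\ell-i+1)/(4w^*)$, which gives $cost(e_i) \leq 4w^*/(\ell-i+1)$. Summing, $\sum_{i=1}^\ell cost(e_i) \leq 4w^* H_\ell = O(w^* \log\Delta)$. Summing over $S^* \in Stars^*$, and using $\sum_v w(S^*_v) = 2w(H^*)$ together with the observation that each edge $e$ is covered by at most a bounded number of relevant full stars in $Stars^*$ (absorbed into the constant), yields $\sum_e cost(e) \leq O(\log\Delta)\, w(H^*)$. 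Combined with the voting bound, this gives the claimed $O(\log\Delta)$ approximation.

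For the time complexity, the only change is the range of rounded densities: the values $|C|/w$ now lie in roughly $[1/w_{\max},\, \Delta^2/w_{\min}]$, so after rounding to powers of two there are $O(\log(\Delta^2 W)) = O(\log(\Delta W))$ distinct values. The Claim~\ref{iteration_p}-style monotonicity of $S_v$ while the rounded density is unchanged transfers verbatim, and so does the potential-function analysis of Lemma~\ref{dec}, giving $O(\log n)$ iterations per density level in expectation, and $O(\log n \log(\Delta W))$ rounds w.h.p.\ overall. The main obstacle I expect is making the termination clean: in the unweighted case the threshold $\rho \leq 1$ is natural, but with weights the correct threshold and the bound on $w(H_2)$ in terms of $w(H^*)$ are more delicate, because edges in $H_2$ may have arbitrary weights; one resolution is to include ``singleton covers'' $\{e\}$ of density $1/w(e)$ as legitimate candidates, folding the termination step into the main loop and letting the harmonic argument uniformly handle all covered edges.
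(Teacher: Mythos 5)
Your proposal follows the paper's proof essentially line for line: redefine density as (number of 2-spanned uncovered edges)/(star weight), keep the candidate-and-voting loop, set $cost(e)=1/\rho$ with the same voting bound $\sum_{e\in Votes(S_v)}cost(e)\geq w(S_v)/8$, and — the key technical change — replace the unweighted bucketing argument by a per-star harmonic sum, exactly as in Lemma~\ref{weighted}, then count $O(\log(\Delta W))$ density levels for the running time. So this is the same route, not an alternative one.

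Two small corrections to tighten the argument. First, the parenthetical ``absorbed into the constant'' about an edge being covered by several full stars of $Stars^*$ has the logic backwards: overcounting in $\sum_{S^*}\sum_{e\in C_{S^*}}cost(e)$ can only help the upper bound, and what is actually needed is that every costed edge is 2-spanned by \emph{at least one} $S^*\in Stars^*$; moreover, edges lying in $Stars^*$ itself (i.e.\ edges of $H^*$) are not 2-spanned by their own star and must be handled by the separate bound $\sum_{e\in E'\cap Stars^*}cost(e)\le\sum_{S^*}w(S^*)$, which uses $cost(e)\le w(e)$ rather than the harmonic argument. Second, on termination: the paper does not use your singleton-cover device (which would require redefining density, since a single edge $\{e\}$ 2-spans nothing and hence has density $0$ under the current definitions). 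Instead it initializes the spanner with all weight-$0$ edges, terminates a vertex once the maximum density in its $2$-neighborhood drops to at most $1/w_{\max}$ (with $w_{\max}$ the largest weight there), assigns $cost(e)=w(e)$ to the residual edges $H_2$, and bounds $\sum_{e\in H_2}cost(e)$ directly by $\sum_i w(e_i)\le\ell\, w_{\max}\le w(S)$ for each star $S$ of an optimal spanner for $H_2$. You correctly identified this as the one genuinely delicate spot, but the concrete fix you sketch is not the one that meshes with the density machinery as defined.
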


We next describe the differences in the weighted case.
For a star $S$, we define $w(S) = \sum_{e \in S} w(e)$. If $w(S) \neq 0$, we define $\rho(S,H)=\frac{|C_S|}{w(S)},$ where $C_S$ is the set of edges of $H$ 2-spanned by the star $S$. If $w(S) = 0$, we define $\rho(S,H) = 0$. We emphasize that we take the number of potentially 2-spanned edges and not the sum of their weights, since, intuitively, all edges need to be covered (as opposed to taking the sum of weights of the edges of the star, which is due to the need to optimize the cost of the spanner).

In the beginning of the algorithm we add all the edges of weight $0$ to the spanner. By doing this, all the edges covered by stars $S$ of weight $0$ are already covered. Hence, the algorithm should only consider stars for which $w(S) > 0$. When we round the densities to the closest power of two, we include also negative powers of two, since the density of a star may be smaller than $1$, depending on the weights. A slight difference is that now a vertex terminates if the density in its 2-neighborhood is at most $\frac{1}{w_{max}}$ where $w_{max}$ is the maximal weight of an edge adjacent to a vertex in its 2-neighborhood. In such case, it adds to the spanner all the edges adjacent to it that are still not covered. We denote by $H_2$ all these edges.
The rest of the algorithm is the same, according to the new definition of $\rho$. As was observed in the sequential algorithm for the weighted case \cite{kortsarz2001hardness}, we can find the densest star in the weighted case using flow techniques as well.

We next describe the differences in the analysis. The cost of the solution obtained by the algorithm is now $w(H)$, and the cost of an optimal solution is $w(H^*)$. We give to edges $e \not \in H_2$ a cost as in the unweighted case, but depending on the new definition of the density $\rho$. In addition, for edges $e \in H_2$ we define $cost(e) = w(e)$. Our goal is to show that $$w(H) \leq 8 \sum_{e \in E} cost(e) \leq O(\log{\Delta}) w(H^*).$$

The proof that $w(H) \leq 8 \sum_{e \in E} cost(e)$ is the same as the proof of Lemma \ref{cost} with minor changes. Note that $w(H_2) = \sum_{e \in H_2} cost(e)$ by definition. Now $w(H_1) \leq \sum_{S \in Stars} w(S)$, for the same reason as in the unweighted case. In addition, the new definition of $\rho$, gives that $w(S) \leq 8\sum_{e \in Votes(S)} cost(e),$ and the rest of the proof follows.

However, the difference from the unweighted case is that we can no longer show an approximation ratio of $O(\log{\frac{m}{n}})$. This is because the density of stars added in the algorithm may now be smaller than 1, and because the weight of an optimal 2-spanner may be smaller than $n-1$.
Still, we show an approximation ratio of $O(\log{\Delta})$. Some elements of our analysis have similar analogues in the classic analysis of the greedy set cover algorithm \cite{johnson1974approximation, chvatal1979greedy, lovasz1975ratio}. First, instead of Lemma \ref{Hci} we show the following.

\begin{lemma} \label{weighted}
$\sum_{e \in E} cost(e) \leq O(\log{\Delta})w(H^*)$.
\end{lemma}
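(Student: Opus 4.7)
The plan is to adapt the structure of Lemma~\ref{Hci}, but to replace the geometric partition of $E$ by cost ranges with a harmonic-sum argument carried out per optimal star (so that the $O(\log(m/n))$ factor is replaced by $O(\log \Delta)$). For each vertex $v$, let $S^*_v$ be the full $v$-star in $H^*$ and let $C_{S^*_v}$ denote the set of edges of $E$ that are 2-spanned by $S^*_v$. Since $H^*$ is a 2-spanner, every edge belongs to at least one $C_{S^*_v}$, so
$$\sum_{e \in E} cost(e) \;\leq\; \sum_{v \in V} \sum_{e \in C_{S^*_v}} cost(e),$$
and the main work is to bound the inner sum by $O(\log \Delta)\cdot w(S^*_v)$.

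Fix $v$, and order $C_{S^*_v}=(e_1,\ldots,e_\ell)$ by the iteration in which each edge is first covered (or added to $H_2$), breaking ties arbitrarily. Since the members of $C_{S^*_v}$ are edges between two neighbors of $v$, we have $\ell\leq \binom{\Delta}{2}=O(\Delta^2)$. The key inequality I would prove is
$$cost(e_i) \;\leq\; \frac{4\,w(S^*_v)}{\ell-i+1} \qquad \text{for every } 1\le i\le \ell.$$
At the start of the iteration in which $e_i$ is first covered, the edges $e_i,e_{i+1},\ldots,e_\ell$ are all still uncovered and therefore lie in $H_v$, so $\rho(S^*_v,H_v)\geq (\ell-i+1)/w(S^*_v)$.

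The proof of the inequality splits into three cases, paralleling the three ways $cost(e_i)$ can be assigned. If $cost(e_i)=0$ the bound is trivial. If $e_i$ is covered by the candidate star $S_{v''}$ it votes for, then $v$ and $v''$ are both common neighbors of the endpoints of $e_i$, so $v''$ lies within distance $2$ of $v$; since $v''$ is a candidate, $\rho_{v''}\geq \rho_v \geq \rho(v,H_v)\geq \rho(S^*_v,H_v)$, and combining with the guarantee that $S_{v''}$ has actual density at least $\rho_{v''}/4$ gives the bound. If $e_i\in H_2$, then $cost(e_i)=w(e_i)$; an endpoint $u$ of $e_i$ terminated and added $e_i$ to the spanner, so the maximal density in the $2$-neighborhood of $u$ was at most $1/w_{\max}$ with $w_{\max}\geq w(e_i)$, and since $v$ is within distance $2$ of $u$ we get $\rho(S^*_v,H_v)\leq 1/w(e_i)$, which rearranges to $w(e_i)\leq w(S^*_v)/(\ell-i+1)$.

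Summing over $i$ yields $\sum_{i=1}^\ell cost(e_i)\leq 4\,w(S^*_v)\cdot H_\ell = O(\log \Delta)\,w(S^*_v)$ since $H_\ell=O(\log \Delta^2)=O(\log \Delta)$, and summing over $v$ gives $\sum_e cost(e)\leq O(\log \Delta)\sum_v w(S^*_v) = O(\log \Delta)\cdot 2w(H^*)$, as each edge of $H^*$ lies in exactly two of the stars $S^*_v$. The main obstacle is absorbing the $H_2$ edges into the same harmonic bookkeeping as the edges covered during the main loop; the weighted termination condition ($\rho\leq 1/w_{\max}$ in the $2$-neighborhood) is tailored precisely so that even at termination the density of each still-uncovered optimal star dominates the reciprocal weights of its remaining edges, which is exactly what makes the third case of the key inequality go through.
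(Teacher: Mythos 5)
Your overall plan — bound $\sum_{e \in C_{S^*_v}} cost(e)$ by a harmonic sum $\sum_{i} 4w(S^*_v)/(\ell-i+1) = O(\log \Delta)\, w(S^*_v)$ and then sum over $v$ using $\sum_v w(S^*_v) = 2w(H^*)$ — is exactly the paper's per-optimal-star argument. Your refinement of folding the $H_2$ edges (and the cost-$0$ edges) into the \emph{same} harmonic bookkeeping, via the termination condition $\rho(S^*_v,H_v) \le 1/w_{\max} \le 1/w(e_i)$, is a genuine streamlining: the paper instead runs the harmonic argument separately on $H_2$ against an optimal spanner $H_2^*$ of $H_2$.

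However, there is a concrete gap in the opening inequality. The claim ``since $H^*$ is a $2$-spanner, every edge belongs to at least one $C_{S^*_v}$'' is false: an edge $e \in H^*$ is covered by the path of length $1$ consisting of $e$ itself, and need not be $2$-spanned by any $S^*_v$ (it lies \emph{in} $S^*_u$ and $S^*_w$ as a star edge, but $C_{S^*_v}$ consists only of $2$-spanned edges). In the extreme, if $G$ is a tree with positive weights then $H^* = E$, every $C_{S^*_v}$ is empty, and your opening inequality would read $\sum_e cost(e) \le 0$ while the algorithm puts every edge into $H_2$ with $cost(e) = w(e) > 0$. These edges cannot be absorbed into your harmonic argument either, because $e \in S^*_v$ (incident to $v$) is not $2$-spanned by the full $v$-star, hence never lies in $H_v$, so the density lower bound $\rho(S^*_v,H_v) \ge (\ell-i+1)/w(S^*_v)$ does not apply to them. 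You need a separate term for edges of $E$ that lie in $H^*$ but are not $2$-spanned by any optimal star; the paper accounts for them via the decomposition $\sum_{e\in E'} cost(e) = \sum_{e\in E'\cap Stars^*} cost(e) + \sum_{e\in E'\setminus Stars^*} cost(e)$ and bounds the first term by $\sum_{S\in Stars^*} w(S) = 2w(H^*)$ (and analogously for $H_2$). Adding such a term would repair your argument.

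A smaller omission: when $w(S^*_v)=0$ the density $\rho(S^*_v,H_v)$ is not defined by the ratio; you should note (as the paper does) that in this case all edges of $S^*_v$ have weight $0$ and are taken at initialization, so every $e\in C_{S^*_v}$ has $cost(e)=0$ and the bound is vacuous.
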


\begin{proof}
First, we show that $\sum_{e \in E'} cost(e) \leq O(\log{\Delta})w(H^*),$ where $E' = E \setminus (H_2 \cap E_0)$ and $E_0$ are edges with cost $0$. Edges in $E_0$ clearly do not affect $\sum_{e \in E} cost(e)$. All the edges not in $E_0$ or $H_2$ are 2-spanned in the algorithm, as in the unweighted case.

For each vertex $v$, let $S^*(v)$ be the full $v$-star in $H^*$. We define $Stars^* = \{S^*(v)\}_{v \in V}.$
Consider a star $S \in Stars^*$ and let $(e_1,...,e_{\ell})$ be the sequence of edges 2-spanned by $S$ according to the order in which they are 2-spanned in the algorithm. Assume first that $w(S) \neq 0$.
The density of $S$ at the beginning of the iteration in which $e_1$ is 2-spanned is $\frac{\ell}{w(S)}$. All the candidates that 2-span $e_1$ have the same rounded density since they all have maximal rounded density in their 2-neighborhood. In particular, the density of the star that 2-spans $e_1$ is at least $\frac{\ell}{4w(S)}$, as $v$ chooses a star with density at least $\frac{\rho_v}{4} \geq \frac{\ell}{4w(S)}$. Hence, $cost(e_1) \leq \frac{4w(S)}{\ell}$.
Similarly, the density of $S$ at the beginning of the iteration in which $e_j$ is 2-spanned is at least $\frac{\ell - j +1}{w(S)}$, which gives $cost(e_j) \leq \frac{4w(S)}{\ell - j + 1}$.
This gives, $$\sum_{j=1}^{\ell} cost(e_j) \leq 4w(S) \cdot \sum_{j=1}^{\ell} \frac{1}{\ell - j + 1} = O(\log{\ell})w(S) = O(\log{\Delta})w(S).$$
The last equality is because the number of edges $\ell$ 2-spanned by a star is at most $\Delta^2$.

For a star $S \in Stars^*$ such that $w(S)=0$, note that $cost(e)=0$ for all the edges 2-spanned by $S$, since they are all covered at the beginning of the algorithm without voting for any candidate. Hence, we get in this case $\sum_{j=1}^{\ell} cost(e_j) = 0 = O(\log{\Delta})w(S).$

We now write $\sum_{e \in E'} cost(e) = \sum_{e \in E' \cap Stars^*} cost(e) + \sum_{e \in E' \setminus Stars^*} cost(e)$. It holds that $\sum_{e \in E' \cap Stars^*} cost(e) \leq \sum_{S \in Stars^*} w(S).$ We next bound  $\sum_{e \in E' \setminus Stars^*} cost(e).$ 

Let $C_S$ be the set of edges 2-spanned by the star $S$. Since $H^*$ is a 2-spanner, every edge $e \in E' \setminus Stars^*$ is 2-spanned by at least one star $S \in Stars^*$. Summing over all the stars in $Stars^*$ we get,
$$\sum_{e \in E' \setminus Stars^*} cost(e) \leq \sum_{S \in Stars^*} \sum_{e \in C_S} cost(e) \leq O(\log{\Delta})\sum_{S \in Stars^*} w(S).$$
In conclusion, $\sum_{e \in E'} cost(e) = O(\log{\Delta})\sum_{S \in Stars^*} w(S).$

It holds that $\sum_{S \in Stars^*} w(S) = 2w(H^*)$ since each edge of $H^*$ is included in exactly two stars. This gives, $\sum_{e \in E'} cost(e) = O(\log{\Delta})w(H^*)$.

To complete the proof, we bound $\sum_{e \in H_2} cost(e).$ Let $H_2^*$ be an optimal spanner for $H_2$. We define $Stars^*$ as before, with respect to $H_2^*$. Let $S \in Stars^*$ and let $(e_1,...,e_{\ell})$ be the sequence of edges of $H_2$ 2-spanned by $S$ according to the order in which they are added to $H_2$ in the algorithm. From the definition of $H_2$ it must hold that $\frac{\ell}{w(S)} \leq \frac{1}{w_{max}}$ where $w_{max}$ is the maximal weight in the 2-neighborhood of $e_1$ (which in particular contains the star $S$ and all the edges 2-spanned by it), as otherwise $e_1$ was not added to $H_2$. This gives $\sum_{i=1}^{\ell} cost(e) = \sum_{i=1}^{\ell} w(e) \leq \ell \cdot w_{max} \leq w(S)$. Following the same arguments as before, this gives $\sum_{e \in H_2} cost(e) \leq O(w(H_2^*))$. Since $|H_2^*| \leq |H^*|$, we get $\sum_{e \in E} cost(e) = \sum_{e \in E'} cost(e) + \sum_{e \in H_2} cost(e) = O(\log{\Delta})w(H^*)$. This completes the proof.
\end{proof}

In conclusion, we get $w(H) \leq 8 \sum_{e \in E} cost(e) \leq O(\log{\Delta}) w(H^*)$, which completes the proof of the $O(\log{\Delta})$-approximation ratio, giving the following lemma.

\begin{lemma}
The approximation ratio of the algorithm is $O(\log{\Delta})$.
\end{lemma}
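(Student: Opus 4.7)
The plan is to combine two bounds that together yield the approximation ratio, paralleling the unweighted analysis but with costs, densities, and thresholds adapted to the weighted setting. Write $H = H_1 \cup H_2$, where $H_1$ are edges that enter the spanner via accepted stars during the algorithm and $H_2$ are edges added at the end when the local rounded density falls below $1/w_{\max}$. Assign each edge $e \in H_2$ the cost $cost(e) = w(e)$, each edge $e \in H_1$ covered by a star it voted for the cost $cost(e) = 1/\rho$ (where $\rho$ is the weighted density of that star at the iteration it was accepted), and $cost(e) = 0$ otherwise. The target is then
\[
w(H) \;\leq\; 8 \sum_{e \in E} cost(e) \;\leq\; O(\log \Delta)\, w(H^*),
\]
which gives the claim.

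For the first inequality, I would mirror Lemma \ref{cost}. The $H_2$ part is immediate from the definition $cost(e) = w(e)$. For $H_1$, bound $w(H_1) \leq \sum_{S \in Stars} w(S)$, and argue that each accepted star $S$ with (weighted) density $\rho = |C_S|/w(S)$ collects at least $|C_S|/8$ votes; each voter $e$ contributes $cost(e) = 1/\rho = w(S)/|C_S|$, so the votes already pay $w(S)/8$ toward $S$. Since an edge votes in at most one accepted star at the iteration it is first covered, summing over $S$ gives the bound. This transfers almost verbatim from the unweighted proof once $\rho$ is interpreted with weights.

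For the second inequality, I would invoke Lemma \ref{weighted} directly, but let me sketch the structure since it is the main obstacle. Split $E = E' \cup H_2$ and, for each vertex $v$, let $S^*(v)$ be the full $v$-star inside an optimal spanner $H^*$; set $Stars^* = \{S^*(v)\}_v$. Fix a star $S \in Stars^*$ with $w(S) > 0$ and let $e_1, \dots, e_\ell$ be the edges of $E'$ that $S$ would 2-span, listed in the order the algorithm covers them. At the moment $e_j$ is first covered, $S$ still has weighted density at least $(\ell - j + 1)/w(S)$, so the star that actually covers $e_j$ — being of maximum rounded density in its 2-neighborhood, and chosen by the vertex with density at least $\rho_v/4$ of the maximum — has density at least $(\ell - j + 1)/(4 w(S))$, yielding $cost(e_j) \leq 4w(S)/(\ell - j + 1)$. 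Summing the harmonic series gives $\sum_j cost(e_j) = O(\log \ell)\, w(S) = O(\log \Delta)\, w(S)$, since $\ell \leq \Delta^2$. Stars with $w(S) = 0$ contribute zero cost because the edges they 2-span are already covered by the weight-0 edges added at initialization. Summing over $Stars^*$ and using $\sum_{S \in Stars^*} w(S) = 2 w(H^*)$ bounds $\sum_{e \in E'} cost(e)$ by $O(\log \Delta)\, w(H^*)$.

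The remaining piece is $\sum_{e \in H_2} cost(e)$, which is where the threshold $1/w_{\max}$ in the termination rule becomes essential: for any $S \in Stars^*$ covering edges $e_1, \dots, e_\ell$ of $H_2$, the fact that $e_1$ was dumped into $H_2$ forces $\ell / w(S) \leq 1/w_{\max}$, whence $\sum_j w(e_j) \leq \ell \cdot w_{\max} \leq w(S)$. Repeating the $Stars^*$ summation with the optimal spanner $H_2^*$ of $H_2$ gives $\sum_{e \in H_2} cost(e) = O(w(H_2^*)) \leq O(w(H^*))$. Combining both pieces completes the second inequality and hence the lemma. The main obstacle is the harmonic bound on $\sum cost(e_j)$: it hinges on translating the ``maximum rounded density in the 2-neighborhood'' property into a lower bound on the density of the covering star in terms of the optimal $Stars^*$, which is exactly why rounding densities and choosing $S_v$ with density at least $\rho_v/4$ were built into the algorithm.
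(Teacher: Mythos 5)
Your proposal is correct and follows essentially the same route as the paper: the same cost assignment (with $cost(e) = w(e)$ on $H_2$ and $cost(e) = 1/\rho$ on voting edges), the same two-sided bound $w(H) \leq 8 \sum_e cost(e) \leq O(\log\Delta)\,w(H^*)$, the harmonic-series bound over the stars $Stars^*$ using the fact that at the iteration covering $e_j$ the optimal star $S$ still has density at least $(\ell-j+1)/w(S)$, the separate treatment of $w(S)=0$ stars via the initialization that adds all weight-0 edges, and the bound for $H_2$ via the termination threshold $1/w_{\max}$. This is exactly the paper's argument.
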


To prove the round complexity,
there are minor changes in the proof of Claim \ref{iteration_p}. First, we replace the size of a star $|S|$ by its cost $w(S)$ in order to work with the new definition of $\rho$. Note that adding edges of weight $0$ to a star can only increase its density, which shows that all the $v$-stars chosen in the algorithm, and in particular the star $S_j$, contain all the edges of weight $0$ adjacent to $v$. This shows that the star $S_2 = S^* \setminus S_j$ includes only edges with positive weight. The proof carries over if $w(S_1) \neq 0$. If $w(S_1) = 0$, then all the edges 2-spanned by $S_1$ are already 2-spanned at the  beginning of the algorithm which shows $\frac{\rho}{2} \leq \rho(S^*,H_v^{i'}) = \frac{|C_1|+|C_2|+|C_{12}|}{w(S_1)+w(S_2)} = \frac{|C_2|+|C_{12}|}{w(S_2)}$, as needed. In addition, in the second case of the proof instead of showing that there is an edge $e \in S_2$ with $|C^e_{12}| \geq \frac{\rho}{4}$, we show that $\frac{|C^e_{12}|}{w(e)} \geq \frac{\rho}{4}$.

The number of possible densities depends on the weights, in the following way. Let $W_{max},W_{min}$ be the maximum and the minimum positive weights of an edge. Recall that $W = \frac{W_{max}}{W_{min}}$.
The maximum density of a star is at most $\frac{\Delta^2}{W_{min}}$ since a star 2-spans at most $\Delta^2$ edges. In addition, the algorithm terminates when the maximum density is $\frac{1}{W_{max}}$. Since we round the densities to powers of two, there may at most $O(\log{\Delta W})$ different non-zero values for the densities. The rest of the proof is exactly the same as in the unweighted case.

\subsubsection{Client-server 2-spanner approximation}

Recall that in the Client-Server 2-spanner problem, the edges of the graph are divided to two types: clients and servers, and the goal is to cover all the client edges with server edges.

Let $C$ be the set of client edges, let $V(C)$ be all the vertices that touch client edges, and let $\Delta_S$ be the maximum degree in the subgraph of $G$ that includes all the server edges. We show the following.

\begin{theorem} \label{csThm}
There is a distributed algorithm for the client-server 2-spanner problem in the \local model that guarantees an approximation ratio of $O(\min\{\log{\frac{|C|}{|V(C)|}},\log{\Delta_S}\})$, and takes $O(\log{n} \log{\Delta_S})$ rounds w.h.p.
\end{theorem}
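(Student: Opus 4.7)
The plan is to adapt the algorithm of Theorem~\ref{minimumSpanner} to the client-server setting with only mild modifications and then reuse the approximation and time-complexity analyses almost verbatim. A $v$-star now consists only of \emph{server} edges incident to $v$, and $H_v$ holds the uncovered \emph{client} edges 2-spanned by the full server $v$-star. Densities are redefined as $\rho(S,H_v)=|C_S|/|S|$, where $C_S$ is the set of uncovered client edges 2-spanned by $S$. The main loop, the voting rule, and the particular choice of $S_v$ from Section~\ref{choose} are used unchanged. When the maximum rounded density in the $2$-neighborhood of $v$ drops to $1$, for every client edge $\{u,w\}$ still uncovered and incident to $v$, the algorithm adds to the spanner either $\{u,w\}$ itself (if it is a server edge) or a server path of length $2$ between $u$ and $w$; the set of all edges added in this last step is denoted $H_2$. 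Feasibility of the instance guarantees that at least one such option exists for every client edge.

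For the approximation ratio I would run two analyses in parallel and take the better of the two bounds. The inequality $|H|\le 8\sum_{e\in C}cost(e)$ is obtained by copying Lemma~\ref{cost} with clients in place of general edges. For the $O(\log(|C|/|V(C)|))$ bound I would partition the clients into cost classes $E_0,\ldots,E_{f+1}$ exactly as in Lemma~\ref{Hci}, taking $r=|C|/|V(C)|$ and $f=\lceil\log r\rceil$. The only new ingredient is the base-case bound on $E_1$: every vertex in $V(C)$ has an incident client edge, and any feasible client-server $2$-spanner $H^*$ must contain at least one server edge incident to each vertex of $V(C)$ (either $\{u,w\}$ itself or the first hop of a length-$2$ server path covering it), yielding $|H^*|\ge |V(C)|/2$ and hence $\sum_{e\in E_1}cost(e)\le (2/r)|C|=O(|V(C)|)=O(|H^*|)$. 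The remaining classes are bounded exactly as in Lemma~\ref{Hci}, using the star-decomposition $Stars_j$ of an optimal 2-spanner for $E_j$, whose size is at most $|H^*|$ and whose server-edge stars sum to $2|H^*_j|$. For the $O(\log\Delta_S)$ bound I would mimic the weighted analysis of Lemma~\ref{weighted}: a server $v$-star has at most $\Delta_S$ edges and thus 2-spans at most $\Delta_S^2$ client edges, so telescoping the cost inequalities $cost(e_j)\le 4|S|/(\ell-j+1)$ over the client edges 2-spanned by each star in the star decomposition $Stars^*$ of $H^*$ gives $\sum cost(e)\le O(\log\Delta_S)\sum_{S\in Stars^*}|S|=O(\log\Delta_S)|H^*|$.

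For the running time I would repeat the analysis of Section~\ref{sec:time} with $\Delta$ replaced by $\Delta_S$: the density of any server $v$-star is at most $\Delta_S^2$, so the rounded maximum density assumes only $O(\log\Delta_S)$ values; Claim~\ref{iteration_p} is structurally unaffected by restricting stars to server edges and $H_v$ to client edges; and the potential-function argument of Lemma~\ref{dec} uses only the voting rule and the $2$-neighborhood maximality of candidate rounded densities, both of which are preserved. The Chernoff step of Lemma~\ref{time} then yields $O(\log n\log\Delta_S)$ rounds w.h.p.

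The main obstacle I anticipate is the base-case bound on $E_1$: in the original minimum 2-spanner problem connectivity of $G$ forced $|H^*|\ge n-1$, but in the client-server setting there is no global connectivity to lean on, and the correct replacement is the lower bound $|H^*|\ge |V(C)|/2$ coming from the requirement that every vertex touching a client edge must have a server edge of $H^*$ incident to it. Once this lemma is in hand both analyses proceed in parallel, and the final approximation ratio $O(\min\{\log(|C|/|V(C)|),\log\Delta_S\})$ is obtained by simply picking the stronger of the two upper bounds on $\sum cost(e)$.
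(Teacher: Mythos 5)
Your proposal is correct and follows the same overall strategy as the paper: server-only stars, client-only coverage sets $H_v$, a two-pronged cost analysis (one against $r = |C|/|V(C)|$ via cost classes and one against $\log\Delta_S$ via the telescoping argument of Lemma~\ref{weighted} with $w(S)$ replaced by $|S|$), and the potential-function time bound with $\Delta$ replaced by $\Delta_S$. You diverge from the paper in two local places, both valid. First, the paper lowers the termination threshold to density $<1/2$ (correspondingly allowing candidates of density $\geq 1/2$), which guarantees that every remaining feasible uncovered client edge at termination is itself a server edge, so $H_2$ contains exactly one server edge per such client edge, $cost(e)$ ranges up to $2$, and the decomposition uses classes $E_0,\dots,E_{f+2}$. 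You instead keep the threshold at density $\leq 1$ and, at termination, insert a length-$2$ server path for each remaining feasible client edge. This also works, but it means $|H_2|$ can be twice the number of client edges covered there; with $cost(e)=1$ for those client edges one gets $|H_2|\le 2\sum_{e}cost(e)$, and the chain $|H|\le 8\sum_e cost(e)$ still closes because $2<8$ — a point worth stating explicitly, since the paper's $H_2$-to-cost correspondence is one-to-one. Second, your base-case lemma is simpler and slightly stronger than the paper's: you argue that every $v\in V(C)$ must have an incident server edge in $H^*$ (either the client edge itself or the first hop of a covering $2$-path), giving $|H^*|\ge|V(C)|/2$ directly, whereas the paper proves $|H^*|\ge|V(C)|/4$ through a connected-components argument on $G_C=(V(C),C)$ and a count of edges $|H_i|\ge n_i-1$ per component. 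Your incidence argument is shorter, avoids the component machinery, and gives a better constant; either suffices for the $O(\log(|C|/|V(C)|))$ bound.
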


There are slight differences in the algorithm. First, throughout the algorithm and analysis, we consider only stars composed of server edges, and for each such star we define $\rho(S,H)=\frac{|C_S|}{|S|}$, where $C_S$ is the set of \textit{client} edges of $H$ 2-spanned by the star $S$. The set of edges $H_v$ that a vertex $v$ maintains consists only of client edges 2-spanned by the star that includes all the server edges adjacent to $v$. Now $v$ terminates if the maximal density in its 2-neighborhood is below $\frac{1}{2}$ and not at most $1$ as before (since not all the client edges are server edges, perhaps the best way to cover a client edge is to take a path of length 2 that covers it, the density of the corresponding star is $\frac{1}{2}$). Now $cost(e) \leq 2$ which changes slightly the constants in the analysis. When $v$ terminates, it adds an uncovered edge $e$ to the spanner only if $e$ is both a client and a server edge. These edges are the edges of $H_2$.

Note that since not all the edges are server edges, there may be client edges that cannot be covered by server edges, in which case there is no solution to the problem, and our algorithm covers only all the edges that may be covered by server edges. When we analyze the algorithm, we assume that there is a solution to the problem, otherwise $H^*$ is not defined. For other cases, we can restrict the client edges to be only edges that can be covered by server edges, and get a new problem that has an optimal solution $H^*$, and the approximation ratio we get is w.r.t to $H^*$.

For the analysis, there are slight differences as follows. First, we give costs only to client edges, since these are the only edges we need to cover. We give the costs as in the minimum 2-spanner algorithm. In particular, $cost(e)=1$ for $e \in H_2$.
Our goal is to show that
$$|H| \leq 8 \sum_{e \in C} cost(e) \leq O\left( \log{\frac{|C|}{|V(C)|}}\right) |H^*|.$$

The proof that $|H| \leq 8 \sum_{e \in C} cost(e)$ is exactly the same as the proof of Lemma \ref{cost}.
We next show that $\sum_{e \in C} cost(e) \leq O\left( \log{\frac{|C|}{|V(C)|}}\right) |H^*|.$
Let $r=\frac{|C|}{|V(C)|}$, and $f=\lceil \log{r} \rceil$. We define the sets $E_j$ according to the new definition of $r$. Let $E_1 = \{e \in C \setminus H_2: 0 < cost(e) \leq \frac{2}{r} \}$, and for $2 \leq j \leq f+1$, let $E_j = \{e \in C \setminus H_2: \frac{2^{j-1}}{r} < cost(e) \leq \frac{2^j}{r} \}.$ We define $E_0$ as before, and $E_{f+2} = H_2$. Since the stars added to $H_1$ in the algorithm have density at least $\frac{1}{2}$, then $cost(e) \leq 2$ for each edge $e \in C$. This gives, $C = \cup_{j=0}^{f+2} E_j$.
We next show the following.

\begin{lemma} \label{cs}
For every $0 \leq j \leq f+2$, $\sum_{e \in E_j} cost(e) = O(|H^*|).$
\end{lemma}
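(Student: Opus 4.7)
The plan is to mirror the proof of Lemma \ref{Hci}, with three adaptations to the client-server setting. First, in place of the bound $|H^*| \geq n-1$, I shall use $|H^*| \geq |V(C)|/2$: every vertex $v \in V(C)$ has an incident client edge, which can only be covered in $H^*$ via a length-$\le 2$ path of server edges through $v$ or via an incident server edge of $H^*$ at $v$, so $v$ has positive degree in $H^*$, and the handshake lemma gives the claim. Second, the reference stars $S^*_j(v)$ and spanners $H^*_j$ will be defined with respect to optimal client-server 2-spanners, so they consist of server edges only, and the proof continues to treat $E_j \cap Stars_j$ and $E_j \setminus Stars_j$ separately. Third, the termination threshold is now $1/2$ rather than $1$, which mildly changes constants (in particular $cost(e) \le 2$ throughout).

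For $j = 0$ the claim is immediate. For $j = 1$, we have $\sum_{e \in E_1} cost(e) \le \tfrac{2}{r}|C| = 2|V(C)|$, which is $O(|H^*|)$ by the observation above. For $2 \le j \le f+1$, let $H^*_j$ be a minimum client-server 2-spanner for $E_j$ and let $Stars_j$ be the collection of full $v$-stars in $H^*_j$. I will split $\sum_{e \in E_j} cost(e)$ into the $E_j \cap Stars_j$ part (bounded by $O(|H^*_j|)$ using $cost(e) \le 2$) and the $E_j \setminus Stars_j$ part. For the latter, fix $S \in Stars_j$ centered at some vertex $u$ and order the edges of $E_j$ that $S$ 2-spans as $e_1, \dots, e_\ell$ by when they are first 2-spanned by the algorithm. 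At that iteration, $S$ has density at least $\ell/|S|$ with respect to $H_u$, so $\rho_u \ge \ell/|S|$; the candidate $v$ whose star $S_v$ 2-spans $e_1$ is adjacent to both endpoints of $e_1$, as is $u$, so $u$ lies in $v$'s 2-neighborhood and $\rho_v \ge \rho_u$. Consequently $S_v$ has density at least $\ell/(4|S|)$, which gives $cost(e_1) \le 4|S|/\ell$, and combining with $cost(e_1) > 2^{j-1}/r$ and $cost(e_i) \le 2^j/r$ yields $\sum_i cost(e_i) \le 8|S|$, exactly as in Lemma \ref{Hci}. Summing over $Stars_j$ gives $O(|H^*_j|) = O(|H^*|)$.

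For $j = f+2$, let $H^*_{f+2}$ be an optimal client-server 2-spanner for $H_2$ and repeat the above with $e_1, \dots, e_\ell$ now ordered by the time they enter $H_2$. When $e_1$ is added, its endpoint $v_1$ terminates, so the maximal density in $v_1$'s 2-neighborhood is below $1/2$; since $u$ (the center of $S$) lies in $N_2(v_1)$, we get $\ell/|S| \le 1/2$, hence $\sum_i cost(e_i) = \ell \le |S|/2$, and summing over $Stars^*_{f+2}$ again yields $O(|H^*|)$. The main subtlety throughout is the 2-neighborhood containment argument that justifies $\rho_v \ge \rho_u$: it hinges on the fact that whenever a single edge is 2-spanned both by the candidate $v$'s star and by a star centered at $u$ from the reference spanner, both $v$ and $u$ are adjacent to that edge's endpoints, so they lie within distance $2$ of each other, and the comparison of densities used in the density-bound step remains valid in the client-server setting.
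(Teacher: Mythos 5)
Your proof is correct and follows the paper's own structure closely: the cases $j = 0$, $2 \leq j \leq f+1$, and $j = f+2$ are handled by mirroring the corresponding cases of Lemma \ref{Hci}, with the adjustments you flag ($cost(e) \leq 2$ instead of $\leq 1$, the $1/2$ termination threshold in place of $1$), and the 2-neighborhood containment argument you spell out is exactly what underlies the paper's density-comparison step. The one place where you genuinely diverge is the $j = 1$ case, where a lower bound $|H^*| = \Omega(|V(C)|)$ is needed. The paper decomposes $G_C = (V(C),C)$ into connected components, argues each component $C_i$ on $n_i$ vertices forces at least $n_i - 1$ covering edges of $H^*$, and after accounting for double-counting and the number of components arrives at $|H^*| \geq |V(C)|/4$. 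You instead observe that for every $v \in V(C)$ the covering path in $H^*$ of any incident client edge begins with a server edge incident to $v$, so $v$ has positive $H^*$-degree, and the handshake lemma immediately gives $|H^*| \geq |V(C)|/2$. Your argument is shorter and yields a slightly better constant; both are elementary and equally sufficient for the $O(|H^*|)$ conclusion, so the two proofs are interchangeable here.
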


\begin{proof}
For $j=0$, $2 \leq j \leq f+1$, and $j=f+2$ the proof follows the cases $j=0$, $2 \leq j \leq f$ and $j=f+1$ in the proof of Lemma \ref{Hci}.
For $j=1$, it holds that
$\sum_{e \in E_1} cost(e) \leq \frac{2}{r} \cdot |C| \leq 2|V(C)| = O(|H^*|).$ In the first inequality, we use the fact that we give costs only to edges of $C$. The last equality follows from the fact that $H^*$ includes at least $\frac{|V(C)|}{4}$ edges, which we prove next.

Let $G_C=(V(C),C)$, and let $C_1,...,C_{\ell}$ be the connected components of $G_C$. Note that each connected component of $G_C$ includes at least two vertices (since it includes at least one edge of $C$), which means that the number $\ell$ of connected components is at most $\frac{|V(C)|}{2}$. For a connected component $C_i$, denote by $n_i$ the number of vertices in $C_i$, so that $|V(C)| = \sum_{i=1}^{\ell} n_i$. For a connected component $C_i$, denote by $H_i$ all the edges of $H^*$ that cover the edges in $C_i$. It holds that $|H_i| \geq n_i-1$ since $C_i$ is connected, and the edges of $H_i$ need to connect all the vertices in $C_i$, otherwise there is an edge in $C_i$ which is not covered in $H^*$. In addition, for each edge $e \in H_i$, at least one of the vertices of $e$ is in $C_i$, otherwise it cannot cover an edge in $C_i$. It follows that an edge $e \in H^*$ can be in at most two different subsets $H_i,H_j$. This gives
$$|H^*| \geq \frac{1}{2}\sum_{i=1}^{\ell}|H_i| \geq \frac{1}{2} \sum_{i=1}^{\ell} (n_i - 1) = \frac{1}{2}\sum_{i=1}^{\ell} n_i-\frac{1}{2}\ell \geq \frac{|V(C)|}{2} - \frac{|V(C)|}{4} = \frac{|V(C)|}{4},$$
which completes the proof.
\end{proof}

By Lemma \ref{cs}, we get
$\sum_{e \in C} cost(e) = \sum_{j=0}^{f+2} \sum_{e \in E_j} cost(e) = O(\log{r})|H^*|.$
Since $|H| \leq 8 \cdot \sum_{e \in C} cost(e)$, we have
$|H| \leq 8 \sum_{e \in C} cost(e) \leq O(\log{r}) |H^*|$, which shows an approximation ratio of $O(\log{\frac{|C|}{|V(C)|}})$.

In addition, we can show that $\sum_{e \in C} cost(e) \leq O(\log{\Delta_S})$, following the proof of Lemma \ref{weighted}, by replacing $w(S)$ and $w(H)$ by $|S|$ and $|H|$. This shows an approximation ratio of  $O(\log{\Delta_S})$ to the problem.

Note that in the minimum 2-spanner problem, $\frac{m}{n}$ is half of the average degree in $G$, and $\Delta$ is the maximum degree in $G$, hence an approximation ratio of $O(\log{\frac{m}{n}})$ is better than $O(\log{\Delta})$. However, in the client-server variant, it may be the case that $\Delta_S \leq \frac{|C|}{|V(C)|}$ depending on the client and server edges in $G$.
The time analysis is the same as in the minimum 2-spanner problem. Note that there may be at most $O(\log{\Delta_S})$ different values for $\rho$ because we consider only stars composed of server edges.
This completes the proof of Theorem \ref{csThm}.

\section{Distributed approximation for MDS} \label{sec:MDS}

In this section, we show that our algorithm can be modified to give an efficient algorithm for the minimum dominating set (MDS) problem, \emph{guaranteeing} an approximation ratio of $O(\log{\Delta})$.
In the MDS problem the goal is to find a minimum set of vertices $D$ such that each vertex is either in $D$ or has a neighbor in $D$.
Our algorithm for MDS has the same structure of the algorithm of Jia et al. \cite{jia2002efficient}, but it differs from it in the mechanism for symmetry breaking. Our approach guarantees an approximation ratio of $O(\log{\Delta})$, where in \cite{jia2002efficient} the $O(\log{\Delta})$-approximation ratio holds only in expectation. The following states our results for MDS.

\begin{theorem}
\label{theorem:MDS}
There is a distributed algorithm for the minimum dominating set problem in the \congest model that guarantees an approximation ratio of $O(\log{\Delta})$, and takes $O(\log{n} \log{\Delta})$ rounds w.h.p.
\end{theorem}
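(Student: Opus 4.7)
The plan is to instantiate the symmetry-breaking template of Section~\ref{sec:alg} with ``vertices to be dominated'' in place of ``edges to be $2$-spanned'' and ``closed neighborhoods'' in place of ``stars.'' For each vertex $v$, let $C_v$ be the set of currently undominated vertices in $N[v]$, and define $\rho(v)=|C_v|$, with $\tilde\rho(v)$ its rounding up to the nearest power of two. In each iteration, every vertex broadcasts $\tilde\rho(v)$ to its $2$-neighborhood; a vertex $v$ is a \emph{candidate} if $\tilde\rho(v)$ is maximum in its $2$-neighborhood and $\rho(v)\geq 1$. Each candidate draws $r_v\in\{1,\ldots,n^4\}$ uniformly at random and sends it to its neighbors; each currently undominated vertex lying in some $C_v$ votes for the candidate with the smallest $r_v$ that covers it (ties broken by ID); and any candidate $v$ receiving votes from at least $|C_v|/8$ of its $C_v$ vertices joins $D$. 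Crucially, the choice faced by each vertex is now binary, so the subtle star-selection of Section~\ref{choose} is unnecessary and each iteration exchanges only $O(\log n)$-bit messages (one density, one random label, one vote per edge), making the algorithm fit \congest.

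For the approximation ratio, I would follow Section~\ref{sec:approx} and assign each dominated vertex $u$ a cost of $1/\rho_v$, where $v$ is the candidate $u$ voted for in the iteration in which it got dominated (and $0$ for vertices dominated collaterally). The vote threshold immediately gives $|D|\leq 8\sum_u \mathrm{cost}(u)$, exactly as in Lemma~\ref{cost}. For the matching upper bound, fix an optimal dominating set $D^*$ and, for each vertex $u$, assign it to some $v^*(u)\in D^*\cap N[u]$. For every $v^*\in D^*$, let $(u_1,\ldots,u_\ell)$ be the vertices assigned to $v^*$ in the order of their domination. When $u_i$ is dominated, at least $\ell-i+1$ of them are still undominated in $N[v^*]$, so $\rho(v^*)\geq \ell-i+1$ at that moment. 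Since $v^*$ and the candidate $w$ that dominates $u_i$ share the neighbor $u_i$, $v^*$ lies in the $2$-neighborhood of $w$, so $\tilde\rho(w)\geq \tilde\rho(v^*)$ and hence $\mathrm{cost}(u_i)=O(1/(\ell-i+1))$. Summing telescopes to $\sum_i \mathrm{cost}(u_i)=O(\log\ell)=O(\log\Delta)$, and summing over $v^*\in D^*$ gives $\sum_u \mathrm{cost}(u)\leq O(\log\Delta)\,|D^*|$, yielding the claimed $O(\log\Delta)$ approximation.

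For the round complexity, I would reuse the potential function $\phi=\sum_{v:\tilde\rho(v)=\rho}|C_v|$, where $\rho$ is the current maximum rounded density, and repeat the good-pair argument of Lemma~\ref{dec} with ``candidate'' replacing ``star'' and ``undominated vertex'' replacing ``edge.'' A legal iteration in which $\rho$ does not change then reduces $\phi$ by a constant factor in expectation. Here the obstacle handled by Claim~\ref{iteration_p} and the bespoke star-selection of Section~\ref{choose} simply disappears: once a vertex joins $D$ it is removed from all future $C_u$ sets forever, so $\phi$ cannot increase while $\rho$ stays fixed. Since $\tilde\rho$ takes only $O(\log\Delta)$ distinct values and $\phi\leq n^2$ throughout, a Chernoff bound yields the bound of $O(\log n\log\Delta)$ rounds w.h.p.

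The main obstacle I expect lies in the approximation-ratio step: charging each $u$ to exactly one $v^*\in D^*$ is essential so that the harmonic sum telescopes without overcounting, and the factor-of-$2$ slack between $\tilde\rho$ and $\rho$ together with the voting-threshold factor of $8$ must be combined cleanly into the claimed $O(\log\Delta)$ ratio. For \congest-compatibility, each vertex maintains $|C_v|$ locally by having every newly-dominated vertex notify its neighbors once in $O(\log n)$ bits, so no aggregation of large objects (as in the spanner case) is ever needed.
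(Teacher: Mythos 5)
Your proposal is correct and follows essentially the same route as the paper's proof: the same candidate/voting scheme with closed neighborhoods as stars, the same $cost(\cdot)$ charging and $|D|\le 8\sum_u cost(u)$ bound, the harmonic-sum argument against $D^*$ for the $O(\log\Delta)$ upper bound (the paper phrases it as Lemma~\ref{weighted} with edges replaced by vertices, $Stars^*$ by $D^*$, and $w(S)$ by $1$, but the content is the same telescoping computation you give), and the same potential-function argument for the round bound, with the observation that the star-selection machinery of Claim~\ref{iteration_p} becomes vacuous when each vertex has a single fixed closed neighborhood. The only stylistic difference is that you assign each vertex to a unique $v^*(u)\in D^*\cap N[u]$ to avoid double-counting, whereas the paper sums over all of $D^*$ and tolerates overcounting since it only needs an upper bound; both are valid.
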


For MDS, we define the star $S_v$ centered at the vertex $v$ as the set of \emph{vertices} that contains $v$ and all of its neighbors. Note that there is only one star centered at each vertex, which simplifies both the algorithm and its analysis. The density of a star $S$ with respect to a subset of vertices $U$, denoted by $\rho(S,U)$, is defined as $|S \cap U|$. The density of a vertex $v$ with respect to $U$, denoted by $\rho(v,U)$, is defined as $|S_v \cap U|$. The definition of the rounded density is the same as for our algorithm for the minimum 2-spanner problem. 

A vertex $v$ maintains a set $U_v$ that contains all the vertices in $S_v$ that are still not covered by the vertices that have already been added to the dominating set, where a vertex is covered by a set if it is in that set or has a neighbor in that set. Our algorithm proceeds in iterations, where in each iteration the following is computed:
\\

\noindent\fbox{%
    \parbox{\textwidth}{%
    \vspace{-0.3cm}
\begin{enumerate}[rightmargin=0.2cm]
\item { Each vertex $v$ computes its rounded density $\rho_v = \tilde{\rho}(v, U_v)$, and sends it to its 2-neighborhood.}
\item { Each vertex $v$ such that $\rho_v \geq \rho_u$ for each $u$ in its 2-neighborhood is a \textit{candidate}. Vertex $v$ informs its neighbors that it is a candidate. Let $C_v = S_v \cap U_v$.}
\item { Each candidate $v$ chooses a random number $r_v \in \{1,...,n^4\}$ and sends it to its neighbors.}
\item { Each uncovered vertex that is covered by at least one of the candidates, votes for the first candidate that covers it according to the order of the values $r_v$. If there is more than one candidate with the same minimum value, it votes for the one with the minimum ID.}
\item { If $v$ receives at least $\frac{|C_v|}{8}$ votes from vertices it covers then it is added to the dominating set.}
\item { Each vertex updates the set $U_v$ by removing from it vertices that are now covered. If $U_v = \emptyset$, $v$ outputs 1 if and only if it was added to the dominating set in the previous step.}
\end{enumerate}
    \vspace{-0.4cm}
}}

~\\

A crucial difference from our spanner approximation algorithm is that the densities are now based on the number of uncovered neighbors of $v$, and not the number of uncovered edges that can be potentially covered by a star. For this reason, all the computations in the algorithm can be implemented efficiently in the \congest model.

The analysis of our MDS algorithm follows the same lines as the analysis of our minimum 2-spanner algorithm. We denote by $D$ the dominating set produced by the algorithm, and by $D^*$ a minimum dominating set. We assign each vertex $v$ with a value $cost(v)$,
which equals $\frac{1}{\rho}$ if $v$ is covered for the first time by a candidate having density $\rho$ that $v$ votes for, and otherwise, $cost(v) = 0$.
We show that $|D| \leq 8 \sum_{v \in V} cost(v) \leq O(\log{\Delta}) |D^*|$, which implies our claimed approximation ratio.

\begin{lemma}
\label{lemma:MDS}
$|D| \leq 8 \cdot \sum_{u \in V} cost(u)$.
\end{lemma}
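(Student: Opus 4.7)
The plan is to mirror the structure of the proof of Lemma \ref{cost} from the 2-spanner analysis, but with the simplification that a vertex in the dominating set corresponds to a ``star'' of size one, so there is no longer a $|S|$-versus-$|C|$ distinction to manage. The central objects are, for each $w$ that the algorithm adds to $D$, the set $Votes(w)$ consisting of the uncovered vertices that voted for $w$ in the iteration $i_w$ in which $w$ was added.

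First I would observe that, by the acceptance rule of the algorithm, $|Votes(w)| \geq |C_w|/8$, where $|C_w| = |S_w \cap U_w| = \rho(w, U_w)$ is the (unrounded) density of $w$ at the start of iteration $i_w$. Next I would show that every $u \in Votes(w)$ is covered \emph{for the first time} by $w$: indeed $u$ voted in iteration $i_w$, so $u$ was uncovered at the start of that iteration, and because $w$ is added to $D$ every $u \in S_w$ (in particular every $u \in Votes(w)$) becomes covered. Since $u$ voted precisely for the candidate $w$ that first covered it, the cost definition gives $cost(u) = 1/\rho(w,U_w) = 1/|C_w|$. Summing over voters,
\[
\sum_{u \in Votes(w)} cost(u) \;\geq\; \frac{|C_w|}{8}\cdot\frac{1}{|C_w|} \;=\; \frac{1}{8}.
\]

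The remaining point to justify is disjointness of the families $\{Votes(w)\}_{w\in D}$. If $w,w' \in D$ are added in the same iteration, then each voter contributes to only one of them, since the voting rule breaks ties by $(r_v, \mathrm{ID})$ and picks a single candidate; if they are added in different iterations, say $i_w < i_{w'}$, then every $u \in Votes(w)$ is already covered after iteration $i_w$ and therefore does not vote in iteration $i_{w'}$. Hence $Votes(w)\cap Votes(w') = \emptyset$, and summing the per-$w$ bound yields
\[
\frac{|D|}{8} \;\leq\; \sum_{w\in D}\sum_{u\in Votes(w)} cost(u) \;\leq\; \sum_{u\in V} cost(u),
\]
which is the desired inequality. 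The only mildly subtle step is the disjointness argument, and in particular the sanity check that a voter $u$ whose chosen candidate was \emph{not} added in that iteration (so $cost(u)$ may well be $0$) is correctly handled: such a $u$ simply does not appear in any $Votes(w)$ with $w\in D$, so it contributes nothing to the left-hand side but can only add to the right-hand side, preserving the inequality.
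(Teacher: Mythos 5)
Your proof is correct and takes essentially the same route as the paper's: it defines the sets $Votes(w)$ for $w\in D$, uses the acceptance rule to get $|Votes(w)|\geq |C_w|/8$, observes that each $u\in Votes(w)$ has $cost(u)=1/|C_w|$ because $u$ voted for $w$ and $w$ covers $u$ for the first time, and then sums over $w\in D$ using disjointness of the $Votes$ sets. The only difference is that you spell out the disjointness argument and the handling of voters whose chosen candidate is not added, which the paper leaves implicit; this is a harmless (and arguably helpful) elaboration, not a different approach.
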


\begin{proof}
The proof is similar to the proof of Lemma \ref{cost}. For a vertex $v \in D$ we denote by $Votes(v)$ the vertices that vote for $v$. If $v$ is added to $D$ then it holds that at least $\frac{|C_v|}{8}$ vertices vote for it. The cost of each of these vertices is $\frac{1}{\rho}$, where $\rho$ is the density of $v$, which is $|C_v|$, by definition. Hence, for each vertex $v \in D$, it holds that $\sum_{u \in Votes(v)} cost(u) \geq \frac{1}{\rho} \cdot \frac{|C_v|}{8}= \frac{1}{8}.$ Since each vertex $u$ is in at most one set $Votes(v)$, summing over all the vertices in $D$ gives that $|D| \leq 8 \cdot \sum_{v \in D} \sum_{u \in Votes(v)} cost(u) \leq  8 \cdot \sum_{u \in V} cost(u).$
\end{proof}

The proof that $\sum_{v \in V} cost(v) \leq O(\log{\Delta}) |D^*|$ is similar to the proof of Lemma \ref{weighted}, where $Stars^*$ is replaced by $D^*$, edges are replaced by vertices, and $w(S)$ is replaced by $1$ (note that the equality $\sum_{S \in Stars^*} w(S) = 2w(H^*)$ is replaced by $\sum_{v \in D^*} 1 = |D^*|$).  Together with Lemma~\ref{lemma:MDS}, this proves the approximation ratio of $O(\log\Delta)$.

For the time analysis, the main difference is that for each vertex $v$ there is only one star $S_v$, which simplifies the proof (Claim \ref{iteration_p} is no longer required).
Let $\rho=\max_{v \in V}{\rho_v}$ at the beginning of iteration $i$. We define the potential function $\phi = \sum_{v:\rho_v=\rho}|C_v|$.
If the value of $\rho$ does not change between iterations, the value of $\phi$ can only decrease between iterations. By the definition of the densities, the density of a vertex $v$ is a at most $|S_v|$ which is at most $\Delta + 1$. Since we round the densities there may be at most $O(\log{\Delta})$ different values for $\rho$. Following the same analysis as in the analysis of our minimum 2-spanner algorithm (with the difference that edges are replaced by vertices, and a candidate is a vertex and not a star) we can show that if the value of $\rho$ does not change between iterations, then the potential function $\phi$ decreases by a multiplicative factor between iterations in expectation. This gives a time complexity of $O(\log{n}\log{\Delta})$ rounds w.h.p. Together with the approximation ratio, this proves Theorem~\ref{theorem:MDS}.

\section{Distributed $(1+\epsilon)$-approximation for spanner problems} \label{sec:epsilon}

In this section, we show distributed $(1+\epsilon)$-approximation algorithms for spanner problems, following the framework of a recent algorithm for covering problems \cite{ghaffari2017complexity} (see Section 7).\footnote{The presentation of the framework in~\cite{ghaffari2017complexity} is slightly different and goes through an intermediate \emph{SLOCAL} model.}
In a nutshell, the vertices invoke a network decomposition algorithm on the graph $G^r$, for a value of $r=O(\log{n}/\epsilon)$ that can be computed by all vertices locally, given $\epsilon$ and a polynomial bound on $n$. This decomposes the graph into clusters of logarithmic diameter, colored by a logarithmic number of colors. Finally, by increasing order of colors, the vertices of each color select edges for the spanner. We show that indeed clusters of the same color can make their choices in parallel, and that the method of choosing edges to the spanner results in a $(1+\epsilon)$ approximation factor, giving the following.

\epsilonAlg*

\begin{proof}
We start by describing a sequential $(1+\epsilon)$-approximation algorithm, and then explain how to implement it in the \local model using network decomposition.
In the algorithm, the vertices start adding edges to the spanner $H$, which is initialized to be empty, while keeping track of all the edges covered by edges of $H$. At the beginning, all the edges are uncovered. To describe how this is done, we need the following notation. For a given integer $d$, denote by $B_d(v)$ the subgraph of all the vertices within distance at most $d$ from $v$ and all the edges between them. For a vertex $v$ and $d \geq 1$, let $g(v,d)$ be the size of an optimal spanner for all of the uncovered edges in $B_d(v)$ (notice that the spanner can use both covered or uncovered edges of the whole graph $G$). 

We process the vertices according to a given order $v_1,v_2,...,v_n$. In step $i$, we look for the smallest radius $r_i$ such that $g(v_i,r_i+2k) \leq (1+\epsilon)g(v_i,r_i)$. Since an optimal spanner has size at most $n^2$, increasing the radius without the condition being met can only happen at most $r_i = O(\log{n}/\epsilon)$ times. We add to $H$ an optimal spanner for all the uncovered edges in $B_{r_i+2k}(v_i)$, and mark all the edges covered by the new edges of $H$ as covered. In particular, all the edges of $B_{r_i+2k}(v_i)$ are covered after this step. Note that an optimal spanner for $B_{r_i+2k}(v_i)$ is contained in $B_{r_i+3k}(v_i)$, which shows that step $i$ depends only on a polylogarithmic neighborhood around $v_i$. 

We next prove the approximation ratio of the algorithm.
Denote by $E_i$ all the edges of $B_{r_i}(v_i)$ that are uncovered before step $i$. Since all the edges of $B_{r_i+2k}(v_i)$ are covered after step $i$, it follows that $E_i$ and $E_j$ are at distance at least $2k+1$ for $i \neq j$. Let $H^*$ be an optimal spanner, and let $H_i^*$ be the minimum set of edges in $H^*$ that covers $E_i$. 
By the definition of a $k$-spanner, $H_i^*$ is contained in $B_{r_i+k}(v_i)$, which shows that the subsets $H_i^*$ are disjoint. In step $i$, we added to $H$ at most $(1+\epsilon)g(v_i,r_i) \leq (1+\epsilon) |H_i^*|$ edges, where the inequality follows since $g(v_i,r_i)$ is the size of an optimal spanner for $E_i$ where $H_i^*$ is a spanner for $E_i$. Since $\cup_{i=1}^{n} H_i^* \subseteq H^*$ and the subsets $H_i^*$ are disjoint, summing over all $i$ gives $|H| \leq (1+\epsilon)|H^*|$, which completes the approximation ratio proof.

We now show how to implement the algorithm in the \local model (see also Proposition 3.2 in \cite{ghaffari2017derandomizing}). Let $r = O(\log{n}/\epsilon)$ be such that $r > r_i+4k$ for all $i$, 
and consider the graph $G^r$ on the same set of vertices, where two vertices are connected if they are at distance at most $r$ in the network graph $G$. Notice that in the \local model, any algorithm on $G^r$ can be simulated by the vertices of $G$ with an overhead of $r$ rounds.
The vertices invoke the randomized network decomposition algorithm of Linial and Saks \cite{linial1993low} on the graph $G^r$. This algorithm decomposes a graph into clusters of diameter $O(\log{n})$ that are colored with $O(\log{n})$ colors, within $O(\log^{2}{n})$ rounds. Invoked on $G^r$, this completes in $poly(\log{n}/\epsilon)$ rounds. 

We assign a vertex $v$ the label $(q_v,ID_v)$ where $q_v$ is the color of the cluster of $v$ and $ID_v$ is the id of $v$. The lexicographic increasing order of the labels provides the order of the vertices.
The distributed $k$-spanner algorithm runs in $O(\log{n})$ phases, where in each phase $\ell$, the vertices of color $\ell$ are active, and collect all of the information of their cluster in $G^r$ and its neighbors. Since the diameter of each cluster is at most $O(\log{n})$, this completes in $poly(\log{n}/\epsilon)$ rounds. Each vertex of the cluster then locally simulates the sequential algorithm for all the vertices in its cluster, according to their order. It can do so, since the sequential algorithm depends only on $r$-neighborhoods of vertices, and every two vertices in the same $r$-neighborhood are neighbors in $G^r$, which means they are either in the same cluster or in two clusters with different colors. This guarantees that the algorithm can indeed be executed in parallel for vertices of the same color. This completes the proof.
\end{proof}

The correctness of the algorithm relies only on the fact that the definition of $k$-spanners is local: an optimal spanner for $B_d(v)$ is contained in $B_{d+k}(v)$. Hence, the algorithm can be adapted similarly to the weighted, directed and client-server variants. In the weighted case the complexity is $O(poly(\log{(nW)}/\epsilon))$, where $W$ is the ratio between the maximum and minimum positive weights of an edge.

\paragraph*{Acknowledgment:}
We would like to thank Seri Khoury for fruitful discussions.

\bibliography{spanner}
\bibliographystyle{plain}

\end{document}